\newcommand{\zell}[2][c]{ \begin{tabular}[#1]{@{}c@{}}#2\end{tabular}}
\theoremstyle{plain}
\newtheorem{definition}{Definition}
\newtheorem{theorem}{Theorem}
\newtheorem{lemma}{Lemma}
\newtheorem{remark}{Remark}
\newtheorem{proposition}{Proposition}
\newtheorem{observation}{Observation}
\newtheorem{corollary}{Corollary}
\newtheorem*{open}{Open Question}
\newtheorem*{question.}{Question}
\newtheorem*{sub-property}{Substructure Property}
\newcommand{\ceil}[1]{\left\lceil #1 \right\rceil}
\newcommand{\floor}[1]{\left\lfloor #1 \right\rfloor}
\newcommand{\DeG}{\textsc{Deg}}
\newcommand{\addlayer}{\textsc{AddLayer}\xspace}
\newcommand{\maxdn}{\textsc{MaxNDeg}\xspace}
\newcommand{\mindn}{\textsc{MinNDeg}\xspace}
\newcommand{\maxdon}{\textsc{MaxNDeg$^-$}\xspace}
\renewcommand{\deg}{\textit{deg}}
\newcommand{\oldH}{H_{old}}
\newcommand{\oldL}{L_{old}}
\newcommand{\newH}{H_{new}}
\newcommand{\newL}{L_{new}}
\newcommand{\longsigma}{\sigma=(d_\ell^{n_\ell},\cdots,d_1^{n_1})}
\newcommand{\maxop}{\text{\raise.2em\hbox{$~\underset{\textup{max}}{\circledast}~$}}}
\newcommand{\minop}{\text{\raise.2em\hbox{$~\underset{\textup{min}}{\circledast}~$}}}
\newcommand{\sumop}{\text{\raise.2em\hbox{$~\underset{\textup{sum}}{\circledast}~$}}}
\newcommand{\ccon}{\textsc{CCon}\xspace}
\newcommand{\cgen}{\textsc{CGen}\xspace}
\newcommand{\ocon}{\textsc{OCon}\xspace}
\newcommand{\ogen}{\textsc{OGen}\xspace}
\newcommand{\ogenl}{\textsc{OGenL}\xspace}
\newcommand{\ogenu}{\textsc{OGenU}\xspace}
\newcommand{\leader}{\textit{leader}}
\renewcommand{\deg}{\textit{deg}}
\newcolumntype{q}{>{\centering\arraybackslash}p{8em}}
\newcolumntype{e}{>{\centering\arraybackslash}p{10em}}
\newcolumntype{r}{>{\centering\arraybackslash}p{16em}}
\newcolumntype{t}{>{\centering\arraybackslash}p{25em}}
\newcolumntype{y}{>{\centering\arraybackslash}p{28em}}
\definecolor{mycolor}{rgb}{0.09,0.08,0.97}
\title{Graph Realizations: Maximum and Minimum Degree in Vertex Neighborhoods}
\author{
Amotz Bar-Noy%
\thanks{City University of New York (CUNY), USA. E-mail: amotz@sci.brooklyn.cuny.edu},
Keerti Choudhary%
\thanks{Tel Aviv University, Israel. E-mail: keerti.choudhary@cs.tau.ac.il},
David Peleg%
\thanks{Weizmann Institute of Science, Israel. E-mail: david.peleg@weizmann.ac.il},
Dror Rawitz%
\thanks{Bar Ilan University, Israel. E-mail: dror.rawitz@biu.ac.il}
}
\date{}
\begin{document}
\maketitle

\begin{abstract}

The classical problem of \emph{degree sequence realizability} asks whether or not a given sequence of $n$ positive integers 
is equal to the degree sequence of some $n$-vertex undirected simple graph. While the realizability problem of degree sequences has been well studied for different classes of graphs, there has been relatively little work concerning the realizability of other types of information \emph{profiles}, such as the vertex neighborhood profiles.

In this paper, we initiate the study of {\em neighborhood degree} profiles, wherein, our focus is on the natural problem of realizing \emph{maximum} and \emph{minimum} neighborhood degrees.  More specifically, we ask the following question: \emph{``Given a sequence $D$ of $n$ non-negative integers $0\leq d_1\leq \cdots \leq d_n$, does there exist a simple graph with vertices $v_1,\ldots, v_n$ such that for every $1\le i \le n$, the maximum (resp. minimum) degree in the neighborhood of $v_i$ is exactly $d_i$?"}


We provide in this work various results for both maximum as well as minimum neighborhood degree for general $n$ vertex graphs. 
Our results are first of its kind that studies extremal neighborhood degree profiles. For maximum neighborhood degree profiles, we provide a {\em complete realizability criteria}.
In comparison, we observe that the minimum neighborhood profiles are not so well-behaved, for these our  necessary and sufficient conditions for realizability {\em differ by a factor of at most two}.


\end{abstract}

\section{Introduction}

In many application domains involving networks, it is common to view vertex degrees as a central parameter, providing useful information concerning the relative significance (and in certain cases, centrality) of each vertex with respect to the rest of the network, and consequently useful for understanding the network's basic properties. Given an $n$-vertex graph $G$ with adjacency matrix $Adj(G)$, its {\em degree sequence} is a sequence consisting of its vertex degrees,
$$\DeG(G) = (d_1,\ldots,d_n).$$
Given a graph $G$ or its adjacency matrix, it is easy to extract the degree sequence.
An interesting {\em dual} problem, sometimes referred to as the {\em realization} problem, concerns a situation where given a sequence of nonnegative integers $D$, we are asked whether there exists a graph whose degree sequence conforms to $D$. A sequence for which there exists a realization is called a {\em graphic} sequence. Erd\"os and Gallai~\cite{EG60} gave a necessary and sufficient condition for deciding whether a given sequence of integers is graphic (also implying an $O(n)$ decision algorithm). Havel and Hakimi~\cite{hakimi62,havel55} gave a recursive algorithm that given a sequences of integers computes in $O(m)$ time a realizing $m$-edge graph, or proves that the sequence is not graphic.

Over the years, various extensions of the degree realization problem were studied as well, cf. \cite{AT94,BCPR19-range-isaac,WK73}, concerning different characterizations of degree-profiles. The motivation underlying the current paper is rooted in the observation that realization questions of a similar nature pose themselves naturally in a large variety of {\em other} application contexts, where given {\em some} type of information profile specifying the desired vertex properties (be it concerning degrees, distances, centrality, or any other property of significance), it can be asked whether there exists a graph conforming to the specified profile. Broadly speaking, this type of investigation may arise, and find potential applications, both in scientific contexts, where the information profile reflects measurement results obtained from some natural network of unknown structure, and the goal is to obtain a model that may explain these measurements, and in engineering contexts, where the information profile represents a specification with some desired properties, and the goal is to find an implementation in the form of a network conforming to that specification.

This basic observation motivates a vast research direction, which was little studied over the last five decades. In this paper we make a step towards a systematic study of one specific type of information profiles, concerning {\em neighborhood degree} profiles. Such profiles are of theoretical interest in context of social networks (where degrees often reflect influence and centrality, and consequently neighboring degrees reflect ``closeness to power'').
Neighborhood degrees were considered before in ~\cite{BarrusD:18}, where the profile associated with each vertex $i$ is the {\em list} of degrees of all vertices in $i$'s neighborhood. In contrast, we focus here on ``single parameter'' profiles, where the information associated with each vertex relates to a single degree in its neighborhood.
Two first natural problems in this direction concern the {\em maximum} and {\em minimum} degrees in the vertex neighborhoods. For each vertex $i$, let $d'_i$ (respectively, $d''_i$) denote the maximum (resp., minimum) vertex degree in $i$'s neighborhood. Then $\maxdn(G)=(d'_1,\ldots,d'_n)$ (resp., $\mindn(G)=(d''_1,\ldots,d''_n)$) is the maximum (resp., minimum) neighborhood degree profile of $G$.
The same realizability questions asked above for degree sequences can be posed for neighborhood degree profiles as well. This brings us to the following central question of our work:

\begin{question.}
Can we efficiently compute for a given sequence $D=(d_1,\ldots,d_n)$ of nonnegative integers an $n$-vertex graph $G$ (if exists) such that the {\em maximum} (resp. {\em minimum}) degree in the neighborhood of $i$-$th$ vertex in $G$ is exactly equal to $d_i$? Moreover, is there a closed-form characterization for all $n$-length realizable sequences?
\end{question.}



\paragraph*{Our Contributions}

For simplicity, we represent the input vector $D$ alternatively in a more compact format as $\sigma=(d_\ell^{n_\ell}, \cdots, d_1^{n_1}),$ where $n_i$'s are positive integers with $\sum_{i=1}^\ell n_i = n$; here the specification requires that $G$ contains exactly $n_i$ vertices whose minimum (resp. maximum) degree in  
neighborhood is $d_i$.  We may assume that $d_\ell>d_{\ell-1}>\cdots>d_1\geq 1$
(noting that vertices with max/min degree zero are necessarily singletons and can be handled separately).

\vspace{2mm}
\noindent
\textit{\textbf{(a) Minimum Neighborhood degree:}}
In Section \ref{section:minNdeg} we show the following necessary and sufficient conditions for $\longsigma$ to be $\mindn$ realizable.

The necessary condition is that for each $i\in [1,\ell]$, 
\begin{align}
d_i   & \leq n_1 + n_2 + \ldots + n_i - 1 ~, ~~~~\text{and}
\tag{NC1} \label{NC1} \\[1mm]
d_\ell & \leq\Big\lfloor\frac{n_1d_1}{d_1+1}\Big\rfloor
+ \Big\lfloor\frac{n_2d_2}{d_2+1}\Big\rfloor 
+ \ldots + \Big\lfloor\frac{n_\ell d_\ell}{d_\ell+1}\Big\rfloor
\tag{NC2} \label{NC2}
~.
\end{align}

The sufficient condition is that for each $i\in [1,\ell]$,
\begin{align}
d_i & \leq \Big\lfloor\frac{n_1d_1}{d_1+1}\Big\rfloor
+ \Big\lfloor\frac{n_2d_2}{d_2+1}\Big\rfloor 
+ \ldots + \Big\lfloor\frac{n_id_i}{d_i+1}\Big\rfloor
~.
\tag{SC} \label{SC}
\end{align}

\begin{remark}
For any sequence $\longsigma$ satisfying the first necessary condition
\eqref{NC1}, the sequence
$\sigma^\gamma=(d_\ell^{\lceil\gamma n_\ell\rceil},\ldots,d_1^{\lceil\gamma n_1\rceil})$,
where $\gamma=(d_1+1)/d_1$ satisfies the sufficient condition \eqref{SC}, thus our necessary and sufficient conditions differ by a factor of at most 2 in the $n_i$'s.
\end{remark}

\begin{remark}
For $\ell$ bounded by $3$, we show that $\longsigma$ is $\mindn$-realizable if and only if 
along with \eqref{NC1} and \eqref{NC2} following is satisfied:

\begin{equation}
\text{Either $d_2\leq \Big\lfloor \frac{n_1d_1}{d_1+1}\Big\rfloor+\Big\lfloor \frac{n_2d_2}{d_2+1}\Big\rfloor$, or $d_3+1\leq n_1+n_2+n_3-\Big(1+\ceil{\frac{d_2-n_2}{d_1}}\Big)$}
\tag{NC3}\label{NC3}
\end{equation}
\end{remark}

We leave it as an open question to resolve the problem in general.

\begin{open}
Does there exist a closed-form characterization
for realizing $\mindn$ profiles for general graphs?
\end{open}


\vspace{1mm}
\noindent
\textit{\textbf{(b) Maximum Neighborhood degree:}}
We perform an extensive study of maximum neighborhood degree profiles.
\begin{enumerate}
\item In Section \ref{section:maxNdeg}, we obtain the necessary and sufficient conditions for $\longsigma$ to be $\maxdn$ realizable.

For general graphs we obtain the following characterization.
$$d_\ell \leq n_\ell-1, \text{ and } d_1\geq 2 \text{ or }n_1\text{ is even}$$

We also study the version of the problem in which the realization is required to be connected. Our characterization is as follows.
$$d_\ell \leq n_\ell-1, \text{ and } d_1\geq 2 \text{ or }\sigma=(1^2)~.$$

\item Further, we consider the open neighborhoods, wherein a vertex is not counted in its own neighborhood. These are more involved and are discussed in Section~\ref{section:maxNdegOpen}.
Our results for open neighborhood are summarised in Table~\ref{table:open_max_deg}.

\begin{table}[!ht]
\begin{center}
\def\arraystretch{1.1}
\begin{tabular}{|e| t|}
\hline
	\bf \zell {Graph} & \bf \zell{Complete characterisation}
\\ \hline	
		\zell{Connected Graphs} 
	& \zell{$d_\ell \leq \min\{n_\ell,n-1\}$ \\ $d_1\geq 2$  or $\sigma=(d^d,1^1)$ or $\sigma=(1^2)$\\[1mm]
			$\sigma\neq (d_\ell^{d_\ell+1},2^1)$} 
\\ \hline
	\zell{General graphs} 
	& \zell{$\sigma$ can be split\footnotemark~into two profiles 
				$\sigma_1$ and $\sigma_2$ such that\\ 
				(i) $\sigma_1$ has a {\em connected} $\maxdn$-open realization, and \\
				(ii) $\sigma_2=(1^{2\alpha})$ or $\sigma_2=(d^d, 1^{2\alpha+1})$, 
				for integers $d\geq 2, \alpha\geq 0$.}
\\ \hline		
\end{tabular}
\caption{Max-neighbouring-degree realizability for open neighborhood.\vspace{-6mm}}
\label{table:open_max_deg}
\end{center}
\end{table} 
\footnotetext{
A profile $\sigma=(d_\ell^{n_\ell},\cdots,d_1^{n_1})$ is said to be 
split into two profiles 
$\sigma_1=(d_\ell^{p_\ell},\cdots,d_1^{p_1})$
and $\sigma_2=(d_\ell^{q_\ell},\cdots,d_1^{q_1})$ if 
$n_i=p_i+q_i$ for each $i\in[1,\ell]$.
}

\item {\bf Enumerating realizable maximum neighborhood degree profiles:}\\
The simplicity of above characterizations enables us to enumerate and count the number of realizable profiles. This gives a way to sample uniformly a random \maxdn realizable profile. In contrast, counting and sampling are open problems for the traditional degree sequence realizability problem.
In Appendix, we show that the number of realizable profiles of length $n$ is $\displaystyle\lceil(2^{n-1}+(-1)^n)/3\rceil$ for general graphs and $2^{n-3}$ for connected graphs.
In comparison, the total number of non-increasing sequences of length $n$
on the numbers $1,\ldots,n-1$ is $\Theta(4^n/\sqrt{n})$.  
\end{enumerate}

In Section~\ref{section:discussion}, we discuss the apparent difference in difficulty between \maxdn and \mindn profiles and propose a possible explanation.

\paragraph*{Further Related Work}
Many works have addressed related questions such as finding all the (non-isomorphic) graphs that realize a given degree sequence, counting all the (non-isomorphic) realizing graphs of a given degree sequence, sampling a random realization for a given degree sequence as uniformly as possible, or determining the conditions under which a given degree sequence defines a unique realizing graph (a.k.a. the \emph{graph reconstruction} problem), cf.~\cite{choudum86,EG60,hakimi62,havel55,K57,O70,SH91,TT08,U60,W99}. Other works such as~\cite{BD11,Cloteaux16,MV02} studied interesting applications in the context of social networks.

To the best of our knowledge, the \maxdn and \mindn realization problems have not been explored so far. There are only two related problems that we are aware of. The first is the \emph{shotgun assembly} problem~\cite{MR15}, where the characteristic associated with the vertex $i$ is some description of its neighborhood up to radius $r$. The second is the {\em neighborhood degree lists} problem~\cite{BarrusD:18}, where the characteristic associated with the vertex $i$ is the list of degrees of all vertices in $i$'s neighborhood. We point out that in contrast to these studies, our \maxdn and \mindn problem applies to a more restricted profile (with a single number characterizing each vertex), and the techniques involves are totally different from those of~\cite{BarrusD:18,MR15}. Several other realization problems are surveyed in~\cite{BCPR18survey,BCPR19happiness}.


\section{Preliminaries}
\label{section:prelim}
Let $H$ be an undirected graph.  We use $V(H)$ and $E(H)$ to respectively denote the vertex set and the edge set of graph $H$.  For a vertex $x \in V(H)$, let $\deg_H(x)$ denote the degree of $x$ in $H$.  Let $N_H[x] = \{x\} \cup \{y~|~(x,y) \in E(H)\}$ be the (closed) neighborhood of $x$ in $H$. For a set $W \subseteq V(H)$, we denote by $N_H(W)$, the set of all the vertices lying outside set $W$ that are adjacent to some vertex in $W$, that is, $N_H(W) = (\bigcup_{w \in W} N[w]) \setminus W$. Given a vertex $v$ in $H$, the minimum (resp. maximum) degree in the neighborhood of $v$, namely $\mindn_H(v)$ (resp. $\maxdn_H(v)$), is defined to be the maximum over the degrees of all the vertices in the neighborhood of $v$.
Given a set of vertices $A$ in a graph $H$, we denote by $H[A]$ the subgraph of $H$ induced by the vertices of $A$. For a set $A$ and a vertex $x\in V(H)$, we denote by $A\cup x$ and $A\setminus x$, respectively, the sets $A\cup \{x\}$ and $A\setminus \{x\}$. When the graph is clear from context, for simplicity, we omit the subscripts $H$ in all our notations. Finally, given two integers $i \leq j$, we define $[i,j] = \{i,i+1,\ldots,j\}$.

\begin{figure}[!ht]
\centering
\begin{tikzpicture}[scale=0.65]
\begin{scope}[every node/.style={circle,draw,fill=black!5!White}]
\node (v1) at (0,0) {$3$};
\node (v2) at (-2,0) {$2$};
\node (v3) at (-1,1.5) {$2$};
\node (v4) at (2,0) {$2$};
\node (v5) at (4,0) {$1$};
\end{scope}
\draw (v1) -- (v2);
\draw (v1) -- (v3);
\draw (v1) -- (v4);
\draw (v4) -- (v5);
\draw (v2) -- (v3);
\end{tikzpicture}
\caption{A $\maxdn$ realization of $(3^4,2^1)$ and a $\mindn$ realization of $(2^3,1^2)$.}
\label{fig:example1}
\end{figure}
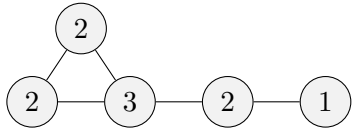

A profile $\sigma=(d_\ell^{n_\ell}, \cdots, d_1^{n_1})$ satisfying $d_\ell>d_{\ell-1}>\cdots>d_1> 0$ is said to be $\mindn$ realizable (resp. $\maxdn$ realizable) if there exists a graph $G$ on $n=n_1+\cdots+n_\ell$ vertices that for each $i \in [1,\ell]$ contains exactly $n_i$ vertices whose $\mindn$ (resp.~$\maxdn$) is $d_i$. Equivalently, $|\{v \in V(G) : \mindn(v) = d_i\}| = n_i$ (resp. $|\{v \in V(G) : \maxdn(v) = d_i\}| = n_i$).
The figure depicts a $\maxdn$ realization of $(3^4,2^1)$ and a $\mindn$ realization of $(2^3,1^2)$. (The numbers in the vertices represent their degrees.)
Note that in the open neighborhoods model, the corresponding $\maxdn$ and $\mindn$ profiles become $(3^3,2^2)$ and $(2^4,1^1)$, respectively.

\section{Realizing minimum neighborhood degree profiles}
\label{section:minNdeg}

\subsection{Leaders and followers}

Let $G=(V,E)$ be any graph. For any vertex $v\in V$, we define $\leader(v)$ to be a vertex in $N[v]$ of minimum degree, if there are more than one choices we pick the leader arbitrarily. In other words, $\leader(v) = \arg\min\{ \deg(w) \mid w\in N[v]\}$. Next let $\sigma=(d_\ell^{n_\ell} \cdots d_1^{n_1})$ be the min-degree sequence of $G$. We define $V_i$ to be set of those vertices in $G$ whose minimum-degree in the closed neighborhood is exactly $d_i$, so $|V_i|=n_i$. Also, let $L_i$ be set of those vertices in $G$ who are leader of at least one vertex in $V_i$, equivalently, $L_i=\{\leader(v)~|~v\in V_i\}$, and denote by $L=\cup_{i=1}^\ell L_i$ the set of all the leaders in $G$. Observe that the sets $V_1,\ldots,V_\ell$ forms a partition of the vertex-set of $G$. 

A vertex $v$ in $G$ is said to a {\em follower}, if $\leader(v)\neq v$. Let $F_i=\{v\in V_i~|~ v\neq \leader(v)\}$
be the set of all the followers in $V_i$. Finally we define $R=V\setminus L$ to be the set of all the non-leaders, and 
$F=\cup_{i=1}^\ell F_i$ to be the set of all the followers.

\begin{figure}[!ht]
\centering
\begin{footnotesize}
\begin{tikzpicture}[scale=1.2]
\begin{scope}[every node/.style={circle,draw,minimum size=0.66cm,fill=black!5!White}]
\node (v1) at (0,0) {$v_1$};
\node (v2) at (1,0) {$v_2$};
\node (v3) at (2,0) {$v_3$};
\node (v4) at (3,1) {$v_4$};
\node (v5) at (3,-1) {$v_5$};
\node (v6) at (4,0) {$v_6$};
\end{scope}
\draw (v1) -- (v2);
\draw (v2) -- (v3);
\draw (v3) -- (v4);
\draw (v3) -- (v5);
\draw (v3) -- (v6);
\draw (v4) -- (v5);
\draw (v4) -- (v6);
\draw (v5) -- (v6);
\end{tikzpicture}
\end{footnotesize}
\caption{The unique $\mindn$-realization of the sequence $\sigma=(3^3 2^1 1^2 )$. Observe that $\textit{min-deg}(v_1) = \textit{min-deg}(v_2) = \deg(v_1) = 1$, $\textit{min-deg}(v_3) = \deg(v_2) = 2$, and $\textit{min-deg}(v_i) = 3$, for $i \in \{4,5,6\}$. Since $\leader(v_2)=v_1$ and $\leader(v_3)=v_2$, here $v_2$ is a leader as well as a follower.}
\label{fig:example}
\end{figure}
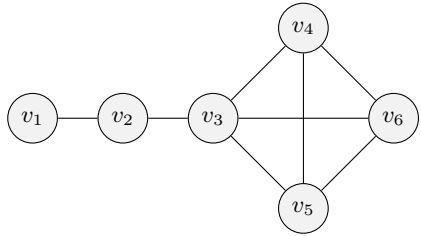

We point here that there exist realizable sequences $\sigma$ for which any graph $G$ realizing $\sigma$ and any leader function over $G$, the sets $L$ and $F$ have non-empty intersection. For example, consider the sequence $\sigma=(1^2 2^1 3^3)$ in Figure~\ref{fig:example}. It can be easily checked that $\sigma$ has only
one realizing graph, and in this graph, the leader-set and the follower-set are not disjoint.

We classify the sequences that admit disjoint leader and follower sets as follows.
\begin{definition}
A sequence $\sigma=(d_\ell^{n_\ell} \cdots d_1^{n_1})$ is said to admit a \textit{Disjoint Leader-Follower} (DLF) $\mindn$-realization if there exists a graph $G$ realizing $\sigma$ and a $\leader$ function under which the sets $L$ and $F$ are mutually disjoint, that is, $L\cap F=\emptyset$. 
\end{definition}

\subsection{Realizing uniform sequences}
\begin{lemma}
For a sequence $\sigma=(d_\ell^{n_\ell} \cdots d_1^{n_1})$ to be $\mindn$-realizable it is necessary that  $d_1+1\leq n_1$.
\label{lemma:x_1}
\end{lemma}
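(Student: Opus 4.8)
The plan is to exhibit, inside any realizing graph $G$, a single vertex whose closed neighborhood is forced to consist entirely of members of $V_1$, and to show that this neighborhood already contains $d_1+1$ vertices; since $|V_1|=n_1$, this immediately yields $n_1\geq d_1+1$. The natural candidate is a vertex of degree exactly $d_1$, which is guaranteed to exist precisely because $d_1$ is the smallest value appearing in $\sigma$.

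Concretely, I would first pick any vertex $v\in V_1$ (possible since $n_1\geq 1$) and set $w=\leader(v)$. Because $\mindn(v)=d_1$ and $w$ attains the minimum degree over $N[v]$, we have $\deg(w)=d_1$. Next I would observe that $w$ itself lies in $V_1$: since $w\in N[w]$, we get $\mindn(w)\leq \deg(w)=d_1$, and as $d_1=\min\{d_1,\ldots,d_\ell\}$ this forces $\mindn(w)=d_1$.

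The key step is to propagate membership in $V_1$ to all of $N[w]$. For any $u\in N[w]$, symmetry of adjacency gives $w\in N[u]$, whence $\mindn(u)\leq\deg(w)=d_1$, and again minimality of $d_1$ upgrades this to $\mindn(u)=d_1$, i.e.\ $u\in V_1$. Hence $N[w]\subseteq V_1$. Finally, since $\deg(w)=d_1$ the closed neighborhood $N[w]$ has exactly $d_1+1$ vertices, so $n_1=|V_1|\geq|N[w]|=d_1+1$.

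I do not expect a genuine obstacle here; the argument requires no real calculation. The only point needing care is the repeated reliance on the fact that $d_1$ is the global minimum of the profile, which is exactly what converts each bound $\mindn(\cdot)\leq d_1$ into the equality $\mindn(\cdot)=d_1$ that places a vertex in $V_1$. If one preferred to avoid the $\leader$ machinery, the same neighborhood argument works verbatim starting from any minimum-degree vertex of $G$, whose degree is necessarily $d_1$.
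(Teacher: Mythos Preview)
Your proof is correct and follows essentially the same approach as the paper: pick a vertex $w$ of degree exactly $d_1$ and observe that every vertex in $N[w]$ has $\mindn$ equal to $d_1$, so $n_1\ge |N[w]|=d_1+1$. The paper's version is terser (it simply asserts the existence of such a $w$ and the conclusion about $N[w]$), while you spell out both the existence via $\leader(v)$ and the containment $N[w]\subseteq V_1$ explicitly, but the underlying argument is identical.
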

\begin{proof}
Suppose $\sigma$ is $\mindn$-realizable by a graph $G$, then there exists at least one vertex, say $w$, of degree exactly $d_1$ in $G$. Now $|N[w]|=d_1+1$, this implies that the number of vertices $v$ in graph $G$ with $\textit{min-deg}(v)=d_1$ must be at least $d_1+1$. Thus $n_1\geq d_1+1$.
\end{proof}

\begin{lemma}\label{lemma:construction-single-case}
The sequence $\sigma=(d^{n})$, is $\mindn$-realizable if and only if $n\geq d+1$.
\end{lemma}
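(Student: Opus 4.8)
The forward direction is immediate from Lemma~\ref{lemma:x_1}: if $\sigma=(d^n)$ is $\mindn$-realizable, then taking $\ell=1$ with $d_1=d$ and $n_1=n$ gives $n\geq d+1$. The work lies in the converse, so the plan is to exhibit an explicit realization whenever $n\geq d+1$. The target is a graph on $n$ vertices in which \emph{every} vertex has minimum neighborhood degree exactly $d$; equivalently, every vertex $v$ must have some neighbor (or itself) of degree exactly $d$, while no vertex in $N[v]$ has degree strictly less than $d$.

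First I would observe that the cleanest way to force $\mindn(v)=d$ uniformly is to build a graph of \emph{minimum degree at least $d$} that also contains a witness of degree exactly $d$ in every closed neighborhood. The simplest candidate is a graph in which \emph{all} vertices have degree exactly $d$, i.e.\ a $d$-regular graph on $n$ vertices: there, every $N[v]$ consists of vertices of degree $d$, so $\mindn(v)=d$ trivially. A $d$-regular simple graph on $n$ vertices exists precisely when $n\geq d+1$ and $nd$ is even (standard), so this handles all cases except when $d$ is odd and $n$ is odd. The plan is therefore to split into the easy regular case and a small repair for the parity-bad case.

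For the remaining case ($d$ odd, $n$ odd, $n\geq d+1$, so in fact $n\geq d+2$), I would construct a nearly-regular graph: take $n-1$ vertices forming a $d$-regular graph (possible since $n-1\geq d+1$ and $(n-1)d$ is even as $n-1$ is even), then attach the last vertex $u$ to $d$ of them by ``rewiring'' — delete a suitable matching among the $d$ chosen neighbors and reconnect them through $u$ so that $u$ has degree $d$ and all its neighbors keep degree $d$. Concretely, pick $d$ vertices and raise each to degree $d+1$ by joining to $u$; to restore them, one standard trick is to start from a $d$-regular graph on $n-1$ vertices containing a perfect matching $M$ on some $2\lceil d/2\rceil$ vertices, subdivide or swap edges so that every vertex stays at degree $d$ while $u$ reaches degree $d$. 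Since every vertex ends with degree exactly $d$, again $\mindn(v)=d$ for all $v$.

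The main obstacle is the odd-odd parity case: a perfectly $d$-regular graph is impossible, so one cannot make \emph{every} degree equal to $d$, and care is needed to ensure the single ``defect'' vertex still has all of its closed neighborhood at degree exactly $d$ and does not push any neighbor below $d$ or create a degree other than $d$ in someone's neighborhood. The safest route, which I would adopt to avoid fiddly rewiring, is to allow exactly one vertex of degree $d+1$ rather than $d$: build a graph where one vertex $w$ has degree $d+1$ and all others have degree $d$ (feasible by a Havel--Hakimi / Erd\H{o}s--Gallai check since the degree sum $(n-1)d+(d+1)=nd+1$ is even when $nd$ is odd). Then every vertex has a degree-$d$ neighbor in its closed neighborhood, and no vertex has degree below $d$, so $\mindn(v)=d$ for all $v$ — giving the realization and completing the proof.
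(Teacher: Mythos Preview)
Your proposal is correct but takes a genuinely different route from the paper.

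The paper's construction is \emph{not} a (nearly) regular graph. Instead it builds a bipartition into a small set $A$ of ``leaders'' of size $\lceil n/(d+1)\rceil$, each of degree exactly $d$, and a set $B$ of ``followers'' which is turned into a clique; every follower is adjacent to at least one leader. Thus leaders have degree exactly $d$, followers have degree at least $d_\ell$ (in fact much larger), and every closed neighborhood contains a leader, yielding $\mindn(v)=d$ throughout. The point of this seemingly more elaborate construction is that it is packaged as a subroutine $\textsc{graph}(n,d,A,B)$ and reused verbatim in Algorithm~\ref{Algorithm:min-deg} for the general sufficiency proof (Theorem~\ref{theorem:construction-easy-instance}), where the leader/follower split with the specific size $|B|=\lfloor nd/(d+1)\rfloor$ is exactly what is needed. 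Your regular-graph argument proves the lemma more cheaply in isolation, but it does not produce the $A/B$ structure the paper relies on downstream.

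One small caveat: your first ``rewiring'' sketch for the odd--odd case does not quite work as written, since $d$ odd means you cannot delete a perfect matching on $d$ vertices; you noticed this yourself and switched to the second approach. That second approach (one vertex of degree $d{+}1$, the rest of degree $d$) is clean and correct: the sequence is graphic for $n\geq d+2$ (which is forced by parity), and since every vertex has degree at least $d$ and every closed neighborhood contains a degree-$d$ vertex (either the vertex itself or, for $w$, any of its neighbors), $\mindn\equiv d$. You might tighten the exposition by dropping the first sketch entirely and giving an explicit construction (e.g.\ a circulant plus a near-perfect matching on long chords) rather than deferring to Erd\H{o}s--Gallai.
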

\begin{proof}
By Lemma~\ref{lemma:x_1}, if the sequence $\sigma=(d^{n})$ is realizable then $n$ must be at least $d+1$. 
To prove the converse, we give a realization for $\sigma$ assuming $n\geq d+1$. Let $q\geq 1$ and $r\in[0,d]$ be integers satisfying $n=(q)(d+1)-r$. Take a set $A$ of $q$ vertices, namely $L_i~(i\in [1,q])$, and another set $B$ of $dq$ vertices, namely $b_{ij}~(i\in [1,q],j\in[1,d])$. Connect each $L_i$ to vertices $b_{i1},\ldots,b_{id}$. So vertices in $A$ have degree exactly $d$ and vertices in $B$ have in their neighborhood a vertex of degree $d$. Next if $r>0$, then we merge $b_{1j}$ with $b_{2j}$, for $j\in[1,r]$, thereby reducing $r$ vertices in $B$. (Notice that $b_{1j}$ and $b_{2j}$ exists because $r>0$ only if $q\geq 2$). Thus $|A|+|R|=n$ and each vertex in $A$ still has degree exactly $d$.  So $|A|=\frac{n+r}{d+1}=\big\lceil \frac{n}{d+1}\big\rceil$ and $|R|=n-|A|=\big\lfloor \frac{nd}{d+1}\big\rfloor\geq d$. Finally, we add edges between each pair of vertices in $B$ to make it a clique of size at least $d$; this will imply that the vertices in set $B$ have degree at least $d$. It is easy to check that $\textit{min-deg}(v)$ for each $v\in A\cup B$ in our constructed graph is $d$.
\end{proof}

\begin{remark}
Henceforth, we will use $\textsc{graph}(n,d,A,B)$ to denote the function that returns the edges of the graph constructed by Lemma~\ref{lemma:construction-single-case} whenever $n\geq d+1$ and $|A|=\big\lceil \frac{n}{d+1}\big\rceil$, and $|R|=\big\lfloor \frac{nd}{d+1}\big\rfloor$.
\end{remark}

\subsection{Necessary and sufficient conditions for $\mindn$ profiles}

We start with the following theorem.

\begin{theorem}[Sufficient condition SC] \label{theorem:construction-easy-instance}
Any sequence $\sigma=(d_\ell^{n_\ell}\cdots d_1^{n_1})$ satisfying $d_i \leq \sum_{j=1}^i \Big\lfloor \frac{n_jd_j}{d_j+1}  \Big\rfloor$, for $i\in [1,\ell]$, is $\mindn$-realizable by a graph $G$ such that  $L\cap F= \emptyset$ with respect to some leader function defined over $G$.
\end{theorem}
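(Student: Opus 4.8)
The plan is to build the realizing graph $G$ by treating each degree class $d_i$ as inducing a local gadget of the kind produced by Lemma~\ref{lemma:construction-single-case}, and then to use the slack guaranteed by the hypothesis $d_i \leq \sum_{j=1}^i \lfloor n_jd_j/(d_j+1)\rfloor$ to ``feed'' each gadget enough high-degree vertices so that every vertex in it indeed attains minimum neighborhood degree exactly $d_i$. Concretely, for each $i$ I would invoke $\textsc{graph}(n_i, d_i, A_i, B_i)$ to create a leader set $A_i$ of size $\lceil n_i/(d_i+1)\rceil$ and a follower set $B_i$ of size $\lfloor n_id_i/(d_i+1)\rfloor$; inside this gadget every vertex has a degree-$d_i$ vertex in its closed neighborhood, so the local minimum neighborhood degree is at most $d_i$. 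The $n_i$ vertices assigned to class $i$ will be $A_i\cup B_i$, and the counts match because $\lceil n_i/(d_i+1)\rceil + \lfloor n_id_i/(d_i+1)\rfloor = n_i$.

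\textbf{Key steps.}
First I would process the classes in increasing order of $i$ and maintain the invariant that all vertices placed so far have the correct minimum neighborhood degree. The central idea is that to certify $\mindn(v)=d_i$ for a class-$i$ vertex $v$, two things must hold: (1) some neighbor of $v$ has degree exactly $d_i$, and (2) every neighbor of $v$ has degree at least $d_i$. Point (1) is handled inside the gadget. For point (2), the danger is that a class-$i$ vertex might be adjacent to a lower-degree vertex from some class $j<i$, which would drag its minimum neighborhood degree below $d_i$. To avoid this I would keep the gadgets for distinct classes essentially vertex-disjoint except for carefully controlled cross edges, and ensure that the leaders $A_i$ (which have degree exactly $d_i$) are only connected to their own followers. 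The quantity $\sum_{j=1}^i \lfloor n_jd_j/(d_j+1)\rfloor$ is exactly the total number of follower (high-degree) vertices available in classes $1$ through $i$, and the hypothesis says this is at least $d_i$; I would use these as the pool from which each leader $L\in A_i$ draws its $d_i$ neighbors, padding the clique of followers across classes so that each such follower ends up with degree $\geq d_i$.

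\textbf{Maintaining the leader/follower structure and disjointness.}
To obtain the stronger conclusion $L\cap F=\emptyset$, I would arrange that no vertex that serves as a leader of some vertex is itself a follower. The natural way is to designate the sets $A_i$ as leaders and to guarantee each $A_i$-vertex is a strict local minimum in its closed neighborhood, i.e.\ all of its neighbors have strictly larger degree, so that it is never assigned a leader other than itself and hence never lies in any $F_j$. Because I build connections so that a class-$i$ leader has degree exactly $d_i$ while the followers it connects to sit inside a clique giving them degree strictly larger than $d_i$, the leader is a genuine local minimum and the follower sets stay disjoint from the leader sets across all classes.

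\textbf{Main obstacle.}
The hard part will be step (2): certifying that every class-$i$ vertex has all its neighbors of degree at least $d_i$ while simultaneously recycling lower-class follower vertices as neighbors for higher-class leaders. A lower-class follower $b$ (originally certifying degree $d_j$ for some $j<i$) may need to acquire additional edges to raise its degree from $\geq d_j$ up to $\geq d_i$ before it can safely neighbor a class-$i$ leader, and these extra edges must not accidentally create a new, smaller minimum in some other vertex's neighborhood. Balancing the edge budget so that the shared clique of followers simultaneously satisfies the degree lower bounds of all classes that draw from it, without over-counting any vertex or violating simplicity, is the delicate accounting that the inequality hypothesis is precisely designed to make feasible; verifying that the available follower pool of size $\sum_{j=1}^i \lfloor n_jd_j/(d_j+1)\rfloor \geq d_i$ suffices at every level is where the bulk of the care lies.
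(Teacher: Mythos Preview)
Your approach is essentially the paper's: split each $V_i$ into leaders $L_i$ of size $\lceil n_i/(d_i+1)\rceil$ and followers $R_i$, make $R=\bigcup_i R_i$ a clique, and give each leader degree exactly $d_i$ by connecting it only to followers. There is, however, a real gap. You propose to invoke $\textsc{graph}(n_i,d_i,A_i,B_i)$ for every $i$, but Lemma~\ref{lemma:construction-single-case} requires $n_i\ge d_i+1$; when $n_i\le d_i$ one has $|L_i|=1$ and $|R_i|=n_i-1<d_i$, so the single leader $a_i$ cannot find $d_i$ neighbors inside its own class. It must borrow $d_i-(n_i-1)$ neighbors from $\bigcup_{j<i}R_j$, and the hypothesis $d_i\le\sum_{j\le i}\lfloor n_jd_j/(d_j+1)\rfloor$ is precisely what guarantees that enough such followers exist. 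Your text is contradictory here (leaders ``only connected to their own followers'' versus drawing from the pooled followers of classes $1,\dots,i$); the latter is what you need.

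Your ``main obstacle'' is also simpler than you fear. Once $R$ is made into a single clique, the hypothesis at $i=\ell$ gives $|R|\ge d_\ell$, and since every follower is additionally adjacent to at least one leader, every follower has degree $\ge d_\ell$ in one stroke; no per-level balancing is required. The only remaining check is that a follower in $R_j$ is never adjacent to a leader of some $L_k$ with $k<j$ (which would drag its $\mindn$ below $d_j$), and this holds automatically because a level-$k$ leader connects only into $\bigcup_{m\le k}R_m$. Finally, for $L\cap F=\emptyset$ you do not need strict local minima: simply declare $\leader(v)=v$ for every $v\in\bigcup_i L_i$ and, for $v\in R_i$, let $\leader(v)$ be any neighbor of $v$ in $L_i$; then $F\subseteq R$ and $L=\bigcup_i L_i$ are disjoint by construction.
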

\begin{proof}
We initialize $G$ to be an empty graph. Our algorithm proceeds in $\ell$ rounds. (See Algorithm \ref{Algorithm:min-deg} for a pseudo-code). In each round, we first add to $G$ a set $V_i$ of $n_i$ new vertices and partition $V_i$ into two sets $L_i$ and $R_i$ of sizes respectively $\big\lceil \frac{n_i}{d_i+1}\big\rceil$ and $\big\lfloor \frac{n_id_i}{d_i+1}\big\rfloor$. Now if $n_i> d_i+1$, then we solve this round independently by adding to $G$ all the edges returned by $\textsc{graph}(n_i,d_i,L_i,R_i)$. Notice that if $n_i\leq d_i+1$, then $L_i$ will contain only one vertex, say $a_i$. In such a case, we add edges between $a_i$ and all the vertices in set $R_i$. Also, we add edges between $a_i$ and any arbitrarily chosen $d_i+1-n_i$ vertices in $\cup_{j<i}R_j$. This is possible since $d_i+1-n_i=d_i - \big\lfloor \frac{n_id_i}{d_i+1}  \big\rfloor \leq \sum_{j=1}^{i-1} \big\lfloor \frac{n_jd_j}{d_j+1}  \big\rfloor= \sum_{j=1}^{i-1} |R_j|$. Finally, after the $\ell$ rounds are completed, we add edges between each pair of vertices in set $R = \cup_{i=1}^\ell R_i$ to make it a clique. 

\begin{algorithm}[!ht]
\Input{A sequence $\sigma=(d_\ell^{n_\ell} \cdots d_1^{n_1})$ satisfying 
$d_i \leq  \sum_{j=1}^i \lfloor \frac{n_jd_j}{d_j+1}\rfloor$, for $1\leq i\leq \ell$.}
\BlankLine
Initialize $G$ to be an empty graph.\\
\For{$i=1$ to $\ell$ }{
Add to $G$ a set $V_i$ of $n_i$ new vertices.\\
Partition $V_i$ in two sets $L_i$, $R_i$ such that $|L_i|= \big\lceil \frac{n_i}{d_i+1}\big\rceil$
and $|R_i|= \big\lfloor \frac{n_id_i}{d_i+1}\big\rfloor$.\\
\uIf{$(n_i> d_i+1$, \normalfont{or equivalently,} $|L_i|>1)$}
{Add to $G$ all the edges returned by $\textsc{graph}(n_i,d_i,L_i,R_i)$.}
\ElseIf{$(|L_i|=1)$}{ 
Let $a_i$ be the only vertex in $L_i$.\\
Connect $a_i$ to all vertices in $R_i$, and any arbitrary $d_i+1-n_i$ vertices in $\cup_{j<i}R_j$.}
}
Add edges between each pair of vertices in $R = \cup_{i=1}^\ell R_i$ to make it a clique.\\
Output G.
\caption{Computing a $\mindn$-realization for a given special $\sigma$.}
\label{Algorithm:min-deg}
\end{algorithm}

Let us now show bounds on the degree of vertices in sets $L_i$ and $R_i$.
\begin{enumerate}
\item Each vertex in $L_i$ has degree exactly $d_i$ :~
Recall we add edges to vertices in $L_i$ only in the $i^{th}$ iteration of for loop. If $n_i>d_i+1$, then by Lemma~\ref{lemma:construction-single-case}, the degree of each vertex in $L_i$ is exactly $d_i$. If $|L_i|=1$, or equivalently, $n_i\leq d_i+1$, then $|R_i|=n_i-|L_i|=n_i-1$, and so degree of vertex $a_i\in L_i$ is $(n_i-1)+(d_i+1-n_i)=d_i$.\\

\item Vertices in $R$ have degree at least $d_\ell$ :~
For any $i\in [1,\ell]$, if $n_i>d_i+1$, then by Lemma~\ref{lemma:construction-single-case}, $|R_i|=\big\lceil\frac{n_i d_i}{d_i+1}\big\rceil $, and even in the case $n_i\leq d_i+1$, we have $|R_i|=n_i-|L_i|=n_i-\big\lceil\frac{n_i }{d_i+1}\big\rceil =\big\lceil\frac{n_i d_i}{d_i+1}\big\rceil$. Thus $|R|=\sum_{i=1}^\ell |R_i| = \sum_{i=1}^\ell \big\lceil\frac{n_i d_i}{d_i+1}\big\rceil$ which  is bounded below by $d_i$. Since $|R|\geq d_\ell$, and each vertex in $R$ is adjacent to at least  one vertex in $\cup_i L_i$, the degree of vertices in $R$ is at least $d_\ell$.
\end{enumerate}

We next show that for any vertex $v\in V_i$, $\textit{min-deg}(v)=d_i$, where $i\in [1,\ell]$. If $v\in L_i$, then $\textit{min-deg}(v)=d_i$, since each vertex in $L_i$ has degree $d_i$, and is adjacent to only vertices in $R$ which have degree at least $d_\ell\geq d_i$. If $v\in R_i$, then also $\textit{min-deg}(v)=d_i$, since each vertex in $R_i$ is adjacent to at least one vertex in $L_i$, and $N[v]$ is contained in the set $R\cup (\cup_{j\geq i} L_j)$, whose vertices have degree at least $d_i$.

The leader function over $V$ is as follows. For each $v\in \cup_{i=1}^\ell L_i$, we set $\leader(v)=v$, and for each $v\in R_i$, we set $\leader(v)$ to any arbitrary neighbour of $v$ in $L_i$. Since each vertex in $L=\cup_{i=1}^\ell L_i = \{\leader(v)~|~v\in V\}$ is a leader of itself, the set $L$ of leader and the set $F$ of followers must be mutually disjoint.
\end{proof}

We now provide a lower bound on the size of the leader set $L_i$.

\begin{lemma}
For each $i\in[1,\ell]$, we have $|L_i|\geq \displaystyle \Big\lceil \frac{n_i}{d_i+1}\Big\rceil$.
\label{lemma:L_i}
\end{lemma}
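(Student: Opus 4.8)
The plan is to prove the bound by a simple double-counting argument, charging each vertex of $V_i$ to its leader and exploiting the fact that every leader in $L_i$ has small closed neighborhood.

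First I would record the key degree observation: every vertex $u \in L_i$ has degree exactly $d_i$ in $G$. Indeed, $u \in L_i$ means $u = \leader(v)$ for some $v \in V_i$. Since $v \in V_i$ we have $\textit{min-deg}(v) = d_i$, and by definition $\leader(v)$ is a vertex of $N[v]$ of minimum degree, so $\deg(u) = \min\{\deg(w) \mid w \in N[v]\} = \textit{min-deg}(v) = d_i$. In particular $|N[u]| = d_i + 1$ for every $u \in L_i$.

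Next comes the main step, bounding the number of vertices of $V_i$ that a single leader can serve. Fix $u \in L_i$ and consider the set $C_u = \{v \in V_i \mid \leader(v) = u\}$. For each such $v$ we have $u \in N[v]$, which by symmetry of adjacency means $v \in N[u]$; hence $C_u \subseteq N[u]$. Combining with the previous paragraph, $|C_u| \le |N[u]| = d_i + 1$. Since every vertex of $V_i$ has its leader in $L_i$ by definition of $L_i$, the family $\{C_u\}_{u \in L_i}$ covers $V_i$, so
$$ n_i = |V_i| \;\le\; \sum_{u \in L_i} |C_u| \;\le\; |L_i| \, (d_i + 1). $$
Dividing and using that $|L_i|$ is an integer gives $|L_i| \ge \lceil n_i/(d_i+1)\rceil$, as claimed.

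I do not expect any genuine obstacle here: the argument is an elementary covering/counting bound. The only points requiring a moment of care are the identity $\deg(u) = d_i$ for leaders (which is immediate from the definition of $\leader(\cdot)$ together with membership of $u$ in $L_i$) and the containment $C_u \subseteq N[u]$, which is just the statement that a vertex's leader lies in its closed neighborhood. Everything else is the single displayed inequality and the ceiling rounding.
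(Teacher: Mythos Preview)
Your proof is correct and follows essentially the same approach as the paper: both argue that each leader $u\in L_i$ has degree $d_i$ and hence can be the leader of at most $|N[u]|=d_i+1$ vertices of $V_i$, yielding $n_i\le |L_i|(d_i+1)$ and the claimed ceiling bound. Your version is simply a more carefully spelled-out rendering of the paper's three-sentence argument.
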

\begin{proof}
Consider any vertex $a \in L_i$. Since $|N(a)|=d_i+1$, vertex $a$ can serve as leader for at most $d_i+1$ vertices. This shows that $|L_i|\geq \frac{n_i}{d_i+1}$. The claim follows from the fact that $|L_i|$ is an integer.
%
\end{proof}

\begin{theorem}[Necessary condition] \label{theorem:necess}
For any $\mindn$-realizable sequence $\sigma=(d_\ell^{n_\ell} \cdots d_1^{n_1})$, we have
\begin{description}
\item{{\bf (NC1)}} {~~}
  $d_i\leq \big(\sum_{j=1}^{i}n_j\big) - 1$, for $i\in[1,\ell]$
\item{{\bf (NC2)}} {~~}
  $d_\ell \leq \displaystyle \sum_{i=1}^{\ell}\Big\lfloor \frac{n_id_i}{d_i+1}  \Big\rfloor$.
\end{description}
\end{theorem}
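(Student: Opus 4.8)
The plan is to prove the two parts separately, in both cases leaning on the leader/follower structure just developed.

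\textbf{Condition (NC1).} Fix $i\in[1,\ell]$ and pick any vertex $v\in V_i$ in a realization $G$; let $w=\leader(v)$, so that $\deg(w)=\mindn(v)=d_i$ and hence $|N[w]|=d_i+1$. The point is that $w$ lies in the closed neighborhood of each of its own neighbors: for every $u\in N[w]$ we have $w\in N[u]$ with $\deg(w)=d_i$, so $\mindn(u)\le d_i$, i.e.\ $u\in V_1\cup\cdots\cup V_i$. Thus the $d_i+1$ vertices of $N[w]$ all have minimum neighborhood degree at most $d_i$, giving $d_i+1\le |V_1|+\cdots+|V_i|=n_1+\cdots+n_i$, which is exactly (NC1).

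\textbf{Condition (NC2), easy direction.} I would first record that the leader sets of distinct classes are disjoint: if $u\in L_i\cap L_j$ then $u$ leads a vertex of $V_i$ and a vertex of $V_j$, forcing $\deg(u)=d_i=d_j$ and hence $i=j$. Therefore $|L|=\sum_i|L_i|$, and combining with Lemma~\ref{lemma:L_i} gives, for \emph{every} leader function, $|R|=n-|L|=n-\sum_i|L_i|\le n-\sum_i\lceil n_i/(d_i+1)\rceil=\sum_i\lfloor n_id_i/(d_i+1)\rfloor$. Consequently it suffices to exhibit a single leader function whose non-leader set satisfies $|R|\ge d_\ell$; chaining the two inequalities then yields $d_\ell\le |R|\le\sum_i\lfloor n_id_i/(d_i+1)\rfloor$, which is (NC2).

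\textbf{Condition (NC2), the main step.} To produce such a leader function, fix $x\in V_\ell$ (one exists since $n_\ell\ge1$). Every vertex of $N[x]$ has degree $\ge d_\ell$ and $|N[x]|=\deg(x)+1\ge d_\ell+1$. The constraint on who can be a leader is that any leader has degree at most $d_\ell$, since its degree equals the $\mindn$-value of some vertex it leads; combined with the fact that all of $N[x]$ has degree $\ge d_\ell$, this shows that the only members of $N[x]$ that can possibly be leaders are those of degree \emph{exactly} $d_\ell$, while all degree-$>d_\ell$ vertices of $N[x]$ are non-leaders for free. I would then choose the leader function so that at most one degree-$d_\ell$ vertex of $N[x]$ is actually used as a leader---funneling the leadership of the degree-$d_\ell$ vertices through a single minimum-degree representative $w$ of $N[x]$---leaving at least $|N[x]|-1\ge d_\ell$ non-leaders.

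The main obstacle is precisely this last coordination. A degree-$d_\ell$ vertex of $N[x]$ may be the \emph{unique} minimum-degree vertex in some other vertex's neighborhood, and is then unavoidably forced into $L$; the all-leaders assignment on $K_{d_\ell+1}$ shows that a naive leader choice can leave $|R|=0$, so the good function must be chosen globally rather than greedily. I expect to resolve this either by consistent tie-breaking according to a fixed total order on the vertices (which in the tight uniform instances collapses many would-be leaders to a common representative), or by an exchange argument that charges each forced degree-$d_\ell$ leader injectively to a distinct private follower lying outside the leader set of $N[x]$. Establishing that at most one vertex of $N[x]$ survives as a leader is the technical heart of the proof.
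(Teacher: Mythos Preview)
Your argument for (NC1) is correct and matches the paper's.

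For (NC2), however, you have set yourself an unnecessarily hard task, and the proposal as written has a genuine gap. You aim to show $|R|\ge d_\ell$ for some leader function by arranging that at most one vertex of $N[x]$ is a leader. You yourself note that degree-$d_\ell$ vertices of $N[x]$ may be \emph{forced} leaders (being the unique minimum-degree vertex in someone else's neighborhood), and it is easy to build realizations where several such forced leaders sit inside $N[x]$ simultaneously; so the claim ``at most one vertex of $N[x]$ survives as a leader'' is not in general achievable, and neither of your two suggested fixes (tie-breaking, injective charging) is developed enough to tell whether the broader inequality $|R|\ge d_\ell$ can be rescued along these lines.

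The paper sidesteps this entire difficulty with a much lighter observation: you do not need $N[w]$ (for $w\in V_\ell$) to avoid all of $L$, only $L_1,\ldots,L_{\ell-1}$, and this is automatic for \emph{every} leader function, since those leaders have degree $d_i<d_\ell$ while every vertex of $N[w]$ has degree $\ge d_\ell$. This immediately gives
\[
d_\ell+1\;\le\;|N[w]|\;\le\;n-\sum_{i=1}^{\ell-1}|L_i|\;\le\;(n_\ell-1)+\sum_{i=1}^{\ell-1}\Big\lfloor\frac{n_id_i}{d_i+1}\Big\rfloor+1,
\]
using only Lemma~\ref{lemma:L_i}. A two-line case split then finishes: if $n_\ell\le d_\ell$ then $n_\ell-1=\lfloor n_\ell d_\ell/(d_\ell+1)\rfloor$, so the right side becomes exactly $\sum_{i=1}^\ell\lfloor n_id_i/(d_i+1)\rfloor$; if $n_\ell\ge d_\ell+1$ then already $d_\ell\le\lfloor n_\ell d_\ell/(d_\ell+1)\rfloor$ on its own. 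No special choice of leader function, no coordination argument, no exchange---the ``technical heart'' you anticipated simply does not exist once you stop trying to control $L_\ell\cap N[x]$.
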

\begin{proof}
Let $G$ be a realization for $\sigma$. Let $w$ be any vertex in $G$ such that $\deg(w)=d_i$. Then $w$ as well as all the neighbours of $w$ must be contained in $\cup_{j=1}^i V_j$, therefore, $d_i+1 =|N[w]|\leq |\cup_{j=1}^iV_j| = \sum_{j=1}^{i}n_j$, implying condition (NC1).

To prove condition (NC2), suppose $w$ is a vertex in $G$ such that $\textit{min-deg}(w)=d_\ell$. Then $N[w]$ cannot contain vertices of degree less than $d_\ell$, so $N[w]\cap L_i=\emptyset$, for each $i<\ell$. Therefore, $|N[w]|\leq n - \sum_{i=1}^{\ell-1}|L_i|$. Also $\deg(w)$ must be at least $d_\ell$. We thus get, 
$$d_{\ell}+1\leq |N[w]| \leq n - \sum_{i=1}^{\ell-1}|L_i| = n_\ell + \sum_{i=1}^{\ell-1}(n_i-|L_i|)\leq n_\ell + \sum_{i=1}^\ell  \Big\lfloor \frac{n_id_i}{d_i+1}\Big\rfloor, $$
where the last inequality follows from Lemma~\ref{lemma:L_i}. 

If $n_\ell\leq d_\ell$, then $n_\ell-1 =\big\lfloor \frac{n_\ell d_\ell}{d_\ell+1}\big\rfloor$, and so $d_\ell \leq  \sum_{i=1}^{\ell}\big\lfloor \frac{n_id_i}{d_i+1}  \big\rfloor$. If $n_\ell\geq d_\ell +1$, then $\frac{n_\ell d_\ell}{d_\ell+1}\geq d_\ell$ which implies $d_\ell\leq\big\lfloor \frac{n_\ell d_\ell}{d_\ell+1}\big\rfloor$ since $d_\ell$ is integral.
\end{proof}

As a corollary of the above results, the following is immediate.
\begin{corollary}
The sequence $\sigma=(d_2^{n_2}d_1^{n_1})$ is $\mindn$-realizable if and only if $d_1\leq \big\lfloor \frac{n_1d_1}{d_1+1}\big\rfloor$ and $d_2\leq \big\lfloor \frac{n_1d_1}{d_1+1}\big\rfloor+\big\lfloor \frac{n_2d_2}{d_2+1}\big\rfloor$.
\end{corollary}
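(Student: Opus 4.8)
The plan is to derive both directions of the equivalence directly from the Sufficient Condition theorem (Theorem~\ref{theorem:construction-easy-instance}), the Necessary Condition theorem (Theorem~\ref{theorem:necess}), and Lemma~\ref{lemma:x_1}, all specialized to the case $\ell=2$. The point is that for two distinct degree values these general results collapse into exactly the two stated inequalities, so there is essentially no new construction to perform.

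For sufficiency I would simply observe that the two hypotheses are precisely condition \eqref{SC} instantiated at $\ell=2$: the bound $d_1\leq \big\lfloor \frac{n_1d_1}{d_1+1}\big\rfloor$ is \eqref{SC} for $i=1$, and $d_2\leq \big\lfloor \frac{n_1d_1}{d_1+1}\big\rfloor+\big\lfloor \frac{n_2d_2}{d_2+1}\big\rfloor$ is \eqref{SC} for $i=2$. Thus whenever both hold, Theorem~\ref{theorem:construction-easy-instance} produces a realizing graph (in fact one admitting a disjoint leader--follower decomposition), and sufficiency is complete.

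For necessity, suppose $\sigma=(d_2^{n_2}d_1^{n_1})$ is $\mindn$-realizable. The second inequality is exactly condition \eqref{NC2} of Theorem~\ref{theorem:necess} at $\ell=2$, so it holds immediately. For the first inequality I would invoke Lemma~\ref{lemma:x_1} (equivalently, \eqref{NC1} at $i=1$), which gives $n_1\geq d_1+1$. From this, $\frac{n_1d_1}{d_1+1}\geq \frac{(d_1+1)d_1}{d_1+1}=d_1$, and since $d_1$ is an integer, $\big\lfloor \frac{n_1d_1}{d_1+1}\big\rfloor\geq d_1$, which is the first inequality.

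The only mildly subtle point, and not really an obstacle, is recognizing that the first inequality—although it is syntactically an \eqref{SC}-type bound rather than one of the listed necessary conditions—is nonetheless forced on every realizable sequence. It does not appear verbatim among \eqref{NC1} and \eqref{NC2}, but it follows once the elementary neighborhood count of Lemma~\ref{lemma:x_1} is combined with the integrality of $d_1$. With that observation in place, the corollary is immediate.
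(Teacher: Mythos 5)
Your proof is correct and follows essentially the same route as the paper: necessity of the second inequality from \eqref{NC2}, necessity of the first from $n_1\geq d_1+1$ (i.e., \eqref{NC1} at $i=1$, which is Lemma~\ref{lemma:x_1}) together with integrality of $d_1$, and sufficiency from Theorem~\ref{theorem:construction-easy-instance}. The "subtle point" you flag at the end is exactly the step the paper also makes, so nothing is missing.
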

\begin{proof}
Suppose $\sigma=(d_2^{n_2}d_1^{n_1})$ is realizable. Then Theorem~\ref{theorem:necess} implies (i) $n_1\geq d_1+1$ which implies $d_1\leq \big\lfloor \frac{n_1d_1}{d_1+1}\big\rfloor$, and (ii) $d_\ell=d_2\leq \big\lfloor \frac{n_1d_1}{d_1+1}\big\rfloor+\big\lfloor \frac{n_2d_2}{d_2+1}\big\rfloor$. The converse follows from Theorem~\ref{theorem:construction-easy-instance}.
\end{proof}

For a sequence $\longsigma$,
let $\gamma=(d_1+1)/d_1$.
As $\lfloor\frac{\gamma n_1d_1}{d_1+1}\rfloor + \ldots +
\lfloor\frac{\gamma n_id_i}{d_i+1}\rfloor \geq n_1+\cdots+n_i\geq d_i$,
we also have the following.

\begin{corollary}
For any sequence $\longsigma$ satisfying the first necessary condition
\eqref{NC1},
the sequence $\sigma^\gamma=(d_\ell^{\lceil\gamma n_\ell\rceil},\ldots,d_1^{\lceil\gamma n_1\rceil})$ 
satisfies the sufficient condition \eqref{SC}.
\end{corollary}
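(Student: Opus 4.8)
The plan is to verify the sufficient condition \eqref{SC} for $\sigma^\gamma$ directly, by a short chain of inequalities that bounds each summand from below. Spelling out \eqref{SC} for the inflated profile $\sigma^\gamma=(d_\ell^{\lceil\gamma n_\ell\rceil},\ldots,d_1^{\lceil\gamma n_1\rceil})$, the goal is to establish, for every $i\in[1,\ell]$,
$$d_i \leq \sum_{j=1}^i \Big\lfloor \frac{\lceil \gamma n_j \rceil\, d_j}{d_j+1} \Big\rfloor.$$
The first move is to discard the ceilings: since $\lceil \gamma n_j \rceil \geq \gamma n_j$ and the floor function is monotone, each term obeys $\big\lfloor \frac{\lceil \gamma n_j\rceil d_j}{d_j+1}\big\rfloor \geq \big\lfloor \frac{\gamma n_j d_j}{d_j+1}\big\rfloor$, so it suffices to prove the bound with $\gamma n_j$ in place of $\lceil \gamma n_j \rceil$.

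The crux is a per-term lower bound, $\big\lfloor \frac{\gamma n_j d_j}{d_j+1}\big\rfloor \geq n_j$. Substituting $\gamma=(d_1+1)/d_1$, this reduces to showing $\frac{\gamma d_j}{d_j+1} = \frac{d_1+1}{d_1}\cdot\frac{d_j}{d_j+1}\geq 1$, equivalently $\frac{d_j+1}{d_j}\leq\frac{d_1+1}{d_1}$. This is exactly the statement that $x\mapsto 1+1/x$ is nonincreasing, combined with the fact that $d_j\geq d_1$ for all $j$ (which holds since the profile satisfies $d_\ell>\cdots>d_1$). Once $\frac{\gamma n_j d_j}{d_j+1}\geq n_j$ is established, integrality of $n_j$ yields $\big\lfloor\frac{\gamma n_j d_j}{d_j+1}\big\rfloor\geq n_j$.

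Summing the per-term bounds over $j=1,\ldots,i$ then gives $\sum_{j=1}^i\big\lfloor\frac{\gamma n_j d_j}{d_j+1}\big\rfloor\geq\sum_{j=1}^i n_j$, and the first necessary condition \eqref{NC1} supplies $\sum_{j=1}^i n_j\geq d_i+1>d_i$, closing the chain. I do not expect a genuine obstacle: the whole argument is a sequence of elementary inequalities, and the only point needing care is the monotonicity step $\frac{d_j+1}{d_j}\leq\frac{d_1+1}{d_1}$, where one must invoke that the sequence is strictly increasing so that $d_j\geq d_1$ throughout. The choice $\gamma=(d_1+1)/d_1$ is precisely what makes the smallest term (the case $j=1$, where $d_j=d_1$) attain equality $\frac{\gamma d_1}{d_1+1}=1$; for all larger $d_j$ the inequality is strict, so the bound has slack in every remaining term.
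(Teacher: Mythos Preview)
Your proposal is correct and follows essentially the same approach as the paper: the paper's justification is the single line $\lfloor\frac{\gamma n_1d_1}{d_1+1}\rfloor + \cdots + \lfloor\frac{\gamma n_id_i}{d_i+1}\rfloor \geq n_1+\cdots+n_i\geq d_i$, and you have simply spelled out each step of this chain in full, including the monotonicity of $x\mapsto 1+1/x$ underlying the per-term bound and the trivial passage from $\gamma n_j$ to $\lceil\gamma n_j\rceil$.
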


\subsection{\mindn realization of tri-sequences}
\label{section:minNdeg-3}

We here  consider the scenario when a sequence has only three distinct degrees.
Specifically, we provide a complete characterization of sequences $\sigma=(d_3^{n_3}d_2^{n_2}d_1^{n_1})$. 

\begin{theorem}
The necessary and sufficient conditions for $\mindn$-realizability of the sequence $\longsigma$ when $\ell=3$ is
\begin{enumerate}
\item $d_1+1\leq n_1$, 
\item $d_2+1\leq n_1+n_2$, 
\item  $d_3\leq \big\lfloor \frac{n_1d_1}{d_1+1}\big\rfloor+\big\lfloor \frac{n_2d_2}{d_2+1}\big\rfloor+\big\lfloor \frac{n_3d_3}{d_3+1}\big\rfloor$, and
\item  either $d_2\leq \big\lfloor \frac{n_1d_1}{d_1+1}\big\rfloor+\big\lfloor \frac{n_2d_2}{d_2+1}\big\rfloor$, or $d_3+1\leq n_1+n_2+n_3-\big(1+\ceil{\frac{d_2-n_2}{d_1}}\big)$.
\end{enumerate}

\end{theorem}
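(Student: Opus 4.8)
The plan is to recognize that conditions (1)--(3) are nothing new and to focus all effort on condition \eqref{NC3} and the construction realizing it. Conditions (1) and (2) are exactly \eqref{NC1} for $i=1$ and $i=2$, and condition (3) is \eqref{NC2}, so their necessity is immediate from Theorem~\ref{theorem:necess}; note also that \eqref{NC1} for $i=3$ is subsumed, since each $\lfloor n_jd_j/(d_j+1)\rfloor\le n_j-1$ forces condition (3) to imply $d_3\le n-1$. For sufficiency I would split on the two alternatives of \eqref{NC3}. If the first alternative $d_2\le\lfloor n_1d_1/(d_1+1)\rfloor+\lfloor n_2d_2/(d_2+1)\rfloor$ holds, then together with condition (1) (which is equivalent to $d_1\le\lfloor n_1d_1/(d_1+1)\rfloor$) and condition (3) the sequence satisfies the full sufficient condition \eqref{SC} for $\ell=3$, so Theorem~\ref{theorem:construction-easy-instance} already yields a realization. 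Hence the whole problem reduces to the regime where the first alternative fails.

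\textbf{A structural reduction and key observation.} In the remaining regime I would first record that $n_2\le d_2$: if instead $n_2\ge d_2+1$ then $\lfloor n_2d_2/(d_2+1)\rfloor\ge d_2$ and the first alternative would already hold. With $n_2\le d_2$, failure of the first alternative becomes the deficit $d_2-n_2\ge\lfloor n_1d_1/(d_1+1)\rfloor$, and $\lceil(d_2-n_2)/d_1\rceil\ge 1$ is the binding quantity. The observation I would build everything on is: \emph{a vertex $a$ of degree exactly $d_2$ that leads some $V_2$-vertex has no neighbor in $V_3$.} Indeed, if $x\in V_3\cap N(a)$ then $a\in N[x]$ and $\mindn(x)=d_3$ force $\deg(a)\ge d_3>d_2$, a contradiction. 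Consequently $N[a]\subseteq V_1\cup V_2$, and since $|N[a]\cap V_2|\le n_2$, at least $d_2+1-n_2$ vertices of $N[a]$ lie in $V_1$.

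\textbf{Necessity of \eqref{NC3}.} Assume $\sigma$ is realizable and the first alternative fails; I must derive the second. Pick $w\in V_3$. As in Theorem~\ref{theorem:necess}, $N[w]$ contains no vertex of degree $<d_3$, so $d_3+1\le|N[w]|\le n-|S|$, where $S$ is the set of all vertices of degree $<d_3$. It remains to show $|S|\ge 1+\lceil(d_2-n_2)/d_1\rceil$. The degree-$d_2$ leader $a$ contributes one element of $S$ (distinct from every degree-$d_1$ vertex since $d_2>d_1$). For the rest I would use the $\ge d_2+1-n_2$ vertices of $N[a]\cap V_1$ from the previous paragraph: each has $\mindn=d_1$, hence a degree-$d_1$ witness in its closed neighborhood, while each degree-$d_1$ vertex can witness only a bounded number of them because one of its $d_1$ edges is ``spent'' reaching $a$'s part of the graph. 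A careful charging then extracts at least $\lceil(d_2-n_2)/d_1\rceil$ distinct degree-$d_1$ vertices in $S$. I expect this charging --- getting the capacity to be exactly $d_1$ (rather than the naive $d_1+1$), with the correct ceiling --- to be the main obstacle; it is precisely here that the deficit $d_2-n_2\ge\lfloor n_1d_1/(d_1+1)\rfloor$ (failure of the first alternative) together with the slack allowed by condition (2), namely $d_2-n_2\le n_1-1$, must be exploited.

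\textbf{Sufficiency in the hard regime.} Here the second alternative of \eqref{NC3} is assumed. I would construct $G$ by: (i) taking one leader $a$ joined to all $n_2$ vertices of $V_2$; (ii) joining $a$ to $d_2-n_2$ further vertices placed in $V_1$ with high degree but pinned to $\mindn=d_1$ by sharing $\lceil(d_2-n_2)/d_1\rceil$ degree-$d_1$ witness vertices --- mirroring the necessity count, with condition (2) guaranteeing the $1+(d_2-n_2)\le n_1$ slots needed inside $V_1$; (iii) placing $V_3$ together with all remaining non-leaders into one large clique, using \eqref{NC2} to guarantee this clique is large enough that each intended $V_3$-vertex attains $\mindn=d_3$; and (iv) checking that the degree-$d_3$ neighborhood fits, which is exactly the content of the second alternative $d_3+1\le n-1-\lceil(d_2-n_2)/d_1\rceil$. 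The untouched $V_1$-vertices are realized by the routine \textsc{graph} gadget of Lemma~\ref{lemma:construction-single-case}, and verifying that $d_1,d_2,d_3$ are attained as the respective minimum neighborhood degrees is then a direct degree count.
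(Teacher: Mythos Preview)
Your overall plan matches the paper's proof, and the reductions you make (conditions (1)--(3) via Theorem~\ref{theorem:necess}, the easy branch of \eqref{NC3} via Theorem~\ref{theorem:construction-easy-instance}, and the observation that a degree-$d_2$ leader $a$ satisfies $N[a]\subseteq V_1\cup V_2$) are exactly right. There are, however, two concrete gaps.

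\textbf{The charging argument.} Your ``spent edge'' intuition is correct, but applying it to the \emph{closed} set $N[a]\cap V_1$ does not give capacity $d_1$: if $x\in L_1$ and $x\in N(a)$, then both $x$ and $a$ lie in $N[x]\cap N[a]$, so $x$ may cover $d_1+1$ vertices there, and the bound $\lceil(d_2+1-n_2)/(d_1+1)\rceil$ can be strictly smaller than $\lceil(d_2-n_2)/d_1\rceil$. The fix (and this is precisely the paper's trick) is to do the charging inside the \emph{open} neighborhood $N(a)$. Since $a\in V_1$, one has $|N(a)\cap V_1|\ge d_2-n_2$, and for every $x\in L_1$ we get $|N[x]\cap N(a)|\le d_1$: if $x\in N(a)$ then $a\in N[x]\setminus N(a)$, and if $x\notin N(a)$ then $x\in N[x]\setminus N(a)$. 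Hence $|L_1|\ge\lceil(d_2-n_2)/d_1\rceil$, and together with $a\notin L_1$ this gives the required $1+\lceil(d_2-n_2)/d_1\rceil$ vertices of degree $<d_3$. No appeal to the deficit inequality is needed here.

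\textbf{The construction in the hard regime.} Placing ``$V_3$ together with all remaining non-leaders into one large clique'' cannot realize $\mindn=d_3$: a clique on more than $d_3+1$ vertices has minimum degree exceeding $d_3$, so no vertex of degree exactly $d_3$ exists to witness $\mindn=d_3$. What you need (and what your step (iv) implicitly presumes) is to first build, on the set $Z$ of the $N=n-1-\lceil(d_2-n_2)/d_1\rceil$ vertices outside $L_1\cup\{a\}$, a $\mindn$-realization of $(d_3^{N})$ using the gadget of Lemma~\ref{lemma:construction-single-case}; this is exactly where the second alternative $d_3+1\le N$ is used. Then attach $a$ to $N-n_3$ of the non-leader vertices of $Z$ (giving those $\mindn\le d_2$), carve out $Y\subseteq N(a)\cap Z$ of size $n_1-|L_1|-1$, and wire each $L_1$ vertex to vertices of $Y$ (and possibly to $a$) so that each has degree exactly $d_1$ and every vertex of $Y$ acquires a degree-$d_1$ neighbor. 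With this correction your construction coincides with the paper's.
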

\begin{proof}
Suppose $\sigma=(d_3^{n_3}d_2^{n_2}d_1^{n_1})$ is realizable, then by Theorem~\ref{theorem:necess}, it follows that the first three conditions stated above are necessary. 

To prove that all four conditions are necessary, we are left to show that if $d_2\gneq \big\lfloor \frac{n_1d_1}{d_1+1}\big\rfloor+\big\lfloor \frac{n_2d_2}{d_2+1}\big\rfloor$, then $d_3+1\leq n_1+n_2+n_3-\big(1+\ceil{\frac{d_2-n_2}{d_1}}\big)$.
We consider a graph $G$ that realizes $\sigma$. Let $V_1,V_2,V_3$ be the partition of $V(G)$ as defined in Section~\ref{section:minNdeg}. Consider a vertex $w\in V_2$. Observe that $leader(w)$ must lie in $V_1$, because if $L_2\cap V_2$ is non-empty, then Lemma~\ref{lemma:disjoint_L_i_V_i} implies $d_2\leq \big\lfloor \frac{n_1d_1}{d_1+1}\big\rfloor+\big\lfloor \frac{n_2d_2}{d_2+1}\big\rfloor$. We first show that $|L_1|\geq \ceil{\frac{d_2-n_2}{d_1}}$. The set $N(w)\cap V_1$ has size at least $d_2-n_2$. Each vertex $x\in L_1$ can serve as a leader of at most $d_1$ vertices in open-neighborhood of $w$. Indeed, if $x\in N(w)$ then it can not count $w$ (lying outside $N(w)$), and if $x\notin N(w)$ then it can not count itself (again lying outside $N(w)$). Thus to cover the set $N(w)\cap V_1$ at least $\ceil{\frac{d_2-n_2}{d_1}}$ leaders are required, thereby, showing $|L_1|\geq \ceil{\frac{d_2-n_2}{d_1}}$. Now consider a vertex $y\in V_3$, note that $N[y]$
excludes $w$ (as degree of $w$ is $d_2$), as well as $L_1$ (as vertices in $L_1$ have degree $d_1$). Therefore, we obtain the following relation.
$$d_3+1=|~N[y]~|\leq |V_1\setminus L_1| + |V_2\setminus w| + |V_3|\leq  n_1+n_2+n_3-\Big(1+\ceil{\frac{d_2-n_2}{d_1}}\Big)$$

We now prove the sufficiency claims. If $d_2\leq \big\lfloor \frac{n_1d_1}{d_1+1}\big\rfloor+\big\lfloor \frac{n_2d_2}{d_2+1}\big\rfloor$, then the conditions 1-4 are sufficient by Theorem~\ref{theorem:construction-easy-instance}. So let us focus on the scenario when $d_2\gneq \big\lfloor \frac{n_1d_1}{d_1+1}\big\rfloor+\big\lfloor \frac{n_2d_2}{d_2+1}\big\rfloor$. Let $N=n_1+n_2+n_3-\big(1+\ceil{\frac{d_2-n_2}{d_1}}\big)$. The vertex-set of our realized graph $G=(V,E)$ will be a union of three disjoint sets $L_1,L_2=\{w\}$, and $Z$ of size respectively $\ceil{\frac{d_2-n_2}{d_1}}$, $1$, and $N$. Initially, the edge-set $E$ is an empty-set. Between vertex pairs in $Z$, we add edges so that the induced graph $G[Z]$ is identical to $\textsc{graph}(N,d_3,\ceil{\frac{N}{d_3+1}},\floor{\frac{Nd_3}{d_3+1}})$. This step is possible since $d_3+1\leq N$, and ensures that $\mindn_{G[Z]}(z)=d_3$, for $z\in Z$. Let $L_3$ denote the set of those vertices in $Z$ whose degree is equal to $d_3$. We connect $w$ to arbitrary $N-n_3=n_2+(n_1-|L_1\cup L_2|)$ vertices in $Z\setminus L_3$, and any arbitrary $\alpha:=d_2-(n_1+n_2-|L_1\cup L_2|)$ vertices in $L_1$. Since $\deg_G(w)=d_2$, this step ensures that $\mindn$ of exactly $n_2$ vertices in $Z$ decreases to $d_2$.
Let $Y$ be a subset of arbitrary $(n_1-|L_1\cup L_2|)$ neighbours of $w$ in $Z$.
Finally, we connect each $x\in L_1\cap N[w]$ to arbitrary $d_1-1$ vertices in $Y$, and each $x'\in L_1\setminus N[w]$ to arbitrary $d_1$ vertices in $Y$, so as to ensure each vertex in $Y$ is adjacent to at least one leader in $L_1$. Since vertices in $L_1$ have degree $d_1$, this ensures $\mindn_{G}(x)=d_1$, for each $x\in \{w\}\cup Y\cup L_1$. This completes the construction of $G$. 
\end{proof}

Looking at the complexity of the above characterization, we leave it as an open question to solve the problem in general.

\begin{open}
Does there exist a polynomial-time algorithm, or a closed-form characterization,
for realizing $\mindn$ profiles for general graphs?
\end{open}

\subsection{Complete characterization for sequences admitting disjoint leader-follower sets}
We conclude by providing a complete characterization for special class of $\mindn$-sequences that admit a disjoint leader-follower sets.

\begin{lemma} \label{lemma:disjoint_L_i_V_i}
Let $G$ be a graph and $\sigma(G)=(n_\ell^{d_\ell}\ldots d_1^{n_1})$. For any leader function defined over $G$ and for any $i\in[1,\ell]$, if $L_i\cap V_i$ is non-empty then $d_i \leq \sum_{j=1}^i \Big\lfloor \frac{n_jd_j}{d_j+1}  \Big\rfloor$.
\end{lemma}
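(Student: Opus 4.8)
The plan is to fix a witness vertex $a\in L_i\cap V_i$ and read off everything from its closed neighborhood $N[a]$. Since $a\in L_i$, it is the leader of some vertex of $V_i$, so $\deg(a)=d_i$; since $a\in V_i$, we have $\mindn(a)=d_i$, and because $a\in N[a]$ this forces every vertex of $N[a]$ to have degree at least $d_i$, while $|N[a]|=d_i+1$. These two facts — that $N[a]$ consists of exactly $d_i+1$ vertices, all of degree at least $d_i$ — are the only structural inputs I will use.

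First I would split on the size of $n_i$. If $n_i\ge d_i+1$, the claim is immediate and does not even invoke $a$: then $\frac{n_id_i}{d_i+1}\ge d_i$, so $\lfloor\frac{n_id_i}{d_i+1}\rfloor\ge d_i$, and the single $i$-th term already dominates, giving $d_i\le\sum_{j=1}^i\lfloor\frac{n_jd_j}{d_j+1}\rfloor$. Hence all the content lies in the complementary case $n_i\le d_i$, where $\lceil\frac{n_i}{d_i+1}\rceil=1$ and therefore $\lfloor\frac{n_id_i}{d_i+1}\rfloor=n_i-1$.

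In that case I would locate $N[a]$ inside $\bigcup_{j\le i}V_j$ with the low-degree leaders removed. Each $u\in N[a]$ has $a\in N[u]$, so $\mindn(u)\le\deg(a)=d_i$, placing $u$ in $\bigcup_{j\le i}V_j$. Moreover every vertex of $L_j$ has degree exactly $d_j<d_i$ for $j<i$, so $N[a]$ is disjoint from $\bigcup_{j<i}L_j$. Two bookkeeping observations let me count cleanly: the sets $L_1,\ldots,L_\ell$ are pairwise disjoint (a vertex of $L_j$ has degree $d_j$, and the $d_j$ are distinct), and each $L_j\subseteq\bigcup_{j'\le j}V_{j'}$ (a leader's degree bounds its own min-neighborhood-degree). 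Thus $N[a]\subseteq(\bigcup_{j\le i}V_j)\setminus(\bigcup_{j<i}L_j)$, a set of size $\sum_{j\le i}n_j-\sum_{j<i}|L_j|$. Applying Lemma~\ref{lemma:L_i} to replace $|L_j|$ by $\lceil\frac{n_j}{d_j+1}\rceil$ and using $|N[a]|=d_i+1$ yields $d_i+1\le n_i+\sum_{j<i}\lfloor\frac{n_jd_j}{d_j+1}\rfloor$; substituting $n_i-1=\lfloor\frac{n_id_i}{d_i+1}\rfloor$ gives exactly the desired inequality.

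The step I expect to be delicate is the counting when $n_i\le d_i$: the naive estimate $|N[a]|\le\sum_{j\le i}n_j-\sum_{j<i}|L_j|$ carries a full $n_i$ rather than $\lfloor\frac{n_id_i}{d_i+1}\rfloor$, because $N[a]$ may contain several leaders of $L_i$ itself (for instance $a$ together with another adjacent degree-$d_i$ leader), so one cannot also subtract $|L_i|$. The case split is precisely what sidesteps this obstruction: when $n_i$ is large the isolated $i$-th floor term already exceeds $d_i$, and when $n_i\le d_i$ the term $\lfloor\frac{n_id_i}{d_i+1}\rfloor$ collapses to $n_i-1$, making the crude neighborhood bound tight enough. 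Everything else — the disjointness and containment of the leader sets and the floor/ceiling arithmetic — is routine.
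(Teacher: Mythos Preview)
Your proof is correct and follows essentially the same approach as the paper: both fix a witness $a\in L_i\cap V_i$ with $\deg(a)=\mindn(a)=d_i$, show that $N[a]\subseteq\bigl(\bigcup_{j\le i}V_j\bigr)\setminus\bigl(\bigcup_{j<i}L_j\bigr)$, apply Lemma~\ref{lemma:L_i} to bound the leader-set sizes, and then finish with the same case split on $n_i\le d_i$ versus $n_i\ge d_i+1$. The only cosmetic difference is that the paper derives the inequality $d_i+1\le n_i+\sum_{j<i}\lfloor n_jd_j/(d_j+1)\rfloor$ first and splits cases afterward, whereas you split cases up front.
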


\begin{proof}
Let $w$ be any vertex lying in $L_i\cap V_i$, so $\textit{min-deg}(w)=\deg(w)=d_i$. Recall for each $j<i$, vertices in the set $L_j$ have degree strictly less than $d_i$. Since $N[w]$ cannot contain vertices of degree less than $d_i$, thus for each $j<i$, $N[w]\cap L_j=\emptyset$. Also vertices in $V_{i+1}\cup\ldots \cup V_\ell$ cannot be adjacent to any vertex in $\{w\}\cup\big(\cup_{j=1}^{i-1}L_j\big)$, therefore, $N[w]$ as well as $\cup _{j=1}^{i-1}L_j$ are contained in union $\cup_{j=1}^i V_j$. We thus get,
\begin{align*}
d_{i}+1= |N[w]| &\leq \Big|\bigcup_{j=1}^i V_j\Big| - \Big| \bigcup _{j=1}^{i-1}L_j\Big|
= n_i + \sum_{j=1}^{i-1}(n_i-|L_j|)\leq n_i + \sum_{j=1}^{i-1}  \Big\lfloor \frac{n_jd_j}{d_j+1}\Big\rfloor,
\end{align*}
where the last inequality follows from Lemma~\ref{lemma:L_i}. If $n_i\leq d_i$, then $n_i-1=n_i - \big\lceil \frac{n_i}{d_i+1}\big\rceil =\big\lfloor \frac{n_id_i}{d_i+1}\big\rfloor$, and so $d_i \leq  \sum_{j=1}^{i}\big\lfloor \frac{n_jd_j}{d_j+1}  \big\rfloor$. If $n_i\geq d_i+1$, then the bound trivially holds since $\frac{n_i d_i}{d_i+1}\geq d_i$ which from the fact that~$d_i$ is integral implies $d_i\leq \big\lfloor \frac{n_id_i}{d_i+1}\big\rfloor$.
\end{proof}


\begin{theorem} \label{theorem:disjoint_L_i_V_i}
A sequence $\sigma=(n_\ell^{d_\ell}\ldots d_1^{n_1})$ is $\mindn$-realizable by a graph $G$ having disjoint leader-set $(L)$ and follower-set $(F)$ with respect to some leader function, if and only if, for each $i\in[1,\ell]$, $d_i  \leq \sum_{j=1}^i \big\lfloor \frac{n_jd_j}{d_j+1}  \big\rfloor$.
\end{theorem}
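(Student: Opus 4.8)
The plan is to prove the two implications separately, relying almost entirely on the two results that precede the statement, with Lemma~\ref{lemma:disjoint_L_i_V_i} doing the quantitative work and Theorem~\ref{theorem:construction-easy-instance} handling the construction.

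For the sufficiency direction I would simply invoke Theorem~\ref{theorem:construction-easy-instance}. That theorem asserts that any sequence satisfying $d_i \leq \sum_{j=1}^i \big\lfloor \frac{n_jd_j}{d_j+1}\big\rfloor$ for all $i$ is $\mindn$-realizable, and, crucially, that the realization produced by Algorithm~\ref{Algorithm:min-deg} comes equipped with a leader function under which $L\cap F=\emptyset$. Hence the condition is sufficient for a \emph{disjoint} leader-follower realization with nothing further to verify.

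The substance is the necessity direction, which I would reduce to Lemma~\ref{lemma:disjoint_L_i_V_i}. Suppose $G$ realizes $\sigma$ with a leader function for which $L\cap F=\emptyset$. By that lemma the inequality $d_i \leq \sum_{j=1}^i \big\lfloor \frac{n_jd_j}{d_j+1}\big\rfloor$ holds for every index $i$ with $L_i\cap V_i\neq\emptyset$, so it suffices to show that disjointness forces $L_i\cap V_i\neq\emptyset$ for each $i$. The key observation is that $L\cap F=\emptyset$ says precisely that every leader is its own leader: if $u\in L$ then $u\notin F$, and by the definition of $F$ this means $\leader(u)=u$. From here I would establish $L_i\subseteq V_i$. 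Take $u\in L_i$, say $u=\leader(v)$ for some $v\in V_i$. By definition of the leader function, $\deg(u)=\min\{\deg(w):w\in N[v]\}=\textit{min-deg}(v)=d_i$. Since $u\in L$ and $L\cap F=\emptyset$ we have $\leader(u)=u$, which forces $u$ to attain the minimum degree in its own closed neighborhood, i.e. $\textit{min-deg}(u)=\deg(u)=d_i$, so $u\in V_i$. Because $n_i\geq 1$ the set $V_i$ is nonempty, so each $v\in V_i$ contributes a leader and $L_i\neq\emptyset$; together with $L_i\subseteq V_i$ this yields $L_i\cap V_i=L_i\neq\emptyset$. Applying Lemma~\ref{lemma:disjoint_L_i_V_i} for every $i$ finishes the necessity direction.

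The one step requiring genuine care — and the only place the argument is subtle rather than bookkeeping — is the implication $\leader(u)=u \Rightarrow \textit{min-deg}(u)=\deg(u)$. Because ties are broken arbitrarily, the converse can fail (a vertex may realize the neighborhood minimum without being chosen as its own leader), so I must use the implication strictly in the direction above and avoid assuming that minimum-degree vertices are always self-leaders. Once that inference is stated correctly, the remaining manipulations are immediate from the definitions of $L_i$, $V_i$, $F$, and the leader function.
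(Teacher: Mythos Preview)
Your proof is correct and follows essentially the same approach as the paper: both directions invoke exactly the same two results (Theorem~\ref{theorem:construction-easy-instance} for sufficiency, Lemma~\ref{lemma:disjoint_L_i_V_i} for necessity), and the crux in both is the containment $L_i\subseteq V_i$ under the disjointness hypothesis. The only difference is presentational: the paper argues by contrapositive (a vertex in $L_i\setminus V_i$ would have $\deg(w)=d_i\neq\textit{min-deg}(w)$, making it simultaneously a leader and a follower), whereas you argue directly via $\leader(u)=u\Rightarrow\textit{min-deg}(u)=\deg(u)$; your version is arguably more careful in making the non-emptiness of $L_i$ explicit.
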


\begin{proof}
Let us suppose there exists a leader function over $G$ for which $L\cap F=\emptyset$, then for each $i\in [1,\ell]$, $L_i\subseteq V_i$. This is because if for some $i$, there exists $w\in L_i\setminus V_i$, then $\deg(w)=d_i\neq \textit{min-deg}(d_i)$, which implies that $w$ is a leader as well as a follower. Since $L_i\subseteq V_i$, by Lemma~\ref{lemma:disjoint_L_i_V_i}, $d_i  \leq \sum_{j=1}^i \big\lfloor \frac{n_jd_j}{d_j+1}  \big\rfloor$, for each $i\in [1,\ell]$. The converse claim follows from Theorem~\ref{theorem:construction-easy-instance}.
\end{proof}

\section{Realizing maximum neighborhood degree profiles}
\label{section:maxNdeg}

In this section, we provide a complete characterization of $\maxdn$  profiles.
For simplicity, we first discuss the uniform scenario of 
 $\sigma=(d^k)$.
Observe that a star graph $K_{1,d}$ is $\maxdn$ realization of the profile $(d^{d+1})$. 
We show in the following lemma that, by identifying together vertices in different copies of $K_{1,d}$, it is always possible to realize the profile $(d^{k})$, whenever $k\geq d+1$.

\begin{lemma}~\label{lemma:simple-case}
For any positive integers $d$ and $k$, the profile $\sigma=(d^{k})$ is $\maxdn$ realizable whenever $k\geq d+1$. Moreover, we can always compute in $O(k)$ time a connected realization that has an independent set, say $S$, of size $d$ such that all vertices in $S$ have degree at most $2$, and at least two vertices in $S$ have degree $1$.
\end{lemma}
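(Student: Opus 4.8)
The plan is to give an explicit construction, taking as a starting point the observation that the star $K_{1,d}$ realizes $(d^{d+1})$. The guiding principle is that a graph realizes $(d^k)$ precisely when every vertex has degree at most $d$ and lies in the closed neighborhood of some vertex of degree exactly $d$. I will therefore use vertices of degree exactly $d$ (call them \emph{centers}) as a connected dominating backbone, and attach pendant leaves of degree at most $2$ both to pad the vertex count up to $k$ and to furnish the required set $S$. A single such construction will simultaneously establish realizability (the first assertion) and the structural ``moreover'' part.

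First I would dispose of the base and degenerate cases. For $k=d+1$ I simply output $K_{1,d}$ and take $S$ to be its $d$ leaves, all of degree $1$. For $d=1$ the profile forces every vertex to have degree $1$ with a degree-$1$ neighbor, so the only connected realization is $K_2$ (the case $k=2$); hence I assume $d\ge 2$ henceforth. For $k\ge d+2$ and $d\ge 2$ I take $t=\lceil (k-2)/(d-1)\rceil\ge 2$ centers $c_1,\dots,c_t$ joined in a path $c_1-c_2-\cdots-c_t$, and raise each center to degree exactly $d$ by attaching private pendant leaves: the two endpoints receive $d-1$ private leaves each and each internal center receives $d-2$. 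This uses $t(d-1)+2$ vertices, which is at least $k$ by the choice of $t$. To pin the count to $k$ exactly, I then merge $r=t(d-1)+2-k$ pairs of private leaves of $c_1$ and $c_2$ into single shared leaves, each merge lowering the count by one and producing a degree-$2$ leaf adjacent to two centers. Minimality of $t$ forces $0\le r\le d-2$, so these merges exist and leave at least one degree-$1$ private leaf on each of $c_1,c_2$ (or on $c_1$ together with the untouched endpoint $c_t$ when $t\ge 3$). Finally I take $S$ to be any $d$ of the pairwise non-adjacent leaves, chosen to include two surviving degree-$1$ leaves.

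The verification is then routine. Each center has degree exactly $d$ and each leaf has degree at most $2\le d$, so the maximum degree is $d$; every center sees itself and every leaf sees an adjacent center, so $\maxdn(v)=d$ for all $v$, giving the profile $(d^k)$. The centers form a path and every leaf hangs off this backbone, so the graph is connected, and the leaves, being adjacent only to centers, form an independent set of size $k-t$ with every member of degree at most $2$ and at least two of degree $1$; thus $S$ can be selected as claimed. Since there are $t=O(k/d)$ centers and $O(k)$ leaves and the sum of degrees is $O(k)$, the whole construction runs in $O(k)$ time.

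The hard part will be the bookkeeping that simultaneously fixes the count at $k$ and preserves the structure of $S$: one must confirm that $r$ lies in $[0,d-2]$ (so the prescribed merges are available and two degree-$1$ leaves always survive) and that the leaf count $k-t$ is at least $d$ (so a set $S$ of size $d$ can be extracted). Both follow from $t=\lceil (k-2)/(d-1)\rceil$ after a short calculation, with the tightest instances being the small cases; I would in particular double-check $k=d+2$ (where $t=2$ and the two centers share $d-2$ leaves while each keeps one degree-$1$ leaf) and $d=2$ (where $r=0$ and the construction degenerates to the path $P_k$, whose two end vertices form $S$).
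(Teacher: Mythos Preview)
Your construction is essentially identical to the paper's: both build a caterpillar whose spine carries $\lceil(k-2)/(d-1)\rceil$ degree-$d$ centers, pad with pendant leaves to reach $t(d-1)+2$ vertices, and then merge $r\le d-2$ pairs of leaves between two centers to hit $k$ exactly; the paper merely phrases the two outermost leaves as extra path endpoints $s_0,s_{\alpha+1}$ rather than as private leaves of the end centers, which yields an isomorphic graph. Your explicit handling of the boundary cases $k=d+1$ and $d\in\{1,2\}$ and the verification that $r\in[0,d-2]$ and $k-t\ge d$ are exactly the bookkeeping the paper leaves implicit.
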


\begin{proof}
Let $\alpha$ be the smallest integer such that $k \leq 2 + \alpha(d-1)$. We first construct a caterpillar\footnote{A caterpillar is a tree in which all the vertices are within distance one of a central path.} $T$ as follows.
Take a path $P=(s_0,s_1,\ldots,s_\alpha,s_{\alpha+1})$ of length $\alpha+1$. Connect each internal vertex $s_i$ (here $i\in [1,\alpha]$) with a set of $d-2$ new vertices, so that the degree of $s_i$ is $d$. (See Figure~\ref{fig:caterpillar}). Note that the $\maxdn$ of each vertex $v\in T$ is $d$. 

\begin{figure}[!ht]
\centering
\begin{footnotesize}
\begin{tikzpicture}[scale=0.65]
\begin{scope}[every node/.style={circle,draw,minimum size=0.66cm,fill=black!5!White}]
\node (v0) at (0.5,0) {$s_0$};
\node (v1) at (4,0) {$s_1$};
\node (v2) at (9,0) {$s_2$};
\node (v3) at (14,0) {$s_3$};
\node (v4) at (17.5,0) {$s_4$};
\node
(u11) at (2.6,2.5) {$s^1_1$};
\node
(u12) at (4,2.5) {$s_1^2$};
\node (u13) at (5.4,2.5) {$s_1^3$};
\node
(u21) at (7.6,2.5) {$s^1_2$};
\node
(u22) at (9,2.5) {$s^2_2$};
\node (u23) at (10.4,2.5) {$s^3_2$};
\node (u31) at (12.5,2.5) {$s^1_3$};
\node (u32) at (14,2.5) {$s^2_3$};
\node (u33) at (15.5,2.5) {$s^3_3$};
\end{scope}
\draw (v0) -- (v1);
\draw (v1) -- (u11);
\draw (v1) -- (u12);
\draw (v1) -- (u13);
\draw (v1) -- (v2);
\draw (v2) -- (u21);
\draw (v2) -- (u22);
\draw (v2) -- (u23);
\draw (v2) -- (v3);
\draw (v3) -- (u31);
\draw (v3) -- (u32);
\draw (v3) -- (u33);
\draw (v3) -- (v4);
\draw[dotted,thick,<->] (u11) to [bend left=60] (u21);
\draw[dotted,thick,<->] (u12) to [bend left=60] (u22);
\end{tikzpicture}
\end{footnotesize}
\caption{A caterpillar for $d = 5$ and $\alpha = 3$.  If $k = 12$, then $r
  = 2$, and we merge
  (i) $s^1_1$ and $s^1_2$, and
  (ii) $s^2_1$ and $s^2_2$.
}
\label{fig:caterpillar}
\end{figure}
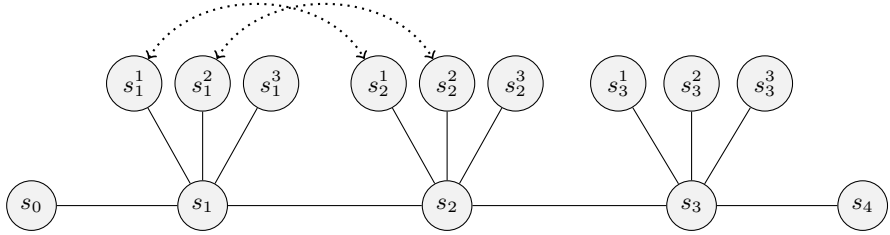

Now if $k = 2+\alpha(d-1)$, then $T$ serves as our required realizing graph. If $k < 2 + \alpha(d-1)$, then $\alpha \geq 2$ since $k \geq d+1$. The tree $T$ is ``almost'' a realizing graph for the profile, except that
it has too many vertices. Let $r = 2 + \alpha(d-1) - k$ denote the number of excess vertices in $T$ that need to be \emph{removed}. The $r$ vertices can be removed as follows.  Take any two distinct internal vertices $s_i$ and $s_j$ on $P$, and let $s^1_i,\ldots, s^{d-2}_i$ and $s^1_j,\ldots, s^{d-2}_j$, respectively, denote the neighbors of $s_i$ and $s_j$ not lying on $P$.  Let $G$ be the graph obtained by merging vertices $s^\ell_i$ and
$s^\ell_j$ into a single vertex for $\ell \in [1,r]$.  (See Figure~\ref{fig:caterpillar}).  Since the number of vertices was decreased by $r$, $G$ now contains exactly $n$ vertices.  The degree of vertices $s_1,s_2,\ldots,s_\alpha$ remains $d$, and the degree of all other vertices is at most $2$, therefore $\maxdn(v)=d$ for each $v\in G$, so $G$ is a realization of the profile $\sigma$.

Finally, in the resultant graph $G$, the end points of $P$ (i.e. $s_0$ and $s_{\alpha+1}$) have degree $1$, and there are $d-2$ other vertices, namely $s^1_i, \ldots, s^{d-2}_i$ (or $s^1_j, \ldots, s^{d-2}_j$), that have degree bounded by $2$. Therefore we set $S$ to these $d$ vertices. It is easy to verify that $S$ is indeed an independent set.
\end{proof}


\subsection{An incremental procedure for computing \maxdn realizations}

We explain here our main building block, procedure \addlayer, that  will be useful in  incrementally building graph realizations in a decreasing order of maximum degrees. Given a partially computed connected graph $H$ and integers $d$ and $k$ satisfying $d\geq 2$ and $k\geq 1$, the procedure adds to $H$ a set $W$ of $k$ new vertices such that $\maxdn(w)=d$, for each $w\in W$. The reader may assume that $\maxdn(v)\geq d$, 
for each existing vertex $v \in V(H)$. The procedure takes in as an input a sufficiently large vertex list $L$ (of size $d-1$) that forms an independent set in $H$, and whose vertices have small degree (that is, at most $d-1$). Moreover, in order to accommodate its iterative use, each invocation of the procedure also generates and outputs a \emph{new} list, to be used in the further iterations.

\vspace{-2mm}
\subparagraph{Procedure \addlayer}
The input to procedure \addlayer($H,L,k,d$) is a connected graph $H$ and a list $L=(a_1,\ldots,a_{d-1})$ of vertices in $H$ whose degree is bounded above by $d-1$. The first step is to add to $H$ a set of $k$ new vertices $W =\{w_1,w_2,\ldots,w_k\}$.  Next, the new vertices are connected to the vertices of $L$ and to themselves so as to ensure that $\maxdn(w) = d$ for every $w\in W$.  Depending upon whether or not $k <d$, there are two separate cases. (Refer to Algorithm~\ref{algo:Add-layer} for pseudocode).

Let us first consider the case $k \leq d-1$. In this case we add edges from vertices in $W$ to a subset of vertices from $L$ such that those vertices in $L$ will have degree $d$ and therefore will imply $\maxdn(w) =  d$, for every $w \in W$. We initialize two variables, $count$ and $i$, respectively, to $k$ and $d-1$. The variable $count$ holds, at any instant of time, the number of vertices in $W$ that still need to be connected to vertices in $L$. While $count >0$, the procedure performs the following steps:~
\begin{inparaenum}[(i)]
\item compute $r=\min\{d-\deg(a_i),count\}$, the maximum number of vertices in $W$ that can be connected to vertex~$a_i$;
\item connect $a_{i}$ to following $r$ vertices in $W$: $w_{count-(r-1)},w_{count-(r-2)},\ldots,w_{count-1},w_{count}$; and
\item decrease $count$ by $r$, and $i$ by $1$.
\end{inparaenum}

When $count=0$, the vertices $a_i,a_{i+1},\ldots,a_{d-1}$ are connected to at least one vertex in $W$ (this implies $d-i\leq k$). It is also easy to verify that at this stage, $\deg(a_{d-1})=\deg(a_{d-2})=\cdots=\deg(a_{i+1})=d$, and $\deg(a_i)\leq d$. Since the input graph $H$ was connected, in the beginning of the execution $\deg(a_i) \ge 1$, and by connecting $a_i$ to at least one vertex in $W$, specifically to $w_1$, its degree is increased at least by one.  So at most $d-2$ edges need to be added to $a_i$ to ensure that its degree is exactly $d$.  The procedure performs the following operation for each $j\in [d-1,d-2,\ldots,2,1]$ (in the given order) until $\deg(a_i)=d$:~
\begin{inparaenum}[(i)]
\item if $j<i$ then add edge $(a_j,a_i)$ to $H$, and
\item if $j>i$ then add an edge between $a_i$ and an arbitrary
  neighbor of $a_j$ lying in $W$.
\end{inparaenum}
Since $\deg(a_i)=\deg(a_{i+1})=\cdots=\deg(a_{d-1})=d$, and $\deg(w)\le 2$ for every $w \in W$, it follows that $\maxdn(w) =  d$, for each $w \in W$. In the end, we set a \emph{new} list $L$ containing the first $d-2$ vertices in the sequence $(w_1,w_2,\ldots, w_k, a_{1}, a_2, \ldots, a_{i-1})$.  This is possible since $k+i-1\geq d-2$ due to the fact that $d-i\leq k$.  (Later on we bound the degrees of the vertices in the new list.)

\begin{algorithm}[!ht]
\SetProcNameSty{textsc}
\caption{\addlayer($H,L,k,d$)\label{algo:Add-layer}}
Let the list $L$ be $(a_1,a_2,\ldots,a_{d-1})$.\\
Add to $H$ a set $W=\{w_1,\ldots,w_k\}$ of $k$ new vertices.\\
\uCase{$(k< d)$}
{
Set $count=k$ and $i=d-1$.\\
\While{$(count\neq 0)$}
{
Let $r=\min\{d-\deg(a_i),count\}$.\\
Add edges $(a_{i},w_{count-t})$ to $H$ for $t\in [0,r-1]$.\\
Decrement $i$ by $1$ and $count$ by $r$.\\
}
\ForEach{$j\in [d-1,\ldots,2,1]$}{
If $\deg(a_i)=d$ then break the for loop.\\
If $(j<i)$ then add edge $(a_j,a_i)$ to $H$.\\
If $(j>i)$ then add an edge between $a_i$ and an arbitrary vertex in $N(a_j)\cap W$.\\
}
Set $L$ to be prefix of $(w_1,w_2,\ldots, w_k, a_{1}, a_2, \ldots, a_{i-1})$ of size $d-2$.\\
}
\Case{$(k\geq d)$}{
Use Lemma~\ref{lemma:simple-case} to compute over independent set $(W\cup \{a_1\})$ 
the graph, say $\bar H$, realizing the profile $(d^{k+1})$ such that $\deg_{\bar H}(a_1)=1$.\\ 
Add edges between $a_1$ and any arbitrary $d-\deg(a_{1})$ vertices in set $\{a_{2},a_{3},\ldots,a_{d-1}\}$.\\
Let $b_1,\ldots,b_{d-1}\in \bar H\setminus a_1$
be such that $1 = \deg_{\bar H}(b_1) \leq
\cdots\leq \deg_{\bar H}(b_{d-1})\leq 2$.\\
Set $L=(b_1,b_2,\ldots,b_{d-2})$.\\
}
Output $L$.\\
\end{algorithm}
\begin{algorithm}[!ht]
\caption{$\maxdn$ realization of $\sigma=(d_\ell^{n_\ell},\ldots, d_1^{n_1})$
\label{algo:MaxNDeg}}
\Input{A sequence $\longsigma$ satisfying $d_\ell\leq n_\ell-1$ and $d_1\geq 2$.}
\BlankLine
Initialize $G$ to be the graph obtained from Lemma~\ref{lemma:simple-case}
that realizes the profile $(d_\ell^{n_\ell})$.\\
Let $L_{\ell-1}$ be a valid list in $G$ of size $d_{\ell-1}-1$.\\
\For{$(i=\ell-1$ to $1)$}{
$L_{i-1}\gets$ \addlayer$(G, L_{i},n_i, d_i)$.\\
Truncate list $L_{i-1}$ to contain only the first $d_{i-1}-1(\leq d_i -2)$ vertices.\\
}
Output $G$.
\end{algorithm}

Now we consider the case $k \geq d$. The procedure uses Lemma~\ref{lemma:simple-case} to compute over the independent set $W \cup \{a_1\}$ a graph $\bar{H}$ realizing the profile $(d^{k+1})$ such that $\deg_{\bar H}(a_1) = 1$.  Notice that in the beginning of the execution, $\deg(a_1) \in [1,d-1]$, and it is increased by one by adding $\bar{H}$ over the set $W \cup \{a_1\}$. So now $\deg(a_1)\in [2,d]$.  To ensure $\deg(a_1) = d$, at most $d-2$ more edges need to be added to $a_1$.  Edges are added between $a_1$ and any arbitrary $d - \deg(a_{1})$ vertices in set $\{a_{2},a_{3},\ldots,a_{d-1}\}$.  This ensures that every $w \in W$ has $\maxdn(w) = d$. By Lemma~\ref{lemma:simple-case}, $\bar{H} \setminus \{a_1\}$ contains an independent set of $d-1$ vertices, say $b_1,\ldots,b_{d-1}$, such that $1 = \deg_{\bar H}(b_1) \leq \deg_{\bar H}(b_2)\leq \cdots\leq \deg_{\bar H}(b_{d-1})\leq 2$.  In the end, the procedure creates a \emph{new} list $L=(b_1,b_2,\ldots,b_{d-2})$.  

For sake of better understanding, in the rest of paper, we denote by $\oldH,\oldL$ and $\newH,\newL$ respectively the graph and the list before and after the execution of Procedure \addlayer. Observe that $V(\newH) = V(\oldH) \cup W$.

The following two lemmas follow from the description of algorithm.

\begin{lemma} \label{lemma:layer1}
Each $w\in W$ satisfies $\maxdn(w)=d$, and $N(w)\subseteq W\cup \oldL$.
\end{lemma}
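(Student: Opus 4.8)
The plan is to verify the two assertions directly from the two branches of procedure \addlayer, treating the cases $k\le d-1$ and $k\ge d$ separately. The inclusion $N(w)\subseteq W\cup\oldL$ is essentially a bookkeeping statement: I would inspect every edge the procedure creates and check that each edge incident to a vertex of $W$ has its other endpoint in $W\cup L$, where $L=\oldL=(a_1,\ldots,a_{d-1})$. In the branch $k\le d-1$ every edge touching $W$ has the form $(a_j,w)$ — added either in the \texttt{while} loop, or in the subsequent \texttt{for} loop when $a_i$ is joined to a $W$-neighbor of some $a_j$ — so in fact $N(w)\subseteq L=\oldL$. In the branch $k\ge d$, all edges among $W\cup\{a_1\}$ come from the realization $\bar H$ of $(d^{k+1})$ produced by Lemma~\ref{lemma:simple-case}, and the only further edges join $a_1$ to $\{a_2,\ldots,a_{d-1}\}$ and hence avoid $W$; thus $N(w)\subseteq W\cup\{a_1\}\subseteq W\cup\oldL$.

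For the degree claim I would recall that $\maxdn(w)$ denotes the maximum degree over the closed neighborhood $N[w]$. In the branch $k\le d-1$ the argument rests on the fact, already extracted from the description of \addlayer, that after both loops each of $a_i,a_{i+1},\ldots,a_{d-1}$ has degree exactly $d$, while every $w\in W$ satisfies $\deg(w)\le 2$. Every neighbor of $w$ lies in $\{a_i,\ldots,a_{d-1}\}$, so all neighbors of $w$ have degree $d$; combined with $\deg(w)\le 2\le d$ this gives $\maxdn(w)=d$. Here one also needs that the \texttt{while} loop attaches each $w$ to at least one $a_j$, so that the degree-$d$ witness genuinely appears in $N[w]$.

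The branch $k\ge d$ is where I expect the only real subtlety, and it is the step I would treat most carefully. There $\bar H$ realizes $(d^{k+1})$ on the vertex set $W\cup\{a_1\}$, so $\maxdn_{\bar H}(w)=d$ for every $w\in W$; equivalently, every vertex of $N_{\bar H}[w]$ has degree at most $d$ and at least one has degree exactly $d$. The task is to show that passing from $\bar H$ to the final graph $\newH$ preserves this. The only edges added afterwards are incident to $a_1$ and to $a_2,\ldots,a_{d-1}$, and none is incident to a vertex of $W$; hence $N_{\newH}[w]=N_{\bar H}[w]$, and within this set the only vertex whose degree changes is $a_1$, whose degree rises from $1$ to exactly $d$. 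The point that must be checked is precisely that no degree in $N[w]$ exceeds $d$: the degree increases of $a_2,\ldots,a_{d-1}$ are irrelevant because none of these vertices lies in $N[w]$, and $a_1$ ends at degree exactly $d$. Finally, since $\deg_{\bar H}(a_1)=1\ne d$, the degree-$d$ witness guaranteed inside $N_{\bar H}[w]$ is a vertex other than $a_1$, whose degree is untouched; therefore $\maxdn_{\newH}(w)=d$, which would complete the verification.
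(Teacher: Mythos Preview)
Your proposal is correct and follows precisely the route the paper intends: the paper itself offers no detailed argument, stating only that the lemma ``follows from the description of the algorithm,'' and your case analysis over the two branches of \addlayer is exactly the verification that claim invites. The one subtlety you flag in the $k\ge d$ branch---that the degree-$d$ witness in $N_{\bar H}[w]$ cannot be $a_1$ (since $\deg_{\bar H}(a_1)=1$) and therefore survives unchanged in $\newH$, while $a_1$ itself is raised to degree exactly $d$ so nothing in $N[w]$ overshoots---is handled cleanly.
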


\begin{lemma} \label{lemma:layer2}
Each $a\in \oldL\setminus \newL$ satisfies $\deg_{H_{new}}(a)\leq d$, and each $a\in \oldL\cap \newL$ satisfies $\deg_{H_{new}}(a)\leq \deg_{H_{old}}(a)+1$.
\end{lemma}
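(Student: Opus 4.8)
The plan is to prove the two inequalities by reading off the behavior of Procedure \addlayer in its two branches, $k\leq d-1$ and $k\geq d$, and in each branch tracking precisely which vertices of $\oldL$ receive new edges and how many. Throughout I will use the input hypothesis that every $a_j\in\oldL$ satisfies $\deg_{\oldH}(a_j)\leq d-1$, together with the structural facts about $\newL$ already recorded in the procedure's description.

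First I would handle the case $k\leq d-1$. Here the while loop touches only the suffix $a_{d-1},a_{d-2},\ldots,a_i$, raising $a_{d-1},\ldots,a_{i+1}$ to degree exactly $d$ and leaving $\deg(a_i)\leq d$; the subsequent for loop raises $a_i$ to degree exactly $d$ and, for indices $j<i$, adds at most the single edge $(a_j,a_i)$ to $a_j$, while for indices $j>i$ it attaches $a_i$ to a $W$-neighbor of $a_j$ and leaves $a_j$ untouched. Since the new list is a prefix of $(w_1,\ldots,w_k,a_1,\ldots,a_{i-1})$, the vertices $a_i,\ldots,a_{d-1}$ all lie in $\oldL\setminus\newL$ and have final degree exactly $d$, which gives the first bound for them. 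Each $a_j$ with $j<i$ — whether or not it ends up in $\newL$ — receives at most one new edge, so $\deg_{\newH}(a_j)\leq\deg_{\oldH}(a_j)+1\leq d$; this is exactly the second bound when $a_j\in\oldL\cap\newL$, and it implies the first bound when $a_j\in\oldL\setminus\newL$.

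Next I would handle the case $k\geq d$. The decisive observation is that the new list $\newL=(b_1,\ldots,b_{d-2})$ is drawn entirely from $\bar H\setminus\{a_1\}=W$, i.e.\ from the freshly added vertices. Hence $\oldL\cap\newL=\emptyset$, so the second claim is vacuous, and $\oldL\setminus\newL=\oldL=\{a_1,\ldots,a_{d-1}\}$. It remains to bound these $d-1$ degrees by $d$: the vertex $a_1$ is explicitly brought up to degree exactly $d$, while each of $a_2,\ldots,a_{d-1}$ receives at most one edge from $a_1$ (the procedure joins $a_1$ to $d-\deg(a_1)\leq d-2$ distinct vertices among them), so $\deg_{\newH}(a_j)\leq\deg_{\oldH}(a_j)+1\leq d$.

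The only real obstacle is the bookkeeping in the case $k\leq d-1$: one must verify that no $a_j$ with $j<i$ can acquire two new edges during the for loop. This holds because the loop processes each index $j$ at most once, the branch that touches $a_j$ (namely $j<i$) adds exactly the one edge $(a_j,a_i)$, and the complementary branch ($j>i$) modifies only $a_i$ and a vertex of $W$, leaving $a_j$ itself unchanged; the vertices $a_{i+1},\ldots,a_{d-1}$ therefore retain the degree $d$ they attained in the while loop. Everything else is a direct transcription of the pseudocode.
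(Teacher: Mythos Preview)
Your proposal is correct and follows exactly the approach the paper intends: the paper simply asserts that the lemma ``follows from the description of the algorithm,'' and you have carried out precisely that verification, case-splitting on $k\leq d-1$ versus $k\geq d$ and tracking which $a_j$ receive new edges. Your bookkeeping in both branches is accurate, including the key point that in the for loop each $a_j$ with $j<i$ receives at most the single edge $(a_j,a_i)$ while the $j>i$ branch leaves $a_j$ untouched.
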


It is also easy to verify that the total execution time of Procedure \addlayer is $O(k+d)$.

\subparagraph{The Inheritance Property} 
Till now, we showed that given an independent list of $d-1$ vertices of degree at most $d-1$ in a graph $H$, we can add $k\geq 1$ vertices to $H$ such that the $\maxdn$ of these $k$ vertices is $d$.  In order to iteratively use this algorithm to add vertices of smaller $\maxdn$ values $(\lneq d)$ we require that the list $\newL$ computed by Procedure \addlayer should satisfy following three constraints:~
\begin{inparaenum}[(i)]
\item The size of $\newL$ should be $d-2$;
\item the vertices of $\newL$ should form an independent set; and most importantly,
\item the vertices in $\newL$ should have degree at most $d-2$.
\end{inparaenum}

In order to ensure these constraints on $\newL$, we further impose the constraint that the list $\oldL$ is a valid list; this is formally defined as below. 


\begin{definition}[{\small Valid List}]
\label{definition:valid}
A list $L=(a_1,a_2,\ldots,a_t)$ in a graph $G$ is said to be ``valid''
with respect to $G$ if the following two conditions hold:
\begin{inparaenum}[(i)]
\item for each $i\in [1,t]$, $\deg(a_i)\leq i$, and
\item the vertices of $L$ form an independent set in $G$.
\end{inparaenum}
\end{definition}

We next prove the inheritance property of our procedure.

\begin{lemma}[Inheritance property] \label{lemma:valid}
If the input list $\oldL$ in Procedure \addlayer is valid, then the output list $\newL$ is valid as well.
\end{lemma}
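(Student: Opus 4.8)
The plan is to verify directly the two defining conditions of Definition~\ref{definition:valid} for the output list $\newL$ (ordered degree bound $\deg(c_j)\le j$ and independence), together with the size bound $|\newL|=d-2$, by treating the two branches of Procedure \addlayer separately. Throughout I write $\oldL=(a_1,\ldots,a_{d-1})$ and use that validity of $\oldL$ supplies both $\deg_{\oldH}(a_j)\le j$ and independence of the $a_j$.

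I would dispatch the branch $k\ge d$ first, as it is the easy one. Here $\newL=(b_1,\ldots,b_{d-2})$ is drawn from the realization $\bar H$ of $(d^{k+1})$ built on $W\cup\{a_1\}$, so the size is immediately $d-2$. By Lemma~\ref{lemma:simple-case} the vertices $b_1,\ldots,b_{d-1}$ form an independent set with $\deg_{\bar H}(b_1)=1$ and $\deg_{\bar H}(b_j)\le 2$; since all $b_j$ lie in $W$ and the only edges inserted after $\bar H$ is built run between $a_1$ and $\{a_2,\ldots,a_{d-1}\}$, these edges neither touch the $b_j$ nor affect their adjacencies. Hence $\deg_{\newH}(b_1)=1\le 1$ and $\deg_{\newH}(b_j)\le 2\le j$ for $j\ge 2$, and independence is preserved, giving validity.

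The branch $k\le d-1$ carries the real work: here $\newL$ is the length-$(d-2)$ prefix of $(w_1,\ldots,w_k,a_1,\ldots,a_{i-1})$, where $i$ is the terminal index of the counting loop. I would first record that the procedure guarantees $d-i\le k$, so $k+(i-1)\ge d-1\ge d-2$ and the prefix is well defined of size exactly $d-2$; in particular every trailing $a$-entry $a_m$ satisfies $m\le d-2-k<d-k\le i$. For independence I would note three facts: the $w_j$ are pairwise non-adjacent, since in this branch they are only joined to the vertices $a_i,\ldots,a_{d-1}$ and never to one another; the set $\{a_1,\ldots,a_{i-1}\}$ remains independent because every new edge incident to a low-index $a$ has the form $(a_j,a_i)$ with $j<i$ and hence shares the common endpoint $a_i\notin\newL$; and no edge joins a list-$w$ to a list-$a$, because each such $w$ is adjacent only to $a$'s of index $\ge i$ while each list-$a$ has index $<i$.

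The crux is the ordered degree bound. The key observation is that $w_1$, being the lowest-indexed new vertex, is assigned in the counting loop to the last (lowest-index) processed vertex $a_i$, giving it degree one; the subsequent pump-up step only raises the degree of $w$'s that are neighbors of some $a_j$ with $j>i$, so $w_1\in N(a_i)$ is never touched again and $\deg_{\newH}(w_1)=1\le 1$. By Lemma~\ref{lemma:layer1} every other $w_j$ has $\deg_{\newH}(w_j)\le 2\le j$. For a trailing entry $a_m$ (which lies in $\oldL\cap\newL$ since $m<i$) Lemma~\ref{lemma:layer2} gives $\deg_{\newH}(a_m)\le\deg_{\oldH}(a_m)+1\le m+1\le k+m$, and $a_m$ sits in position $k+m$ of $\newL$; as $k\ge 1$ this is exactly the required bound. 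Assembling the three parts yields validity of $\newL$. I expect the main obstacle to be bookkeeping in the $k\le d-1$ branch, namely pinning down which original vertex each $w_j$ and $a_m$ corresponds to and verifying $\deg_{\newH}(w_1)=1$, rather than any conceptual difficulty.
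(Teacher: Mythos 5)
Your proof is correct and follows essentially the same route as the paper's: the same case split on $k<d$ versus $k\ge d$, the same identification of $(w_1,\ldots,w_k,a_1,\ldots,a_{i-1})$ as an independent set with $\deg(w_1)=1$, the other $w_j$ of degree at most $2$, and the retained $a_m$ compensating its degree increase of at most one (Lemma~\ref{lemma:layer2}) by sitting at position $k+m\ge m+1$. The only nit is that the bound $\deg_{\newH}(w_j)\le 2$ follows from the construction of \addlayer rather than from Lemma~\ref{lemma:layer1} as cited, but the fact itself is sound and the paper asserts it in the same way.
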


\begin{proof}
We first consider the case $k\leq d-1$. Let $i$ be the smallest index such that vertices $a_i,a_{i+1},\ldots,a_{d-1}$ are adjacent to some vertex of $W$ in $\newH$. (That is, $i$ is the index when Procedure \addlayer exits the while loop).  Recall that in the graph $\newH$, $w_1 \in W$ is a neighbor of $a_i$.  Also, to increase the degree of $a_i$ to $d$, we connect $a_i$ to some/all vertices in $a_{1},\ldots,a_{i-1}$, and some/all neighbors of $a_{i+1},\ldots,a_{d-1}$ lying in $W$.  Therefore the vertex set $W\cup \{a_1,\ldots,a_{i-1}\}$ is independent in $\newH$.  Also, its size at least $d-1$, as we showed that $k\geq d-i$. Since the list $\oldL=(a_1,a_2,\ldots,a_{d-1})$ is valid in the beginning of the execution of Procedure \addlayer, it follows that in $\oldH$, $\deg(a_j)\leq j$ for $j\in [1,d-1]$.  So by Lemma~\ref{lemma:layer2}, in $\newH$, (i) $\deg(a_j)\leq j+1$ for $j\in [1,i-1]$, (ii) $\deg(w_1)=1$, and (iii) the degree of each other vertex in $W\setminus w_1$ is at most $2$.  Consequently, $(w_1,\cdots,w_k)$ is a valid list of length at least $d-i\geq 1$. Since $\deg(a_j)\leq j+1$ for $j\in [1,i-1]$, the list $(w_1,\cdots,w_k,a_1,\ldots,a_{i-1})$ is valid and has length at least $d-1$.  Truncating this to length $d-2$ again gives us a valid list.

We now consider the case $k\ge d$. By Lemma~\ref{lemma:simple-case}, $H[W \cup \{a_1\}] = \bar{H}$ contains an independent set $\{b_1,b_2,\ldots,b_{d-1}\} \subseteq W$ such that $\deg(b_1)=1$ and $\deg(b_j)\leq 2$ for $j\in[2,d-1]$.  Therefore, $(b_1,b_2,\ldots,b_{d-2})$ is a valid list of length $d-2$ in $\newH$.
\end{proof}

The following proposition summarizes the above discussion.

\begin{proposition} 
For any integers $d\geq 2$, $k\geq 1$, and any connected graph $H$ containing a valid list $L$ of size $d-1$, procedure $\addlayer$ adds to $H$ in $O(k+d)$ time, a set $W$ of $k$ new vertices such that $\maxdn(w)=d$, for every $w\in W$. All the edges added to $H$ lie in $W\times (W\cup L)$. Moreover, $\deg_H(a)\leq d$, for every $a\in L$, and the updated graph remains connected and contains a new valid list of size $d-2$.
\label{proposition:valid}
\end{proposition}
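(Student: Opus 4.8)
The plan is to read this proposition as an omnibus summary of the behavior of Procedure \addlayer and to assemble it from the lemmas already proved for that procedure, since essentially every clause has been isolated above. The only assertion not stated verbatim in a prior lemma is the preservation of connectivity, which I would verify directly from the construction. I do not anticipate a genuine obstacle; the work is bookkeeping, and the only point demanding care is to read the edge-location clause correctly.

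First I would dispatch the two claims about the freshly added set $W$. The equality $\maxdn(w)=d$ for every $w\in W$ is exactly Lemma~\ref{lemma:layer1}, and that same lemma gives $N(w)\subseteq W\cup\oldL=W\cup L$, so every edge incident to $W$ that the procedure inserts stays inside $W\times(W\cup L)$. The remaining edges the procedure ever adds are the edges $(a_j,a_i)$ with $j<i$ used in the branch $k\le d-1$ to lift $\deg(a_i)$ to $d$; these have both endpoints in $L$, so in all cases no inserted edge touches a vertex of $V(\oldH)\setminus L$, i.e.\ the added edges stay within $W\cup L$. For the running time I would simply inspect the two branches: when $k\le d-1$ the while-loop and the subsequent for-loop each run $O(d)$ iterations and add $O(k+d)$ edges, while when $k\ge d$ the call to Lemma~\ref{lemma:simple-case} builds $\bar H$ in $O(k)$ time and then $O(d)$ further edges are inserted at $a_1$; the total is $O(k+d)$.

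Next I would bound the degrees of the input-list vertices. For $a\in L=\oldL$, Lemma~\ref{lemma:layer2} splits into two cases: if $a\in\oldL\setminus\newL$ then $\deg_{\newH}(a)\leq d$ directly, and if $a\in\oldL\cap\newL$ then $\deg_{\newH}(a)\leq \deg_{\oldH}(a)+1$. In the latter case I would invoke the validity of $\oldL$: writing $a=a_j$, validity forces $\deg_{\oldH}(a_j)\leq j\leq d-1$, hence $\deg_{\newH}(a_j)\leq d$. Thus $\deg_H(a)\leq d$ for every $a\in L$ in both cases. For connectivity I would argue per branch: when $k\le d-1$ the while-loop exits only once $count=0$, meaning every vertex of $W$ has been joined to some $a_i\in L\subseteq V(H)$, so each new vertex attaches to the connected graph $H$ and $\newH$ remains connected; when $k\ge d$, Lemma~\ref{lemma:simple-case} yields a connected $\bar H$ on $W\cup\{a_1\}$, and since $a_1\in V(H)$ with $H$ connected, gluing $\bar H$ at $a_1$ preserves connectivity.

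Finally, the existence of a new valid list of size $d-2$ is precisely the inheritance property, Lemma~\ref{lemma:valid}, applied to the valid input list $L$. Since each clause is either a restatement of an already-proved lemma or a one-line direct check, the proof is pure assembly. The subtle point I would flag is the edge-location clause: because the $k\le d-1$ branch inserts the $L\times L$ edges $(a_j,a_i)$, I would phrase this clause as ``no added edge leaves $W\cup L$,'' with the stronger membership in $W\times(W\cup L)$ holding for all edges incident to $W$ by Lemma~\ref{lemma:layer1}.
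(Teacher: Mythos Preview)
Your proof is correct and mirrors the paper exactly: Proposition~\ref{proposition:valid} is presented there without a separate argument, explicitly as a summary of Lemmas~\ref{lemma:layer1}--\ref{lemma:valid} together with the running-time remark. Your flag on the edge-location clause is sharp---the $(a_j,a_i)$ edges inserted in the $k\le d-1$ branch lie in $L\times L$ rather than $W\times(W\cup L)$, so your weaker reading (no added edge leaves $W\cup L$) is the accurate statement and is precisely what the downstream application in Lemma~\ref{lemma:mdcn-algo-correct} uses.
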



\vspace{-2mm}
\subsection{The main algorithm}
We now present the main algorithm for computing the realizing graph
using Procedure \addlayer. 

Let $\sigma=(d_\ell^{n_\ell}, \cdots, d_1^{n_1})$ be any profile satisfying $d_\ell \leq n_\ell - 1$ and $d_1\geq 2$.  The construction of a connected graph realizing $\sigma$ is as follows (refer to Algorithm~\ref{algo:MaxNDeg} for pseudocode).  We first use Lemma~\ref{lemma:simple-case} to initialize $G$ to be the graph realizing the profile $(d_\ell^{n_\ell})$. Recall $G$ contains an independent set, say $W=\{w_1,w_2,\ldots,w_{d_\ell}\}$, satisfying the
condition that the degree of the first two vertices is one, and the degree of the remaining vertices is at most two. Set $L_{\ell-1} =(w_1,w_2,\ldots,w_{d_{\ell-1}-1})$ (notice that $d_{\ell-1}-1\leq d_\ell$).  It is easy to verify that this list is valid. Next, for each $i=\ell-1$ to $1$, perform the following steps:%
\begin{compactenum}[(i)]
\item Taking as input the valid list $L_{i}$ of size $d_{i}-1$, execute Procedure \addlayer$(G, L_{i},n_i, d_i)$ to add $n_i$ new vertices to $G$.  The procedure returns a valid list $L_{i-1}$ of size $d_i-2$.
\item Truncate the list $L_{i-1}$ to contain only the first $d_{i-1}-1(\leq d_i -2)$ vertices.  The truncated list remains valid since any prefix of a valid list is valid.
\end{compactenum}

\vspace{-2mm}
\subparagraph{Proof of Correctness}
Let $V_\ell$ denote the set of vertices in graph $G$ initialized in step 1, and for $i\in [1,\ell-1]$, let $V_i$ denote the set of $n_i$ new vertices added to graph $G$ in iteration $i$ of the for loop. Also for $i\in [1,\ell]$, let $G_i$ be the graph induced by vertices $V_i\cup \cdots \cup V_\ell$. The following lemma proves the correctness.

\begin{lemma} \label{lemma:mdcn-algo-correct}
For any $i\in [1,\ell]$, graph $G_i$ is a $\maxdn$ realization of profile  $(d_\ell^{n_\ell},\cdots,d_i^{n_i})$, and for any $j\in [i,\ell]$ and any $v\in V_j$, $\deg_{G_i}(v) \leq \maxdn_{G_i}(v) = d_j$.
\end{lemma}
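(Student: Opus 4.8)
The plan is to prove both assertions by downward induction on $i$, from $i=\ell$ down to $i=1$, mirroring the order in which Algorithm~\ref{algo:MaxNDeg} constructs the layers $V_\ell, V_{\ell-1}, \ldots, V_1$. For the base case $i=\ell$, the graph $G_\ell$ is exactly the graph produced by Lemma~\ref{lemma:simple-case} realizing $(d_\ell^{n_\ell})$, so every $v\in V_\ell$ has $\maxdn_{G_\ell}(v)=d_\ell$; since that construction has maximum degree $d_\ell$, we also get $\deg_{G_\ell}(v)\le d_\ell$, establishing both claims.

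For the inductive step I would assume the statement for $G_{i+1}$ and analyze the single call $\addlayer(G_{i+1}, L_i, n_i, d_i)$ that produces $G_i$, where $L_i$ is a valid list of size $d_i-1$ (validity holds throughout by Lemma~\ref{lemma:valid} together with the truncation step). I first dispatch the new vertices $v\in V_i=W$: Proposition~\ref{proposition:valid} (via Lemma~\ref{lemma:layer1}) gives $\maxdn_{G_i}(v)=d_i$, and inspecting the two cases of \addlayer shows $\deg_{G_i}(v)\le d_i$ (degree at most $2\le d_i$ when $k<d_i$, and at most $d_i$ from the Lemma~\ref{lemma:simple-case} realization when $k\ge d_i$), so $\deg_{G_i}(v)\le\maxdn_{G_i}(v)=d_i$ as required.

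The crux is the old vertices $v\in V_j$ with $j>i$, where I must show that appending the layer neither raises nor lowers $\maxdn(v)$ away from $d_j$. The key structural facts are that all edges added by \addlayer lie in $W\times(W\cup L_i)$, that modified list vertices satisfy $\deg_{G_i}(a)\le d_i$ by Lemma~\ref{lemma:layer2}, and that every $w\in W$ has $\deg_{G_i}(w)\le d_i$. Since $d_i<d_j$, two observations finish the argument: (i) any vertex whose degree changed, or that was newly introduced into $N_{G_i}[v]$, has degree at most $d_i<d_j$, so it cannot push $\maxdn_{G_i}(v)$ above $d_j$; and (ii) the vertex $u^\star\in N_{G_{i+1}}[v]$ that attained degree $d_j$ in $G_{i+1}$ is \emph{not} a list vertex---because the validity of $L_i$ forces every list vertex to have degree at most $d_i-1<d_j$ in $G_{i+1}$---hence $\deg_{G_i}(u^\star)=d_j$ is preserved and $\maxdn_{G_i}(v)=d_j$ exactly. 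The same case split yields the degree bound: $\deg_{G_i}(v)=\deg_{G_{i+1}}(v)\le d_j$ if $v\notin L_i$, and $\deg_{G_i}(v)\le d_i<d_j$ if $v\in L_i$.

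Finally, counting layers gives the profile claim: $G_i$ has exactly $n_j$ vertices (namely those of $V_j$) with $\maxdn=d_j$ for each $j\in[i,\ell]$, and since the $d_j$ are distinct this is precisely a realization of $(d_\ell^{n_\ell},\ldots,d_i^{n_i})$. I expect step (ii) above---pinpointing which neighbor realizes the value $d_j$ and invoking list-validity to certify that its degree is untouched---to be the subtle point, since it is exactly where the \emph{valid list} invariant is needed to prevent the neighborhood maximum from drifting upward as new layers are attached.
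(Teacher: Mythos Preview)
Your proposal is correct and follows essentially the same downward induction as the paper's proof, with the same case split between new vertices $v\in V_i$ (handled via Proposition~\ref{proposition:valid}) and old vertices $v\in V_j$, $j>i$ (handled by bounding the degree of every neighbor by $d_j$). The only stylistic difference is that you explicitly isolate a witness $u^\star$ and use list-validity to argue its degree is preserved, whereas the paper simply concludes ``$\maxdn(v)$ remains unaltered'' after establishing the upper bound---implicitly relying on the monotonicity of degrees under edge addition to get the matching lower bound; both arguments are valid and equivalent in spirit.
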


\begin{proof}
We prove the claim by induction on the iterations of the for loop. The base case is for index $\ell$, and by
Lemma~\ref{lemma:simple-case} we have that $\deg_{G_\ell}(v) \leq \maxdn_{G_\ell}(v) = d_\ell$, for every $v \in V_\ell$. For the inductive step, we assume that the claim holds for $i+1$, and prove the claim for $i$. Consider any vertex $v$ in $G_i$.  We have two cases.
\begin{compactenum}
\item $v \in V_i~:$ In this case by Proposition~\ref{proposition:valid} we have that $\deg_{G_i}(v) \leq \maxdn_{G_i}(v) = d_i$.
\item $v \in V_j$, for $j > i~:$ We first show that for
  any vertex $w\in N_{G_i}[v]$, $\deg_{G_i}(w)\leq d_j$.  If $w\in V_i$,
  then we already showed $\deg_{G_i}(w) \leq d_i$. So let us consider
  the case $w\in V_{i+1}\cup\cdots\cup V_{\ell}$.  Now if $w \in L_i$
  participates in Procedure \addlayer$(G, L_{i},n_i, d_i)$, then by
  Proposition~\ref{proposition:valid}, in the updated graph
  $\deg_{G_i}(w)\leq d_i \lneq d_j$.  If $w\not\in L_i$, then the
  degree of $w$ is unaltered in the $i^{th}$ iteration, and thus
  $\deg_{G_i}(w)=\deg_{G_{i+1}}(w)\leq \maxdn_{G_{i+1}}(v) = d_j$ by the inductive
  hypothesis.  It follows that $\maxdn(v)$ remains unaltered due to
  iteration $i$, and thus $\maxdn_{G_i}(v)=\maxdn_{G_{i+1}}(v)=d_j$.
\end{compactenum}\vspace{-5.2mm}
\end{proof}

The execution time of the algorithm is $O\big(\sum_{i=1}^\ell (n_i+d_i)\big)$. This is also optimal.  Indeed, any connected graph realizing $\sigma$ must contain $\Omega(n_1+n_2+\cdots+n_\ell)$ edges as the degrees of all vertices must be non-zero. Also, the graph must contain at least one vertex of each of the degrees $d_1, d_2,\ldots,d_\ell$, and therefore must have $\Omega(d_1+d_2+\cdots+d_\ell)$ edges. In other words, any realizing graph must contain $\Omega\big(\sum_{i=1}^\ell (n_i+d_i)\big)$ edges, and thus the computation time must be at least $\Omega\big(\sum_{i=1}^\ell (n_i+d_i)\big)$. The following theorem is immediate from the above discussions.

\begin{theorem} \label{theorem:maxdeg_algo}
There exists an algorithm that given any profile $\sigma=(d_\ell^{n_\ell},\ldots, d_1^{n_1})$ satisfying $d_\ell\leq n_\ell-1$ and $d_1\geq 2$ computes in optimal time a connected $\maxdn$ realization of $\sigma$.
\end{theorem}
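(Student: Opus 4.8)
The plan is to combine the construction from Lemma~\ref{lemma:simple-case}, the incremental Procedure \addlayer, and its inheritance property (Proposition~\ref{proposition:valid}) into a single inductive argument. First I would set up the algorithm exactly as in Algorithm~\ref{algo:MaxNDeg}: initialize $G$ to be the connected realization of the uniform profile $(d_\ell^{n_\ell})$ guaranteed by Lemma~\ref{lemma:simple-case}, which is legal precisely because the hypothesis $d_\ell \leq n_\ell - 1$ gives $n_\ell \geq d_\ell + 1$. That lemma also hands us an independent set in which at least two vertices have degree $1$ and the rest degree at most $2$, so the initial list $L_{\ell-1} = (w_1,\ldots,w_{d_{\ell-1}-1})$ of size $d_{\ell-1}-1 \leq d_\ell$ is valid in the sense of Definition~\ref{definition:valid}. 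This establishes the base of the induction.

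Next I would run the downward loop over $i = \ell-1, \ldots, 1$, at each step feeding the current valid list $L_i$ of size $d_i - 1$ into \addlayer$(G, L_i, n_i, d_i)$. Since $d_1 \geq 2$ forces every $d_i \geq 2$, and each $n_i \geq 1$, the hypotheses $d \geq 2$, $k \geq 1$ of Proposition~\ref{proposition:valid} are met. Invoking the proposition, the procedure adds exactly $n_i$ vertices all having $\maxdn = d_i$, keeps the graph connected, and returns a fresh valid list of size $d_i - 2 \geq d_{i-1} - 1$, which I truncate to a valid list of the required size for the next round. The whole correctness claim — that after processing index $i$ the graph realizes $(d_\ell^{n_\ell},\ldots,d_i^{n_i})$ and that degrees never exceed the target $\maxdn$ values — is exactly Lemma~\ref{lemma:mdcn-algo-correct}, so I would simply cite it rather than re-prove it. Taking $i=1$ yields a connected realization of the full profile $\sigma$.

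For the running time, I would argue optimality in two directions. The upper bound follows since each \addlayer call costs $O(n_i + d_i)$ and the initialization costs $O(n_\ell)$, giving total time $O\big(\sum_{i=1}^\ell (n_i + d_i)\big)$. For the matching lower bound, any connected realization must have all degrees nonzero, forcing $\Omega(n_1 + \cdots + n_\ell)$ edges, and must contain at least one vertex of each distinct degree $d_1, \ldots, d_\ell$, forcing $\Omega(d_1 + \cdots + d_\ell)$ edges; hence $\Omega\big(\sum_{i=1}^\ell (n_i + d_i)\big)$ edges must be written, so no algorithm can do better. The theorem then follows immediately.

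The conceptual work here has all been front-loaded into the earlier lemmas, so the main obstacle in assembling the proof is really bookkeeping: verifying that the list sizes match up across iterations (that a valid list of size $d_i-1$ is available going into step $i$, and that the size-$(d_i-2)$ output truncates correctly to $d_{i-1}-1$), and confirming that the boundary conditions $d_\ell \leq n_\ell - 1$ and $d_1 \geq 2$ are exactly what is needed to make both Lemma~\ref{lemma:simple-case} and every \addlayer invocation applicable. I expect no genuine difficulty beyond threading these size and connectivity invariants through the induction, since Lemma~\ref{lemma:mdcn-algo-correct} already encapsulates the hard correctness reasoning.
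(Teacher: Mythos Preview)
Your proposal is correct and follows essentially the same route as the paper: the theorem is stated as immediate from the preceding discussion, which consists of exactly the pieces you assemble—Lemma~\ref{lemma:simple-case} for initialization, Proposition~\ref{proposition:valid} for each \addlayer step, Lemma~\ref{lemma:mdcn-algo-correct} for correctness, and the same two-sided $\Theta\big(\sum_i (n_i+d_i)\big)$ running-time argument. The bookkeeping checks you flag (list sizes matching across iterations, validity of the initial list, and the roles of the hypotheses $d_\ell\le n_\ell-1$ and $d_1\ge 2$) are precisely what the paper relies on implicitly.
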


\vspace{-2mm}
\subsection{A complete characterization for $\maxdn$ realizable profiles}

The necessary conditions for $\maxdn$ realizability is as follows.

\begin{lemma} \label{lemma:suff_maxdeg}
A necessary condition for a profile $\sigma=(d_\ell^{n_\ell}, \cdots, d_1^{n_1})$ to be $\maxdn$ realizable is $d_\ell \leq n_\ell - 1$.
\end{lemma}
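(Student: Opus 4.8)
The plan is to show that the condition $d_\ell \leq n_\ell - 1$ is forced by the very definition of a $\maxdn$ realization. First I would fix a graph $G$ realizing $\sigma$ and focus on a vertex of the top degree value $d_\ell$. Concretely, among all vertices of $G$, there must exist at least one vertex $w$ with $\maxdn(w) = d_\ell$ (since $n_\ell \geq 1$), and by definition $\maxdn(w)$ is the \emph{maximum} degree over $N[w]$. The key observation is that $d_\ell$ is the largest value appearing in the profile, so no vertex of $G$ can have degree exceeding $d_\ell$: if some vertex $u$ had $\deg(u) > d_\ell$, then every neighbor of $u$ (and $u$ itself) would witness a neighborhood-maximum degree strictly greater than $d_\ell$, contradicting that $d_\ell$ is the top entry of $\sigma$.

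Given that, I would argue about the vertex $x$ achieving the maximum degree in $N[w]$, i.e. the vertex $x \in N[w]$ with $\deg(x) = \maxdn(w) = d_\ell$. This vertex $x$ has exactly $d_\ell$ neighbors, and I claim that $x$ together with all of its neighbors all have $\maxdn$ equal to $d_\ell$. Indeed, each such vertex has $x$ in its closed neighborhood, so its neighborhood-maximum degree is at least $\deg(x) = d_\ell$; and by the previous paragraph it cannot exceed $d_\ell$. Hence every vertex in $N[x]$, a set of size $d_\ell + 1$, belongs to the class of vertices with $\maxdn = d_\ell$, which is exactly $V_\ell$.

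This immediately yields $n_\ell = |V_\ell| \geq |N[x]| = d_\ell + 1$, i.e. $d_\ell \leq n_\ell - 1$, as required. I would present the argument by first establishing the global bound that no degree in $G$ exceeds $d_\ell$, then exhibiting a degree-$d_\ell$ vertex $x$ whose entire closed neighborhood lands in $V_\ell$.

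I do not anticipate a serious obstacle here, since the statement is a clean counting lemma; the only point requiring mild care is the justification that a vertex of degree exactly $d_\ell$ actually exists (rather than merely a vertex whose neighborhood-max is $d_\ell$). This follows because whenever $\maxdn(w) = d_\ell$ for some $w$, the witness vertex $x \in N[w]$ attaining the maximum has $\deg(x) = d_\ell$ by definition of $\maxdn$, so such a vertex is guaranteed to exist and the counting step goes through.
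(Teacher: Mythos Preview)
Your proposal is correct and follows essentially the same argument as the paper: exhibit a vertex of degree exactly $d_\ell$, observe that $d_\ell$ is the maximum degree in $G$, and conclude that all $d_\ell+1$ vertices in its closed neighborhood have $\maxdn$ equal to $d_\ell$. Your write-up is simply more explicit than the paper's in justifying why a degree-$d_\ell$ vertex exists and why no degree exceeds $d_\ell$.
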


\begin{proof}
Suppose $\sigma$ is $\maxdn$ realizable by a graph $G$. Then $G$ must contain a vertex, say $w$, of degree $d_\ell$ in $G$.  Since $d_\ell$ is the maximum degree in $G$, the $\maxdn$ of all the $d_\ell+1$ vertices in $N[w]$ must be~$d_\ell$. Thus $n_\ell \geq d_\ell+1$.
\end{proof}

Consider a profile $\sigma=(d_\ell^{n_\ell}, \cdots, d_1^{n_1})$ realizable by a connected graph. If $d_1=1$, then the graph must contain a vertex, say $v$, of degree $1$, and the vertices in $N[v]$ must also have degree $1$. The only possibility for such a graph is a single edge graph on two vertices. Thus in this case $\sigma=(1^2)$. If $d_1\geq 2$, then by Lemma~\ref{lemma:suff_maxdeg}, for $\sigma$ to be realizable in this case we need that $n_\ell\geq d_\ell+1$. Also, by Theorem~\ref{theorem:maxdeg_algo}, under these two conditions $\sigma$ is always realizable. We thus have the following theorem.

\begin{theorem} \label{thm:char_closed_con}
For a profile $\sigma=(d_\ell^{n_\ell}, \cdots, d_1^{n_1})$ to be $\maxdn$ realizable by a connected graph the necessary and sufficient condition is that either
\begin{inparaenum}[(i)]
\item $n_\ell\geq d_\ell+1$ and $d_1\geq 2$, or
\item $\sigma=(1^2)$.
\end{inparaenum}
\end{theorem}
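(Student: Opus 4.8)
\textbf{Plan for the proof of Theorem~\ref{thm:char_closed_con}.}
The plan is to establish the two directions of the characterization separately, leaning heavily on the machinery already developed. The sufficiency direction is essentially free: if $n_\ell \geq d_\ell + 1$ and $d_1 \geq 2$, then Theorem~\ref{theorem:maxdeg_algo} directly produces a connected $\maxdn$ realization, while if $\sigma = (1^2)$ the single-edge graph $K_2$ is trivially a connected realization. So the entire content lies in proving \emph{necessity}: assuming $\sigma$ is realizable by a connected graph $G$, I must show one of the two stated alternatives holds.

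For the necessity direction I would split on the value of $d_1$. First I would invoke Lemma~\ref{lemma:suff_maxdeg}, which gives $n_\ell \geq d_\ell + 1$ unconditionally for any $\maxdn$ realization. This immediately secures the $n_\ell \geq d_\ell + 1$ part of alternative~(i). It then remains only to rule out the possibility $d_1 = 1$ unless $\sigma = (1^2)$. So I would assume $d_1 = 1$ and analyze the structure forced by a connected realization. Since $d_1 = 1$ is the smallest distinct degree value appearing, there is a vertex $v$ with $\maxdn(v) = 1$, meaning every vertex in $N[v]$ (including $v$ itself) has degree at most $1$. In a connected graph, the only connected component containing such a configuration is a single edge $K_2$ (a vertex of degree $1$ whose unique neighbor also has degree $1$), or an isolated vertex --- but isolation is excluded by connectivity once $n \geq 2$, and $n=1$ forces $\maxdn = 0$ which is not in our profile since $d_1 \geq 1$. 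Thus the whole connected graph must be exactly $K_2$, forcing $\sigma = (1^2)$, which is precisely alternative~(ii).

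Concretely, the key steps in order are: (1) dispatch sufficiency via Theorem~\ref{theorem:maxdeg_algo} and the $K_2$ example; (2) apply Lemma~\ref{lemma:suff_maxdeg} to obtain $n_\ell \geq d_\ell + 1$ in the realizable case; (3) observe that if $d_1 \geq 2$ we are already in case~(i) and done; (4) handle the remaining case $d_1 = 1$ by the structural argument above, showing a connected realization collapses to $K_2$ and hence $\sigma = (1^2)$.

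The main obstacle, such as it is, is the structural argument in step~(4): one must argue carefully that a degree-$1$ vertex whose closed neighborhood is entirely degree-$1$ cannot coexist with \emph{any} other vertex in a connected graph, so that the presence of a $\maxdn$-value of $1$ together with connectivity pins the graph down completely to a single edge. This is the only place requiring genuine (though short) case reasoning; everything else is a direct appeal to results already proved. I would phrase step~(4) as in the paragraph preceding the theorem statement in the excerpt, making explicit that $N[v]$ being all degree-$1$ in a connected graph leaves no room for additional vertices, since any further edge would raise a degree above $1$ or disconnect the graph.
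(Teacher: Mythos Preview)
Your proposal is correct and follows essentially the same approach as the paper: sufficiency via Theorem~\ref{theorem:maxdeg_algo} and the $K_2$ example, and necessity via Lemma~\ref{lemma:suff_maxdeg} together with the observation that a vertex $v$ with $\maxdn(v)=1$ forces $N[v]$ to consist entirely of degree-$1$ vertices, which in a connected graph pins it down to $K_2$. The only cosmetic difference is that the paper phrases the $d_1=1$ argument by first noting such a graph must contain a degree-$1$ vertex, whereas you start directly from a vertex with $\maxdn$ equal to $1$; the content is identical.
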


Now if $d_1=1$, then $n_1$ must be even, since the vertices $v$ with $\maxdn(v)=1$ must form a disjoint union of exactly $n_1/2$ edges.  So for general graphs we have the following theorem.

\begin{theorem} \label{thm:char_closed_gen}
For a profile $\sigma=(d_\ell^{n_\ell}, \cdots, d_1^{n_1})$ to be $\maxdn$ realizable by a general graph the necessary and sufficient conditions are that $d_\ell \geq n_\ell-1$, and either $n_1$ is even or $d_1\geq 2$.
\end{theorem}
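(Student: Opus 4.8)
The plan is to derive the general-graph characterization of Theorem~\ref{thm:char_closed_gen} from the connected characterization of Theorem~\ref{thm:char_closed_con} (equivalently, from the construction of Theorem~\ref{theorem:maxdeg_algo}) together with one structural observation about the lowest degree layer. I would prove the two required conditions --- the top-degree bound $d_\ell \ge n_\ell-1$ and the bottom condition that either $n_1$ is even or $d_1\ge 2$ --- separately, establishing necessity first and then sufficiency by an explicit disjoint-union construction.

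For necessity, the top-degree bound is already supplied by Lemma~\ref{lemma:suff_maxdeg}, so no new argument is needed there. For the bottom condition I would argue as follows. Suppose $d_1=1$ and consider any vertex $v$ with $\maxdn(v)=1$. Then every vertex of $N[v]$ has degree at most $1$; since the profile satisfies $d_1\ge 1$ there are no degree-$0$ vertices, so $\deg(v)=1$ and the unique neighbour $u$ of $v$ also satisfies $\maxdn(u)=1$. Hence the $n_1$ vertices whose $\maxdn$ equals $1$ split into disjoint $K_2$ components, which forces $n_1$ to be even.

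For sufficiency I would split on $d_1$. If $d_1\ge 2$, then the hypotheses of Theorem~\ref{theorem:maxdeg_algo} hold ($d_\ell \ge n_\ell-1$ and $d_1\ge 2$), and that theorem already outputs a connected --- hence in particular a general-graph --- realization, so nothing remains. The remaining case is $d_1=1$ with $n_1$ even. Here I would peel off the bottom layer by realizing its $n_1$ vertices as $n_1/2$ disjoint edges; this is possible precisely because $n_1$ is even, and each such $K_2$ contributes two vertices of $\maxdn=1$. It then suffices to realize the truncated profile $\sigma'=(d_\ell^{n_\ell},\ldots,d_2^{n_2})$, whose smallest degree is $d_2\ge 2$, by Theorem~\ref{theorem:maxdeg_algo}, and to take the disjoint union of this realization with the $n_1/2$ edges. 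Because the two parts share no vertices, the matching contributes exactly the required $n_1$ vertices of $\maxdn=1$ and leaves the $\maxdn$ values of the other component untouched, yielding a general graph realizing $\sigma$.

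The step I expect to require the most care is the bookkeeping that guarantees the truncated profile $\sigma'$ still meets the input requirements of Theorem~\ref{theorem:maxdeg_algo}: the top entries $d_\ell$ and $n_\ell$ are untouched by deleting the $d_1=1$ layer, so the bound $d_\ell \ge n_\ell-1$ is inherited, and the new smallest degree $d_2\ge 2$ plays the role of the ``$d_1\ge 2$'' hypothesis. The one genuine edge case is $\ell=1$ with $d_1=d_\ell=1$, i.e.\ $\sigma=(1^{n_1})$: here $\sigma'$ is empty and $\sigma$ is realized purely by the $n_1/2$ disjoint edges, consistent with the $(1^2)$ base case appearing in Theorem~\ref{thm:char_closed_con}. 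Apart from this case analysis, the argument reuses only Theorem~\ref{theorem:maxdeg_algo} and Lemma~\ref{lemma:suff_maxdeg} and introduces no new machinery.
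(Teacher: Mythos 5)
Your proof is correct and takes essentially the same route as the paper: necessity of the top bound via Lemma~\ref{lemma:suff_maxdeg}, necessity of the parity condition because vertices with $\maxdn$ equal to $1$ must pair up into disjoint $K_2$ components, and sufficiency by taking the disjoint union of $n_1/2$ edges with a connected realization of the truncated profile from Theorem~\ref{theorem:maxdeg_algo}. The paper compresses all of this into a single sentence following Theorem~\ref{thm:char_closed_con}; your write-up merely fills in the same details (and implicitly uses the correct reading $d_\ell \leq n_\ell-1$ of the top condition, which carries a sign typo in the theorem statement).
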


\section{Realizing maximum open neighborhood-degree  profiles}
\label{section:maxNdegOpen}

We start by formally defining the realizable profiles for maximum degree in open neighborhood.

\begin{definition}[$\maxdon$ realizable profile]
A profile $\sigma=(d_\ell^{n_\ell}, \cdots, d_1^{n_1})$ is said to be $\maxdon$ realizable if there exists a graph
$G$ on $n=n_1+\cdots+n_\ell$ vertices that for each $i \in [1,\ell]$ contains exactly $n_i$ vertices whose $\maxdon$ is $d_i$. Equivalently, $|\{v \in V(G) : \maxdon(v) = d_i\}| = n_i$\footnote{For a vertex $v$ in $H$, the maximum degree in the open neighborhood ($N_H[v]\setminus v$) of vertex $v$, namely $\maxdon_H(v)$ is defined to be the maximum over the degrees of all the vertices present in the open neighborhood of~$v$.}.
\end{definition}

\begin{figure}[!ht]
\centering
\begin{tikzpicture}[scale=0.65]
\begin{scope}[every node/.style={circle,draw,fill=black!10!White}]
\node (v1) at (0,0) {$3$};
\node (v2) at (-2,0) {$2$};
\node (v3) at (-1,1.5) {$2$};
\node (v4) at (2,0) {$2$};
\node[fill=black!.5!White] (v5) at (4,0) {$1$};
\end{scope}
\node (labelnum) at (1,-1) {$(a)$};
\draw (v1) -- (v2);
\draw (v1) -- (v3);
\draw (v1) -- (v4);
\draw (v4) -- (v5);
\draw (v2) -- (v3);
\end{tikzpicture}
\quad\quad\quad\quad
\begin{tikzpicture}[scale=0.65]
\begin{scope}[every node/.style={circle,draw,fill=black!10!White}]
\node[fill=black!.5!White] (v1) at (0,0) {$3$};
\node (v2) at (-2,0) {$2$};
\node (v3) at (-1,1.5) {$2$};
\node (v4) at (2,0) {$2$};
\node[fill=black!.5!White] (v5) at (4,0) {$1$};
\end{scope}
\node (labelnum) at (1,-1) {$(b)$};
\draw (v1) -- (v2);
\draw (v1) -- (v3);
\draw (v1) -- (v4);
\draw (v4) -- (v5);
\draw (v2) -- (v3);
\end{tikzpicture}
\caption{A comparison of the $\maxdn$ realization of $(3^4,2^1)$ and a $\maxdon$ realization of $(3^3,2^2)$. 
}
\label{fig:comparison}
\end{figure}
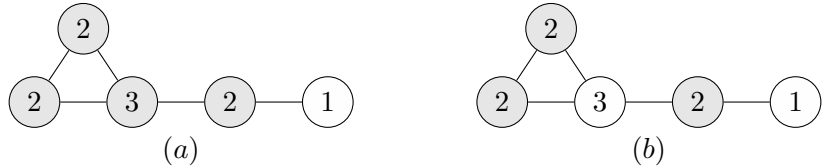

Observe that in the case of $\maxdon$ profiles, unfortunately, the nice {\em sub-structure property} (see Section~\ref{section:discussion}) does not always hold. For example, for the graph considered in Figure~\ref{fig:comparison}, the profile $\sigma=(3^3,2^2)$ is $\maxdon$ realizable, however, the subsequence $(3^3)$ is not $\maxdon$ realizable.

\subsection{Pseudo-valid List}
We begin by stating the following lemmas that are an extension of Lemma~\ref{lemma:simple-case} and Proposition~\ref{proposition:valid} presented in Section~\ref{section:maxNdeg} for \maxdn profiles. 

\begin{lemma}
For any positive integers $d$ and $k$, the profile $\sigma=(d^{k})$ is $\maxdon$ realizable whenever $k\geq d+2$. Moreover, we can always compute in $O(k)$ time a connected realization that contains an independent set having
\begin{inparaenum}[(i)]
\item two vertices of degree $1$, and
\item $d-2$ other vertices of degree at most $2$.
\end{inparaenum}
\label{lemma:simple-case-open}
\end{lemma}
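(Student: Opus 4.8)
The plan is to adapt the caterpillar construction of Lemma~\ref{lemma:simple-case} to the open-neighborhood setting, paying attention to the one place where the closed and open models genuinely differ. In the closed model a vertex of degree $d$ already certifies its own $\maxdn$ value, so a single internal vertex of degree $d$ suffices; in the open model a vertex $v$ does not see itself, so to guarantee $\maxdon(v)=d$ we must ensure that $v$ has a \emph{neighbor} of degree $d$. In particular the high-degree vertices themselves must be adjacent to another vertex of degree $d$. This is exactly what forces at least \emph{two} mutually adjacent degree-$d$ vertices into any such realization, and it accounts for the threshold $k\geq d+2$ here rather than $k\geq d+1$ in the closed case. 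Throughout I assume $d\geq 2$, since the ``moreover'' clause (which speaks of $d-2$ vertices) is only meaningful in this regime.

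For the construction, let $\alpha$ be the smallest integer with $k\leq 2+\alpha(d-1)$. Since $k\geq d+2 > d+1 = 2+1\cdot(d-1)$, the value $\alpha=1$ is infeasible, so automatically $\alpha\geq 2$; this is the key gain over the closed case. I then build the same caterpillar $T$ as in Lemma~\ref{lemma:simple-case}: a path $(s_0,s_1,\dots,s_\alpha,s_{\alpha+1})$ in which each internal vertex $s_i$ (for $1\leq i\leq\alpha$) is attached to $d-2$ fresh leaves so that $\deg(s_i)=d$, giving $2+\alpha(d-1)$ vertices in total. To reach exactly $k$ vertices I remove the $r=2+\alpha(d-1)-k$ excess by merging $r$ leaf-pairs $s_i^\ell\equiv s_j^\ell$ across two distinct internal vertices $s_i,s_j$, exactly as before. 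Minimality of $\alpha$ gives $k> 2+(\alpha-1)(d-1)$, hence $r\leq d-2$, so enough leaf slots are available; each merge leaves every internal degree equal to $d$ and creates only degree-$2$ vertices.

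Correctness splits into a matching upper and lower bound on each $\maxdon$ value. Since the maximum degree in the graph is $d$ (internal vertices have degree $d$ and every other vertex has degree at most $2\leq d$), every vertex satisfies $\maxdon(v)\leq d$. For the lower bound I verify that each vertex has a neighbor of degree $d$: every leaf or merged vertex is adjacent to an internal vertex of degree $d$; each endpoint $s_0$ (resp.\ $s_{\alpha+1}$) is adjacent to $s_1$ (resp.\ $s_\alpha$) of degree $d$; and, crucially, each internal $s_i$ is adjacent to another internal vertex of degree $d$, which exists precisely because $\alpha\geq 2$. Hence $\maxdon(v)=d$ for all $v$, while connectivity and the $O(k)$ running time are inherited from the caterpillar, as it is a tree, merging preserves connectivity, and the edge count is $O(k)$.

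Finally I exhibit the required independent set of size $d$. The two endpoints $s_0,s_{\alpha+1}$ have degree exactly $1$ and are never merged. Fixing one internal vertex $s_i$, its $d-2$ leaf slots (whether still degree-$1$ leaves or degree-$2$ merged vertices) together with $s_0,s_{\alpha+1}$ form an independent set, since all of these are pendant or merged vertices attached only to internal path vertices and hence pairwise nonadjacent. This gives two vertices of degree $1$ and $d-2$ vertices of degree at most $2$, as claimed. The step I expect to be the main obstacle is not the merging bookkeeping but the lower-bound check for the internal vertices: this is the single structural point where the open model departs from Lemma~\ref{lemma:simple-case}, and it is the reason the construction must start from $\alpha\geq 2$ (equivalently $k\geq d+2$) instead of the $\alpha\geq 1$ that sufficed in the closed case.
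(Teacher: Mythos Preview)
Your proposal is correct and follows essentially the same approach the paper intends: the paper states Lemma~\ref{lemma:simple-case-open} as a direct extension of Lemma~\ref{lemma:simple-case} without giving a separate proof, and your adaptation of the caterpillar construction, together with the key observation that $k\geq d+2$ forces $\alpha\geq 2$ so that every internal vertex has an internal (degree-$d$) neighbor, is exactly the argument required to make that extension work in the open-neighborhood model.
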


\begin{proposition} 
For any integers $d\geq 2$, $k\geq 1$, and any connected graph $H$ containing a valid list $L$ of size $d-1$, procedure $\addlayer$ adds to $H$ in $O(k+d)$ time, a set $W$ of $k$ new vertices such that $\maxdon(w)=d$, for every $w\in W$. All the edges added to $H$ lie in $W\times (W\cup L)$. Moreover, $\deg_H(a)\leq d$, for every $a\in L$, and the updated graph remains connected and contains a new valid list of size $d-2$.
\label{proposition:valid-open}
\end{proposition}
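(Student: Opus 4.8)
The plan is to adapt the proof of Proposition~\ref{proposition:valid} almost verbatim, since the only substantive change in the open-neighborhood setting is the definition of $\maxdon$, which excludes the vertex itself. The key observation driving the whole argument is that in Procedure \addlayer, for every newly added vertex $w\in W$, its maximum-degree witness is \emph{another} vertex (either some $a_i\in L$ pushed up to degree $d$, or a vertex in the clique/caterpillar gadget of Lemma~\ref{lemma:simple-case-open}), and never $w$ itself. Consequently, the equality $\maxdn(w)=d$ established in the closed model immediately upgrades to $\maxdon(w)=d$, because the witnessing neighbor lies in the open neighborhood $N[w]\setminus w$. So the first step is to invoke Procedure \addlayer exactly as before, adding the set $W=\{w_1,\ldots,w_k\}$ and wiring it to $L$ and to itself in the two cases $k\leq d-1$ and $k\geq d$.

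The second step is to verify the witness guarantee in each case separately. For $k\leq d-1$, after the while-loop the vertices $a_i,\ldots,a_{d-1}$ all reach degree exactly $d$, and each $w\in W$ is adjacent to at least one such $a_j$; since $a_j\neq w$, we get $\maxdon(w)\geq d$, while Lemma~\ref{lemma:layer1} (which carries over unchanged, its statement being about degrees and adjacencies, not about $\maxdon$) bounds all degrees in $N(w)$ by $d$, giving $\maxdon(w)=d$. For $k\geq d$, the gadget comes from Lemma~\ref{lemma:simple-case-open} rather than Lemma~\ref{lemma:simple-case}; here I must be slightly careful that the realization of $(d^{k+1})$ over $W\cup\{a_1\}$ still supplies, for each $w\in W$, a distinct degree-$d$ neighbor, so that $w$ sees degree $d$ in its open neighborhood. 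This is exactly what the construction of Lemma~\ref{lemma:simple-case-open} provides, and it also hands back the independent set $\{b_1,\ldots,b_{d-1}\}$ with $\deg(b_1)=1$ and $\deg(b_j)\leq 2$ used to form the new list.

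The third step is to re-derive the structural guarantees, namely that all added edges lie in $W\times(W\cup L)$, that $\deg_H(a)\leq d$ for every $a\in L$, that connectivity is preserved, and that the output list $\newL$ of size $d-2$ is valid. These are proved identically to the closed case: the edge-locality and degree bounds follow from Lemma~\ref{lemma:layer1} and Lemma~\ref{lemma:layer2}, and validity of $\newL$ follows from Lemma~\ref{lemma:valid} (the Inheritance Property), none of whose proofs reference $\maxdn$ versus $\maxdon$—they are purely statements about degrees, independence, and adjacency patterns, which are unaffected by whether a vertex counts itself. The running-time bound $O(k+d)$ is also inherited directly.

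The main obstacle I anticipate is the boundary of the two cases and the precise input requirement of the underlying uniform lemma: Lemma~\ref{lemma:simple-case-open} requires $k\geq d+2$ (rather than $k\geq d+1$ as in the closed case) to realize $(d^{k})$, yet in the $k\geq d$ branch we apply it to the profile $(d^{k+1})$ over $W\cup\{a_1\}$, so one must check $k+1\geq d+2$, i.e. $k\geq d+1$. The worry is the single value $k=d$, where $k+1=d+1<d+2$ and the lemma as stated does not directly apply; I would handle this edge value by a small dedicated argument (for instance by folding one $W$-vertex into the $L$-side wiring, effectively reducing to the $k\leq d-1$ mechanism, or by giving an explicit gadget on $d+1$ vertices realizing $(d^{d+1})$ with the required independent set), and then everything else goes through verbatim.
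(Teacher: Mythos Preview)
Your overall strategy is exactly the paper's implicit argument: the paper states Proposition~\ref{proposition:valid-open} as an extension of Proposition~\ref{proposition:valid} and offers no separate proof, so showing that every $w\in W$ has a \emph{neighbor} of degree $d$ (rather than relying on $w$ itself) is precisely what is needed, and the structural claims (edge locality, degree bounds on $L$, connectivity, validity of $\newL$) carry over verbatim via Lemmas~\ref{lemma:layer1}, \ref{lemma:layer2}, and \ref{lemma:valid}.

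Your one misstep is the proposed switch from Lemma~\ref{lemma:simple-case} to Lemma~\ref{lemma:simple-case-open} in the $k\geq d$ branch, which manufactures the phantom gap at $k=d$ you then worry about. The proposition is about the procedure \addlayer\ \emph{as written}, and that procedure invokes Lemma~\ref{lemma:simple-case} (closed model) on $W\cup\{a_1\}$. This already suffices in the open model: when $k=d$ the caterpillar over the $k+1=d+1$ vertices has $\alpha=1$, so there is a single internal vertex $s_1\in W$ of degree $d$; all other $w\in W$ are leaves adjacent to $s_1$ and hence see a degree-$d$ neighbor. The vertex $s_1$ itself is adjacent to $a_1$ (since $a_1$ is chosen to be a degree-$1$ vertex of the caterpillar, hence a leaf or endpoint, all of which touch $s_1$), and \addlayer\ then raises $\deg(a_1)$ to exactly $d$ by connecting $a_1$ to vertices in $\{a_2,\ldots,a_{d-1}\}$. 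Thus $\maxdon(s_1)=d$ as well. For $k\geq d+1$ the caterpillar has $\alpha\geq 2$, so every internal vertex is adjacent to another internal vertex of degree $d$. In all cases the upper bound $\maxdon(w)\leq d$ holds because every neighbor of $w$ lies in $W\cup\{a_1\}$ and has degree at most $d$. So no separate gadget or case analysis for $k=d$ is needed; just drop the substitution of Lemma~\ref{lemma:simple-case-open} and your argument is complete.
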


It is important to note that though the Proposition~\ref{proposition:valid-open} holds for the open-neighborhoods it can not be directly used to incrementally compute the realizations. This is due to the reason that for the profiles $\sigma=(d_\ell^{d_\ell+1})$ unlike the scenario of $\maxdn$ realization, there is no $\maxdon$ realization that contains a valid list (See Lemma~\ref{lemma:d^d+1} for further details). 

This motivates us to define pseudo-valid lists.

\begin{definition}
A list $L=(a_1,a_2,\ldots,a_t)$ in a graph $H$ is said to be
``pseudo-valid'' with respect to $H$ if
\begin{inparaenum}[(i)]
\item for each $i\in [1,t]$, $\deg(a_i)=2$, and
\item the vertices of $L$ form an independent set.
\end{inparaenum}
\label{definition:pseudo-valid}
\end{definition}
Note that the only deviation that prevents $L$ from being a valid list is that $\deg(a_1)$ is 2 instead of 1.\\

We next state two lemmas that are crucial in obtaining $\maxdon$ realizations in the scenarios $n_\ell=d_\ell$ and $n_\ell=d_\ell+1$.

\begin{lemma}
\label{lemma:d^d_dbar^1}
For any integers $d > \bar d \geq 2$, the profile $\sigma=(d^{d},\bar
d^{1})$ is $\maxdon$ realizable. 
Moreover, in $O(d)$ time we can compute a connected realization  
that contains a valid list of size $d-1$.
\end{lemma}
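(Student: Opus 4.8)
The plan is to give a direct $O(d)$ construction on exactly $d+1$ vertices. The key structural observation driving the whole proof is that the single vertex whose $\maxdon$ equals the \emph{smaller} value $\bar d$ must itself be the vertex of \emph{largest} degree $d$. Indeed, on $d+1$ vertices a vertex of degree $d$ is adjacent to every other vertex, so if it were an ordinary vertex it would force $\maxdon \ge d$ on all of its neighbors, in particular on the special vertex; the only way to keep the special vertex at $\maxdon = \bar d < d$ is to make it the hub. Once this ``inversion'' is recognized, the construction is essentially forced.

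Concretely, I would take vertices $u, v_1, \ldots, v_d$, join $u$ to all of $v_1, \ldots, v_d$ (so $\deg(u) = d$), and join $v_1$ to $v_2, \ldots, v_{\bar d}$ (so $\deg(v_1) = \bar d$). This leaves $v_2, \ldots, v_{\bar d}$ with degree $2$ and $v_{\bar d + 1}, \ldots, v_d$ with degree $1$. Verifying the profile is then immediate: $u$ is the unique vertex of degree $d$, so every $v_i$ has $u$ as a neighbor and hence $\maxdon(v_i) = d$; meanwhile the open neighborhood of $u$ is exactly $\{v_1, \ldots, v_d\}$, whose maximum degree is $\bar d$, attained at $v_1$, giving $\maxdon(u) = \bar d$. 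This yields precisely $d$ vertices of $\maxdon$-value $d$ and one vertex of $\maxdon$-value $\bar d$, i.e.\ the profile $(d^d, \bar d^1)$; the graph is connected (a star centered at $u$ with a few extra edges) and is built in $O(d)$ time.

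For the valid list (Definition~\ref{definition:valid}) I would take $L = (v_2, \ldots, v_d)$, which has size exactly $d-1$. This set is independent, since the only edges among the $v_i$'s emanate from $v_1 \notin L$. To satisfy the requirement $\deg(a_i) \le i$, I would order $L$ so that the $d - \bar d$ vertices of degree $1$ occupy the first positions and the $\bar d - 1$ vertices of degree $2$ occupy the last positions; then a degree-$2$ vertex never appears before position $d - \bar d + 1 \ge 2$, which uses precisely the hypothesis $d > \bar d$.

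I expect no serious obstacle beyond the conceptual step of placing the low-$\maxdon$ vertex at the high-degree hub; after that the degree bookkeeping and the ordering of the valid list are routine. The one place to be careful is the boundary case $\bar d = 2$ (where $v_1$ is joined only to $v_2$) and the requirement that at least one degree-$1$ vertex exist to head the list; both are guaranteed by $d > \bar d \ge 2$, which also ensures $d \ge 3$ so that there are enough vertices to carry out the construction.
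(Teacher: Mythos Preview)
Your construction is correct and is essentially the same as the paper's: the paper takes a hub $z$ of degree $d$ adjacent to $y,v_1,\ldots,v_{d-1}$ and makes $y$ adjacent to $z,v_1,\ldots,v_{\bar d-1}$, which after relabeling $z\leftrightarrow u$, $y\leftrightarrow v_1$, $v_i\leftrightarrow v_{i+1}$ is exactly your graph, and the paper's valid list $(v_{d-1},\ldots,v_1)$ coincides with your ordering (degree-$1$ vertices first, then degree-$2$). Your additional motivational remark explaining why the low-$\maxdon$ vertex must be the hub is a nice touch not present in the paper, but the underlying construction and verification are identical.
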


\begin{proof}
The construction of $G$ is as follows.  Take a vertex $z$ and
connect it to $d-1$ other vertices $v_1,\ldots,v_{d-1}$.  Next
take another vertex $y$ and connect to $v_1, \ldots, v_{\bar d-1}$ 
(recall $2\leq \bar d<d$). Also connect $z$ to $y$.  
In the resulting graph $G$, $\deg(z) = d$, $\deg(y) = \bar d$, 
and $\deg(v_i) \leq 2$ for $i \in [1,d-1]$.  Also, $v_{d-1}$
is not adjacent to $y$ as $\bar d<d$, thus $\deg(v_{d-1})=1$.
Therefore, $\maxdon(z) = \bar d$, $\maxdon(y) = d$, and
$\maxdon(v_i)=d$, for $i\in [1,d-1]$. It is also easy to verify that 
$(v_{d-1},\ldots,v_{\bar d-1},\ldots,v_2,v_{1})$ is a valid list in $G$.
\end{proof}

\begin{lemma}
For any integer $d \geq 2$, the profile $\sigma=(d^{d+1})$ is
$\maxdon$ realizable.  Moreover, a connected realization that 
contains an independent set having $d-1$ vertices of degree $2$ 
can be compute in $O(d)$ time. 
However, none of the graphs realizing $\sigma$ can contain 
a vertex of degree $1$.
\label{lemma:d^d+1}
\end{lemma}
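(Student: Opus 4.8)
The plan is to establish the three assertions in turn: exhibit an explicit realization, read off the required independent set from it, and then rule out degree-one vertices in \emph{any} realization by a short contradiction argument.

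For the realizability and the ``moreover'' part, I would give a direct construction on exactly $d+1$ vertices. Take two ``hub'' vertices $z_1,z_2$ and $d-1$ ``peripheral'' vertices $v_1,\ldots,v_{d-1}$; add the edge $(z_1,z_2)$, connect each hub to every peripheral vertex, and add no edges among the $v_i$'s. A degree count gives $\deg(z_1)=\deg(z_2)=d$ (each hub is adjacent to the other hub and to all $d-1$ peripherals) and $\deg(v_i)=2$ (each peripheral is adjacent to exactly $z_1$ and $z_2$). For the $\maxdon$ values: each peripheral $v_i$ sees both hubs, of degree $d$, in its open neighborhood, so $\maxdon(v_i)=d$; each hub sees the other hub of degree $d$ in its open neighborhood and no vertex of degree exceeding $d$, so $\maxdon(z_1)=\maxdon(z_2)=d$. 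Thus every vertex has $\maxdon=d$, and the graph realizes $(d^{d+1})$; it is connected since both hubs are universal, and it is produced in $O(d)$ time. The set $S=\{v_1,\ldots,v_{d-1}\}$ is then an independent set of exactly $d-1$ vertices, each of degree $2$ — precisely the claimed structure (indeed a pseudo-valid list in the sense of Definition~\ref{definition:pseudo-valid}).

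For the impossibility claim, I would argue by contradiction. Suppose some realizing graph $G$ contains a vertex $u$ with $\deg(u)=1$, and let $w$ be its unique neighbor. Since the open neighborhood of $u$ is just $\{w\}$, we get $d=\maxdon(u)=\deg(w)$, so $w$ has degree $d$; as $G$ has only $d+1$ vertices, $w$ is adjacent to all other $d$ vertices. Now $\maxdon(w)=d$ forces some neighbor of $w$ to have degree $d$; this neighbor cannot be $u$ (whose degree is $1<d$ since $d\geq 2$), so there is a second vertex $w'\neq w$ with $\deg(w')=d$, whence $w'$ is also universal and in particular adjacent to $u$. But then $u$ has the two distinct neighbors $w$ and $w'$, contradicting $\deg(u)=1$.

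The construction and its degree/neighborhood verification are routine. The only step demanding care is the impossibility argument, whose crux is the following chain: a degree-one vertex forces its unique neighbor to be a universal (degree-$d$) vertex, and that neighbor's own $\maxdon$ constraint then forces the existence of a \emph{second} universal vertex, which is necessarily adjacent to the degree-one vertex as well. I expect this step to be the main (though still short) obstacle, since it is where the fact that $\maxdon$ uses the \emph{open} neighborhood is essential.
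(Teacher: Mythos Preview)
Your proposal is correct and essentially matches the paper's proof: the construction (two adjacent hubs plus $d-1$ common neighbors) is identical, and the impossibility argument hinges on the same observation that any realization must contain two degree-$d$ (hence universal) vertices. The only cosmetic difference is that the paper obtains the second universal vertex via a counting argument (one degree-$d$ vertex can serve as $\maxdon$-witness for at most $d$ others) rather than by chasing the $\maxdon$ constraint from a hypothetical degree-one vertex, but the content is the same.
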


\begin{proof}
The construction of graph $G$ realizing $\sigma$ is very similar to
the previous lemma. Take two vertex-sets, namely, $U=\{u_1,u_2\}$ and
$W=\{w_1,\ldots,w_{d-1}\}$. Add to $G$ the edge $(u_1,u_2)$, and for
each $i\in [1,d-1]$, add to $G$ the edges $(u_1,w_i)$ and $(u_2,w_i)$.
This ensures that $\deg(u_1)=\deg(u_2)=d$ and $\deg(w_i)=2$ for $i\in
[1,d-1]$.  So $G$ contains $d+1$ vertices with $\maxdon$ equal to $d$.
Also, $W$ is an independent set of size $d-1$ in $G$ and $\deg(w_i)=2$,
for every vertex $w_i\in W$.

Next, let $H$ be any $\maxdon$ realizing graph of $\sigma$.  Then $H$
must contain two vertices, say $x$ and $y$, of degree $d$, since a
single vertex of degree $d$ in $H$ can guarantee $\maxdon=d$ for at most
$d$ vertices. Next notice that $N[x]=N[y]$, because otherwise $H$ will
contain more than $d+1$ vertices. This implies that all the vertices
in $H$, other than $x$ and $y$, are adjacent to both $x$ and $y$.
Therefore, each of the vertices in $H$ must have degree at least two.
\end{proof}

 The next lemma shows that \addlayer outputs a valid list, even when the input list is
pseudo-valid.

\begin{lemma}
In procedure \addlayer, the list $\newL$ is valid even when
the list $\oldL$ is pseudo-valid and the parameter $d$ satisfies
$d\geq 3$.
\label{lemma:pseudo-valid}
\end{lemma}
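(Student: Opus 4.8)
The plan is to run the same analysis as the inheritance proof for valid lists (Lemma~\ref{lemma:valid}), carrying through the single difference between a pseudo-valid and a valid input list: by Definition~\ref{definition:pseudo-valid} every vertex of $\oldL=(a_1,\ldots,a_{d-1})$ now has degree exactly $2$, so the only failure of validity is at the head, where $\deg_{\oldH}(a_1)=2$ instead of $\leq 1$. The hypothesis $d\geq 3$ enters right away: it guarantees $d-\deg_{\oldH}(a_j)=d-2\geq 1$ for every list vertex, which is the slack the procedure needs and the reason the claim fails at $d=2$ (there a degree-$2$ list vertex already has degree $d$, so $r=\min\{d-\deg(a_i),count\}=0$ and the while loop stalls). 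I would then split along the two branches of Algorithm~\ref{algo:Add-layer}.

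In the branch $k\geq d$ the new list $\newL=(b_1,\ldots,b_{d-2})$ is read off the freshly constructed sub-realization over $W\cup\{a_1\}$, whose vertices form a valid list by construction, already carrying $\deg(b_1)=1$ and $\deg(b_j)\leq 2$; since the $b_j$ are new vertices, their degrees and mutual non-adjacency are fixed inside that sub-realization and are untouched by any later edges. Pseudo-validity affects only $a_1$: incorporating the sub-realization raises $\deg(a_1)$ from $2$ to $3$, so exactly $d-3\geq 0$ further edges to $a_2,\ldots,a_{d-1}$ are needed to bring it to $d$, which is feasible for $d\geq 3$. As $a_1\notin\newL$, this has no bearing on validity, so $\newL$ is valid in this branch essentially for free.

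The branch $k\leq d-1$ is where the real work lies, and the main obstacle is to certify that the deviant vertex $a_1$ cannot spoil validity. Here $\newL$ is the length-$(d-2)$ prefix of $(w_1,\ldots,w_k,a_1,\ldots,a_{i-1})$, where $i$ is the smallest index with $a_i,\ldots,a_{d-1}$ joined to $W$; independence and $\deg(w_1)=1$ follow exactly as in Lemma~\ref{lemma:valid}, since the while loop only adds edges between $W$ and $L$ and the degree-raising loop only adds edges incident to the single vertex $a_i$, which does not appear among $a_1,\ldots,a_{i-1}$. For the degree test, Lemma~\ref{lemma:layer2} alone would only give $\deg_{\newH}(a_j)\leq\deg_{\oldH}(a_j)+1=3$, which is harmless for any $a_j$ sitting at position $\geq 3$ of $\newL$ but would break validity if $a_1$ landed at position $1$ or $2$ carrying degree $3$. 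I would rule this out by counting inside the raising loop: the last-processed vertex $a_i$ already owns at least one $W$-neighbour (it was connected while $count\geq 1$), so it needs at most $d-3$ raising edges, whereas the loop scans the $d-2$ indices $j\neq i$ in decreasing order with $a_1$ scanned last; hence the loop halts before reaching $a_1$ and $\deg_{\newH}(a_1)=2$ is preserved. Consequently, if $a_1\in\newL$ it occupies position $k+1\geq 2$ with degree $2$, each other $a_j$ occupies position $k+j\geq 3$ with degree at most $3$, and the leading $w$'s carry degrees $1,\leq 2,\ldots$; every vertex's degree is then bounded by its position, so $\newL$ is valid. This final count, together with the stalling observation above, is exactly where $d\geq 3$ is indispensable.
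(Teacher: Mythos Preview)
Your proposal is correct and follows essentially the same approach as the paper. The only structural difference is that for the branch $k\le d-1$ the paper splits further into the sub-cases $k\ge 2$ (where a pure position argument suffices, since $a_1$ lands at position $\ge 3$ and degree $\le 3$ is harmless there) and $k=1$ (where the same counting you give shows $a_1$ is never reached by the raising loop), whereas you handle both at once via the uniform count ``$a_i$ needs at most $d-3$ raising edges out of $d-2$ candidates, so $j=1$ is never reached.'' Your unified version is slightly cleaner and proves the marginally stronger fact that $\deg_{\newH}(a_1)=2$ is preserved for every $k\le d-1$ (provided $a_1\in\newL$), not just for $k=1$.
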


\begin{proof}
We borrow notations from the proof of Lemma~\ref{lemma:valid}.  As
before, we have two separate cases depending on whether or not $k<d$.
We first consider the case $k\leq d-1$.  We showed in
Lemma~\ref{lemma:valid} that $(w_1,\cdots,w_k,a_1,\ldots,a_{i-1})$ is
a valid list of length at least $d-1$ when $\deg_{\oldH}(a_1)=1$.  We
now consider the scenario when $\oldL$ is pseudo-valid, and
$\deg_{\oldH}(a_1)=2$. The list $\newL$ is still valid if $k\geq 2$,
since the degree of $a_1$ in $\newH$ is at most $3$ and its position
in $\newL$ is also $3$ or greater. So the non-trivial case is $k=1$.
In such a case $i=d-1$, as the only vertex $w_1$ belonging to $W$ is
connected to $a_{d-1}$ in Algorithm~\ref{algo:Add-layer}.  Also,
$\deg_{\oldH}(a_{d-1})= 2$, and $a_{d-1}$ is connected to vertex
$w_1$, so to ensure that $\deg(a_{d-1}) = d$, in the for loop in step
9 of Algorithm~\ref{algo:Add-layer}, it is connected to only $d-3$
vertices, namely, $a_2,a_3,\ldots,a_{d-2}$. Since $a_{d-1}$ is never
connected to vertex $a_1$, $\deg_{\newH}(a_1) = \deg_{\oldH}(a_1)=2$.
This shows that the sequence $(w_1,\cdots,w_k,a_1,\ldots,a_{i-1}) =
(w_1,a_1,\ldots,a_{d-2})$ is a valid list of length exactly
$d-1$. Truncating it to length $d-2$ again yields a valid sequence.
In case $k\ge d$, $a_1$'s degree does not play any role, 
so the argument from the proof of Lemma~\ref{lemma:valid} works as is.
\end{proof}

\begin{remark}
The condition $d\geq 3$ is necessary in Lemma~\ref{lemma:pseudo-valid}
because in a pseudo-valid list all the vertices have degree $2$.
However, Procedure \addlayer works only in the case when the degree
of each vertex in the list is at most $d-1$, which does not hold
true for a pseudo-valid list when $d=2$.  
So we provide a different analysis for the profile $(d^{d+1},2^{k})$.
\end{remark}

\subsection{\maxdon realization of the profile $\sigma=(d^{d+1},2^k)$}
The following lemmas shows that 
$\sigma=(d^{d+1},2^1)$, for $d\geq 3$, is not $\maxdon$ realizable
when $d\geq 3$; and $\sigma=(d^{d+1},2^{k})$
is $\maxdon$ realizable when $d\geq 3$ and $k\geq 2$.

\begin{lemma}
For any integer $d\geq 3$, the profile $\sigma=(d^{d+1},2^1)$ is not $\maxdon$ realizable.
\label{lemma:d^d+1_2^1}
\end{lemma}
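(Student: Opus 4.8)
The plan is to derive a contradiction by a careful counting argument on the degree-$d$ vertices in any hypothetical $\maxdon$ realization $G$ of $\sigma=(d^{d+1},2^1)$. First I would record the structural facts that the profile forces. The single vertex with $\maxdon$ value $2$, call it $p$, must have $\deg(p)\geq 1$ and its entire open neighborhood must consist of vertices of degree at most $2$; in particular $p$ is adjacent only to vertices of degree $\leq 2$, and since no neighbor of $p$ has degree $d$ (that would push $\maxdon(p)$ to $d$), the vertex $p$ is isolated from the degree-$d$ vertices. The remaining $d+1$ vertices all have $\maxdon$ equal to $d$, so each of them must have at least one neighbor of degree exactly $d$.

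Next I would count the degree-$d$ vertices. As in Lemma~\ref{lemma:d^d+1}, I expect there must be at least two vertices of degree $d$, since a single degree-$d$ vertex $x$ can only guarantee $\maxdon=d$ for the $d$ vertices in its open neighborhood, whereas we need $d+1$ such vertices (note $p$ is \emph{not} among them). So let $x,y$ be two distinct vertices of degree $d$. The key tension is that with $d+1$ vertices all requiring a degree-$d$ neighbor, and $p$ requiring that all \emph{its} neighbors have degree at most $2$, the degree-$d$ vertices must collectively dominate (in the open-neighborhood sense) all $d+1$ of the high-$\maxdon$ vertices while avoiding $p$ entirely. I would argue that the $d+1$ vertices with $\maxdon=d$ together with their forced adjacencies to $x,y,\ldots$ consume all the available degree at $x$ and $y$, leaving no edge slot to attach $p$ without either raising $p$'s $\maxdon$ or lowering the count below $d+1$.

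Concretely, I would try to show that the $d+1$ vertices of $\maxdon=d$ plus the structural constraint $N[x]=N[y]$ (or more generally that the degree-$d$ vertices share almost all neighbors) force these vertices to form a configuration on exactly $d+1$ vertices, analogous to the realization in Lemma~\ref{lemma:d^d+1}, which then has no room for the extra vertex $p$ of degree $\leq 2$ whose neighbor(s) must lie among these same vertices yet have degree at most $2$. The arithmetic obstruction is that $p$'s neighbor(s), being of degree $\leq 2$ and adjacent to $p$, cannot simultaneously be adjacent to a degree-$d$ vertex (else their own degree would be forced up or $\maxdon(p)$ would become $d$), so $p$ ends up in a separate component whose vertices would contribute a second "low" value to the profile, contradicting that $\sigma$ has exactly one entry below $d$.

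The main obstacle I anticipate is pinning down the global structure precisely enough: ruling out the case where the degree-$d$ vertices do \emph{not} all share a common neighborhood, so that I cannot immediately invoke the rigid $N[x]=N[y]$ picture of Lemma~\ref{lemma:d^d+1}. Handling this may require a subcase analysis on how many degree-$d$ vertices exist and how their neighborhoods overlap, together with a degree-sum (handshake) accounting showing that accommodating the single vertex $p$ of $\maxdon=2$ alongside exactly $d+1$ vertices of $\maxdon=d$ is infeasible. I would carry out this counting to pin the number of edges incident to the degree-$d$ vertices and show it is forced to a value inconsistent with attaching $p$.
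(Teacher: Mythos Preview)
Your overall strategy matches the paper's: pick the unique vertex $p$ with $\maxdon(p)=2$, observe there must be at least two degree-$d$ vertices $x,y$, and derive a contradiction from the total vertex count $d+2$. However, one concrete step in your sketch is wrong, and the ``obstacle'' you flag is exactly where the paper's argument is cleaner than what you propose.

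The flawed step is the claim that a neighbor $q$ of $p$ ``cannot simultaneously be adjacent to a degree-$d$ vertex (else their own degree would be forced up or $\maxdon(p)$ would become $d$).'' Neither alternative follows: if $q$ has exactly the two neighbors $p$ and $x$, then $\deg(q)=2$, so $\maxdon(p)$ is unaffected, and $q$ itself legitimately has $\maxdon(q)=d$. So you cannot conclude directly that $p$ sits in a separate component; you need more structure. The paper gets that structure by the dichotomy you anticipated but did not carry out: either $N[x]=N[y]$ or $N[x]\neq N[y]$. In the first case every vertex of $N[x]$ is adjacent to both $x$ and $y$, so any neighbor of $p$ inside $N[x]$ would have degree at least $3$, forcing $p$ to be isolated --- contradiction. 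In the second case a one-line cardinality count gives $|N[x]\cup N[y]|\geq d+2=|V(G)|$, so every vertex, including $p$, is adjacent to $x$ or $y$, contradicting $\maxdon(p)=2$. This simple inclusion--exclusion on $|N[x]\cup N[y]|$ is the missing idea that dissolves your ``main obstacle'' without any handshake accounting or analysis of how many degree-$d$ vertices there are.
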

\begin{proof}
Let us assume on the contrary that $\sigma$ is $\maxdon$ realizable by 
a graph $G$, and let $w\in V(G)$ be a vertex such that $\maxdon(w)=2$.  
The graph $G$ must contain at least two vertices, say $x$ and $y$, of
degree $d$, since a single vertex of degree $d$ can guarantee $\maxdon$
of $d$ for at most $d$ vertices in the graph.  Consider the following
two cases.
\begin{compactenum}[(i)]
\item $N[x]=N[y]$: In this case the $\maxdon$ of all the vertices in
  $N[x]=N[y]$ is at least $d\geq 3$, as they are adjacent to either
  $x$ or $y$.  Thus $w\notin N[x]$, which implies that $V(G)=N[x]\cup
  \{w\}$ since $|N[x]|=d+1$ and $|V(G)|=d+2$.  Also, $w$ cannot be
  adjacent to any vertex in $N[x]$, because if $w$ is adjacent to a
  vertex $w_0 \in N[x]$, then $\deg(w_0)$ must be $3$, in
  contradiction to the assumption $\maxdon(w)=2$.  Thus the only
  possibility left is that $w$ is a singleton vertex, which is again a
  contradiction.

\item $N[x]\neq N[y]$: In this case the vertex set of $G$ is equal to
  $N[x]\cup N[y]$ since size of $N[x]\cup N[y]$ must be at least $d+2$
  (as $|N[x]\cap N[y]|\leq d$) and is also at most $|V(G)| = d+2$.
  This implies that all the vertices of $G$ are adjacent to either $x$
  or $y$, which contradicts the fact that $\maxdon(w)=2$, since
  $\deg(x)=\deg(y)=d\geq 3$.
\end{compactenum}
\vspace{-5mm}
\end{proof}

\begin{lemma}
For any integers $d\geq 3$ and $k\geq 2$, the profile
$\sigma=(d^{d+1},2^{k})$ is $\maxdon$ realizable.
Moreover, we can compute a connected realization in $O(d+k)$ time.
\label{lemma:d^d+1_2^k}
\end{lemma}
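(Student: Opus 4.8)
The plan is to construct a connected $\maxdon$ realization of $\sigma=(d^{d+1},2^{k})$ for $d\geq 3,k\geq 2$ directly, by first building a ``core'' gadget realizing $(d^{d+1})$ together with a controlled attachment point, and then appending the $k$ vertices of $\maxdon$ value $2$ as a tail. The key difficulty, already flagged in the remark preceding Lemma~\ref{lemma:d^d+1_2^1}, is that $(d^{d+1})$ admits \emph{no} realization with a vertex of degree $1$ (Lemma~\ref{lemma:d^d+1}), so the standard \addlayer machinery driven by a \emph{valid} list is unavailable; every vertex in the core has degree at least $2$. Hence I must attach the degree-$2$ block using only the \emph{pseudo-valid} structure (an independent set of degree-$2$ vertices) guaranteed by Lemma~\ref{lemma:d^d+1}, and I must do so in a way that the attached vertices see a neighbor of degree exactly $2$ but never a neighbor of higher degree, so that their $\maxdon$ is exactly $2$ and not larger.

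\textbf{The construction.} I would start from the graph $G_0$ built in Lemma~\ref{lemma:d^d+1}: vertices $U=\{u_1,u_2\}$ with $\deg(u_1)=\deg(u_2)=d$, and $W=\{w_1,\ldots,w_{d-1}\}$ an independent set with each $\deg(w_i)=2$, where every $w_i$ is adjacent to both $u_1$ and $u_2$. This already realizes $(d^{d+1})$ and, crucially, contains $d-1\geq 2$ vertices of degree $2$ forming an independent set. Now I append the $k$ new vertices of target $\maxdon$ value $2$. The cleanest realization is to take the $k$ new vertices and form a structure in which each new vertex is adjacent only to vertices of degree at most $2$, while at the same time preserving the degrees $d$ of $u_1,u_2$ and not raising any $w_i$ above degree $2$. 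Concretely I would hang the new vertices off one of the degree-$2$ vertices of $W$, say $w_1$, but to respect $\maxdon(w_1)$ I must not increase $\deg(w_1)$ beyond $2$; so instead I would grow a separate path/chain of the $k$ new vertices anchored to the core through a single fresh connection, arranging the chain so that each of the $k$ vertices has a neighbor of degree exactly $2$ and no neighbor of degree $\geq 3$, which forces their $\maxdon=2$.

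\textbf{Verification} then splits into the routine degree/$\maxdon$ bookkeeping. I would check three things: (i) the $d+1$ core vertices $u_1,u_2,w_1,\ldots,w_{d-1}$ still satisfy $\maxdon=d$ (each $w_i$ sees $u_1$ of degree $d$; each $u_j$ sees the other $u$ of degree $d$); (ii) each of the $k$ appended vertices has $\maxdon=2$, which requires that along the attaching chain every appended vertex is adjacent to some degree-$2$ vertex and to no vertex of degree $\geq 3$; and (iii) the graph is connected, which follows since the chain is anchored to the connected core. The condition $k\geq 2$ enters exactly here: a single appended vertex cannot be given a degree-$2$ neighbor without either creating a degree-$3$ vertex in the core or leaving it a singleton, which is precisely the obstruction of Lemma~\ref{lemma:d^d+1_2^1}; with $k\geq 2$ the appended vertices can supply each other's degree-$2$ neighbors. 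The running time is $O(d)$ for the core and $O(k)$ for the tail, giving $O(d+k)$ total.

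\textbf{The main obstacle} I anticipate is designing the attachment so that the anchoring edge into the core does \emph{not} push any core vertex to degree $3$ (which would corrupt some $w_i$'s intended $\maxdon$) and simultaneously guarantees that every appended vertex genuinely \emph{achieves} $\maxdon=2$ rather than $\maxdon=1$ or inheriting a larger value through a short path to $u_1$ or $u_2$. Getting the local degree profile of the chain right — so that the endpoints and interior behave correctly under the closed-versus-open neighborhood distinction — is the delicate part; the rest is mechanical degree counting.
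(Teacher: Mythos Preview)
Your proposal has a genuine gap that is not merely a detail to be filled in: the approach of starting from the Lemma~\ref{lemma:d^d+1} core $G_0$ (with $u_1\sim u_2$) and appending a chain of $k$ vertices of $\maxdon$ value $2$ cannot succeed. You correctly identify the obstacle in your final paragraph, but it is in fact fatal to this plan. Every vertex of $G_0$ has degree at least $2$ (indeed, exactly $2$ or $d$). For the result to be connected, some chain vertex $c$ must be joined to some core vertex $v$; after adding that edge, $\deg(v)\geq 3$, and hence $\maxdon(c)\geq 3$, contradicting the requirement $\maxdon(c)=2$. No local rerouting fixes this: subdividing a core edge produces an extra vertex with $\maxdon=d$, and deleting a core edge drops $\deg(u_1)$ or $\deg(u_2)$ below $d$, destroying the $\maxdon=d$ count. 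So the combination ``keep all $d+1$ core vertices at $\maxdon=d$'' and ``attach $k$ new vertices each seeing only degree-$\leq 2$ neighbors'' is infeasible.

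The paper's construction sidesteps this by \emph{reversing the roles}: it builds a different $(d+1+k)$-vertex graph in which the two degree-$d$ hubs $u_1,u_2$ are \emph{not} adjacent. Each $u_j$ then sees only vertices of degree $\leq 2$ in its open neighborhood, so $\maxdon(u_1)=\maxdon(u_2)=2$; meanwhile the $d+1$ low-degree vertices $v_1,\ldots,v_{d+1}$ (where $v_1$ is private to $u_1$ and $v_{d+1}$ is private to $u_2$) each see a hub and hence have $\maxdon=d$. Because $v_{d+1}$ starts with degree $1$, the path of the remaining $k-2$ vertices can be hung from it, raising $\deg(v_{d+1})$ only to $2$; every path vertex then has $\maxdon=2$. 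The crucial idea you are missing is that the vertices with $\maxdon=2$ need not be the appended tail at all---they can be the high-degree hubs themselves, once you arrange that the hubs are not neighbors of each other.
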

\begin{proof}
The construction of $G$ is as follows.  
Take a vertex $u_1$ and connect it to $d$ other vertices 
$v_1,\ldots,v_{d}$.  Next, take another vertex $u_2$ and connect it 
to vertices $v_2, \ldots, v_{d}$, and a new vertex $v_{d+1}$. 
Finally, take a path $(a_1,a_2,\ldots,a_{\alpha})$ on $\alpha=k-2$ 
new vertices, and connect $a_1$ to $v_{d+1}$.  
In the graph $G$, $\deg(u_1)=\deg(u_2) = d$, and $\deg(v_i), \deg(a_j)
\leq 2$, for $i \in [1,d+1]$ and $j\in [1,k-2]$.  Vertices $u_1$ and
$u_2$ has maximum degree in their neighborhood $2$, thus
$\maxdon(u_1)=\maxdon(u_2) = 2$. Each $v_i$ is adjacent to $u_1,u_2$, for
$i \in [1,d+1]$, so its $\maxdon$ is $d$. And, the $\maxdon$ of vertices
on the path $(a_1,a_2,\ldots,a_{\alpha})$ is $2$, since they have a
neighbour of degree~$2$.
\end{proof}

\subsection{Algorithm}
We now explain the construction of a graph realizing the profile
$\sigma=(d_\ell^{n_\ell}, \cdots, d_1^{n_1})\neq
(d_\ell^{d_\ell+1},2^1)$ that satisfies the conditions
\begin{inparaenum}[(i)]
\item $d_\ell \leq \min\{n_\ell,n-1\}$, and
\item $d_1\geq 2$
\end{inparaenum}, where $n=n_1+\cdots+n_\ell$.
If $\sigma$ is equal to $(d_\ell^{d_\ell+1},2^k)$, for some $k\geq 2$, we use
Lemma~\ref{lemma:d^d+1_2^k} to realize $\sigma$.  If not, then
depending upon the value of $n_\ell$, we initialize $G$ differently as
follows.  
(Refer to Algorithm~\ref{algo:maxNdegOpen} for the pseudocode).

\begin{algorithm}[!ht]
\caption{$\maxdon$ realization of $\sigma=(d_\ell^{n_\ell},\ldots, d_1^{n_1})$
\label{algo:maxNdegOpen}}
\Input{A sequence $\longsigma\neq ({d_\ell}^{d_\ell+1}2^1)$ satisfying $d_\ell\leq \min\{n-1,n_\ell\}$ and $d_1\geq 2$.}
\BlankLine
\BlankLine
\uIf {$\sigma=(d_\ell^{d_\ell+1},2^k)$ for some $k\geq 2$} 
{Use Lemma~\ref{lemma:d^d+1_2^k} to compute a realization $G$ for profile $\sigma$.\\}
\Else{
\uCase{$n_\ell\geq d_\ell+2$}{
Initialize $G$ to be the graph obtained from Lemma~\ref{lemma:simple-case-open}
that realizes the profile $(d_\ell^{n_\ell})$.\\
Set $L_{\ell-1}$ to be a valid list in $G$ of size $d_{\ell-1}-1$.\\
}
\uCase{$n_\ell=d_\ell+1$}{
Initialize $G$ to be the graph obtained from Lemma~\ref{lemma:d^d_dbar^1}
that realizes the profile $(d_\ell^{d_\ell+1})$.\\
Set $L_{\ell-1}$ to be a pseudo-valid list in $G$ of size $d_{\ell-1}-1$.\\
}
\Case{$n_\ell=d_\ell$}{
Initialize $G$ to be the graph obtained from Lemma~\ref{lemma:d^d+1}
that realizes the profile $(d_\ell^{d_\ell}d_{\ell-1})$.\\
Set $L_{\ell-1}$ to be a valid list in $G$ of size $d_{\ell-1}-1$.\\
Decrement $n_{\ell-1}$ by $1$.\\
}
\For{$(i=\ell-1$ to $1)$}{
$L_{i-1}\gets$ \addlayer$(G, L_{i},n_i, d_i)$.\\
Truncate list $L_{i-1}$ to contain only the first $d_{i-1}-1(\leq d_i -2)$
vertices.\\
}}
Output $G$.
\end{algorithm}

\begin{compactenum}
\item If $n_\ell\geq d_\ell+2$, we use Lemma~\ref{lemma:simple-case-open}
  to initialize $G$ to be a $\maxdon$ realization of the profile
  $(d_\ell^{n_\ell})$.  Recall $G$ contains an independent set, say
  $W=\{w_1,w_2,\ldots,w_{d_\ell}\}$, satisfying the condition that the
  degree of first two vertices is one, and the degree of the remaining
  vertices is at most two. We set $L_{\ell-1}$ to be the list
  $(w_1,w_2,\ldots,w_{d_{\ell-1}-1})$ (notice $d_{\ell-1}-1< d_\ell$).
  It is easy to verify that this list is valid.

\item If $n_\ell=d_\ell+1$, then a realization of
  $(d_\ell^{d_\ell+1})$ does not contains a valid list.  So we use
  Lemma~\ref{lemma:d^d+1} to initialize $G$ to be a
  $\maxdon$ realization of the profile $(d_\ell^{d_\ell+1})$ that
  contains a pseudo-valid list. This is possible since we showed $G$
  contains an independent set, say
  $W=\{w_1,w_2,\ldots,w_{d_\ell-1}\}$, such that degree of each $w\in
  W$ is two.  We set $L_{\ell-1}$ to be the list
  $(w_1,w_2,\ldots,w_{d_{\ell-1}-1})$ (again notice $d_{\ell-1}-1<
  d_\ell-1$).

\item If $n_\ell=d_\ell$, then the sequence $d_\ell^{d_\ell}$ is not
  realizable (see Lemma~\ref{lemma:suff_maxdDegOpen}).  So we initialize $G$
  to be the graph realization of $(d_\ell^{n_\ell},d_{\ell-1})$ as
  obtained from Lemma~\ref{lemma:d^d_dbar^1}.  We set $L_{\ell-1}$ be
  a valid list in $G$ of size $d_{\ell-1}-1$.  Also we decrement
  $n_{\ell-1}$ by one as $G$ already contain a vertex whose $\maxdon$ is
  $d_{\ell-1}$.
\end{compactenum}

Next for each $i=\ell-1$ to $1$ we perform following steps.
\begin{inparaenum}[(i)] 
\item We take as an input the valid list $L_{i}$ of size $d_{i}-1$,
  and execute Procedure \addlayer$(G, L_{i},n_i, d_i)$ to add $n_i$
  new vertices to $G$. The procedure returns a valid list $L_{i-1}$ of
  size $d_i-2$.
\item Truncate list $L_{i-1}$ to contain only the first
  $d_{i-1}-1(\leq d_i -2)$ vertices.  The truncated list
  remains valid since it is a prefix of a valid list.
\end{inparaenum}

\subparagraph{Correctness.}
Let $\bar V_\ell$ denote the set of vertices in graph $G$ initialized
in steps 5, 8, or 11 of Algorithm~\ref{algo:maxNdegOpen}, 
and for $i\in [1,\ell-1]$, let $\bar V_i$ denote the set of new vertices 
added to graph $G$ in iteration $i$ of for loop.
For $i\in [1,\ell]$, let $G_i$ be the graph induced by vertices
$\bar{V}_i \cup \cdots \cup \bar{V}_\ell$.

Recall that if $n_\ell=d_\ell$, then the graph is initialized in step
11 and contains $n_\ell+1$ vertices, of which one vertex, say~$z$, has
$\maxdon(z)=d_{\ell-1}$, and the remaining vertices have $\maxdon=d_\ell$.
If $n_\ell=d_\ell$, then let $Z=\{z\}$, otherwise let $Z=\emptyset$.
We set $V_\ell=\bar{V}_\ell \setminus Z$, $V_{\ell-1}=\bar{V}_{\ell-1}
\cup Z$, and $V_i = \bar{V}_i$ for $i\in [1,\ell-2]$.  Thus
$|V_i|=n_i$, for $i\in [1,\ell]$. The following lemma proves the correctness.


\begin{lemma}
\label{lemma:mdon-algo-correct}
For any $i\in [1,\ell]$, graph $G_i$ is a $\maxdon$ realization of
profile $(d_\ell^{n_\ell},\cdots,d_i^{n_i})$, except for the case 
$n_\ell=d_\ell$ in which $G_\ell$ is $\maxdon$ realization of profile $(d_\ell^{n_\ell},d_{\ell-1})$.
Moreover, for any $j\in [i,\ell]$, we have
\begin{compactenum}
\item For every $v\in V_j\setminus Z$, $\deg_{G_i}(v)\leq \maxdon_{G_i}(v) = d_j$.
\item If $n_\ell=d_\ell$, then $\deg_{G_i}(z)=d_\ell$ and
  $\maxdon_{G_i}(z) = d_{\ell-1}$.
\end{compactenum}
\end{lemma}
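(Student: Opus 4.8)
The plan is to mirror the inductive argument of Lemma~\ref{lemma:mdcn-algo-correct}, proving the statement by downward induction on $i$ from $\ell$ to $1$, but now carrying the extra bookkeeping forced by the three possible initializations of $G_\ell$ and by the distinguished vertex $z$. For the base case $i=\ell$ I would split according to how $G_\ell=G$ is initialized. If $n_\ell\geq d_\ell+2$ then $G_\ell$ comes from Lemma~\ref{lemma:simple-case-open} and realizes $(d_\ell^{n_\ell})$ with every vertex of degree at most $d_\ell$; here $Z=\emptyset$, so part~1 is immediate and part~2 is vacuous. If $n_\ell=d_\ell+1$ then $G_\ell$ comes from Lemma~\ref{lemma:d^d+1}, again realizing $(d_\ell^{n_\ell})$ with $Z=\emptyset$. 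If $n_\ell=d_\ell$ then $G_\ell$ comes from Lemma~\ref{lemma:d^d_dbar^1} and realizes $(d_\ell^{d_\ell},d_{\ell-1}^1)$; the unique vertex of $\maxdon$ equal to $d_{\ell-1}$ is exactly $z$, with $\deg_{G_\ell}(z)=d_\ell$, establishing part~2, while the remaining $d_\ell$ vertices form $V_\ell$ and satisfy part~1. In all three cases the claimed profile of $G_\ell$ (with the stated exception) holds.

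For the inductive step I would assume the claim for $G_{i+1}$ and analyze the single call $\addlayer(G,L_i,n_i,d_i)$ that produces $G_i$ by adding the set $\bar V_i=W$. The list $L_i$ is valid: for $i\leq \ell-2$ it is the truncated output of the previous call, valid by Proposition~\ref{proposition:valid-open}; for $i=\ell-1$ in the pseudo-valid initialization I would invoke Lemma~\ref{lemma:pseudo-valid}, noting that this branch occurs only when $d_{\ell-1}\geq 3$, since $\sigma=(d_\ell^{d_\ell+1},2^1)$ is excluded and $\sigma=(d_\ell^{d_\ell+1},2^k)$ with $k\geq 2$ is diverted to Lemma~\ref{lemma:d^d+1_2^k}. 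Proposition~\ref{proposition:valid-open} then supplies the three facts I rely on: every $w\in W$ has $\deg_{G_i}(w)\leq \maxdon_{G_i}(w)=d_i$; every new edge lies in $W\times(W\cup L_i)$; and every $a\in L_i$ has $\deg_{G_i}(a)\leq d_i$. For $v\in V_i=W$ this immediately yields part~1. For $v\in V_j$ with $j>i$ (and for $z$), the open neighborhood of $v$ can acquire a new neighbor only if $v\in L_i$, and any such new neighbor lies in $W$, hence has degree at most $d_i<d_j$; meanwhile each old neighbor of $v$ either keeps its degree (if not in $L_i$) or has degree bounded by $d_i<d_j$ (if in $L_i$). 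Thus the neighbor realizing $\maxdon_{G_{i+1}}(v)=d_j$ is untouched and no neighbor degree reaches $d_j$ from below, so $\maxdon_{G_i}(v)=d_j$ is preserved; and if $v\in L_i$ then $\deg_{G_i}(v)\leq d_i<d_j$, giving part~1.

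The part I expect to require the most care is the persistence of part~2 for the special vertex $z$. Here I would argue that $z$ is never inserted into any list: the list $L_{\ell-1}$ handed to the first call is, by Lemma~\ref{lemma:d^d_dbar^1}, built from the low-degree vertices $v_1,\ldots,v_{d_\ell-1}$, excluding both $z$ and its distinguished neighbor $y$ of degree $d_{\ell-1}$, and every subsequent list consists only of vertices of $W$ together with vertices already in the previous list. Consequently $z\notin L_i$ and $y\notin L_i$ for all $i$, so neither ever receives a new edge, whence $\deg_{G_i}(z)=d_\ell$ and $\deg_{G_i}(y)=d_{\ell-1}$ throughout. Every other neighbor of $z$ is some $v_t$, whose degree stays at most $d_i\leq d_{\ell-1}$ whenever it participates in a list and at most $2$ otherwise; hence the maximum neighbor degree of $z$ remains exactly $d_{\ell-1}$, proving part~2.

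Finally, since $\{V_i,\ldots,V_\ell\}$ partitions $V(G_i)$ with $|V_j|=n_j$ (the vertex $z$ being counted in $V_{\ell-1}$ after the decrement of $n_{\ell-1}$), parts~1 and~2 together show that $G_i$ realizes $(d_\ell^{n_\ell},\ldots,d_i^{n_i})$, completing the induction. The two places most likely to hide a gap are the justification that the pseudo-valid branch always feeds $\addlayer$ a parameter $d\geq 3$, and the invariant that $z$ and $y$ are permanently frozen outside every list; both are what make the $\maxdon$ analysis genuinely different from the closed-neighborhood case.
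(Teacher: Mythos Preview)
Your proposal is correct and follows essentially the same downward-induction approach as the paper's proof, with the same three-way base-case split and the same two-case analysis of $v\in V_i\setminus Z$ versus $v\in V_j\setminus Z$ for $j>i$ in the inductive step. You are in fact more explicit than the paper in two places---justifying that the pseudo-valid branch always feeds $\addlayer$ a parameter $d\geq 3$, and arguing that both $z$ and its distinguished neighbor $y$ stay outside every list---whereas the paper's proof simply asserts that $z$ ``does not participate in procedure $\addlayer$'' and that its $\maxdon$ ``never changes from the same reasoning as above.''
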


\begin{proof}
We prove the claim by induction on the iterations of the for loop.
The base case is for index $\ell$, and the claim follows from
Lemmas~\ref{lemma:simple-case-open},~\ref{lemma:d^d_dbar^1}, and
~\ref{lemma:d^d+1}.  Specifically, notice that every vertex $v \in
V_\ell$ that is included in $G$ in step 5, 8, or 11 of the algorithm
has $\maxdon(v) = d_\ell$.  In the case $n_\ell = d_\ell$, the vertex $
z \in V_{\ell-1}$ included in step 11 of algorithm has
$\maxdon(z) = d_{\ell-1}$.
Also, in both the cases, $V_\ell\cup Z$ is the vertex set of $G$, and degree
of all the vertices in this set is bounded by $d_\ell$.

For the inductive step, we assume that the claim holds for $i+1$, and
prove the claim for $i$.
Consider any vertex $v$ in $G_i$.  We have
two cases.
\begin{compactenum}
\item $v \in V_i\setminus Z~:$ In this case by
  Proposition~\ref{proposition:valid-open} and
  Lemma~\ref{lemma:pseudo-valid}, $\deg_{G_i}(v) \leq
  \maxdon_{G_i}(v) = d_i$.
\item $v \in V_j\setminus Z$, for $j > i~:$ In this case we first show 
  that for any vertex $w \in N(v)$, $\deg_{G_i}(w)\leq d_j$. 
  If $w \in V_i\setminus Z$, then we already showed 
  $\deg_{G_i}(w) \leq d_i$. So we next consider the case 
  $w\in (V_{i+1} \cup \cdots \cup V_{\ell})\setminus Z$.  
  Now if $w \in L_i$ participates in Procedure
  $\addlayer(G, L_{i},n_i, d_i)$, then by
  Proposition~\ref{proposition:valid-open} in the updated graph
  $\deg_{G_i}(w)\leq d_i \leq d_j$.  If $w\not\in L_i$, then 
  the degree of $w$ is unaltered in the $i^{th}$ iteration, and 
  thus $\deg_{G_i}(w)=\deg_{G_{i+1}}(w) \leq d_j$ by the 
  inductive hypothesis.  If $n_\ell=d_\ell$ and $w=z\in Z$, then 
  also $\deg_{G_i}(w)=\deg_{G_{i+1}}(w)$ since vertex $z$ 
  never participates in procedure $\addlayer$. It follows that 
  $\maxdon(v)$ remains unaltered due to iteration $i$, and thus 
  $\maxdon_{G_i}(v) = \maxdon_{G_{i+1}}(v) = d_j$.
\end{compactenum}

Now when $n_\ell=d_\ell$, then 
$\deg_{G_\ell}(z)=d_\ell$ and $\maxdon_{G_\ell}(z) = d_{\ell-1}$.
The degree of vertex~$z$ never changes since it does not participates
in procedure $\addlayer$. The $\maxdon$ of $z$ never changes
from the same reasoning as above.
\end{proof}


The execution time of algorithm takes $O\big(\sum_{i=1}^\ell (n_i+d_i)\big)$ time, which can be easily shown to be optimal. The following theorem is immediate from the above discussions.

\begin{theorem}
There exists an algorithm that given any profile
$\sigma=(d_\ell^{n_\ell},\ldots, d_1^{n_1})\neq ({d_\ell}^{d_\ell+1}2^1)$ 
with $n=n_1+\cdots +n_\ell$
satisfying $d_\ell\leq \min\{n-1,n_\ell\}$
and $d_1\geq 2$, computes in optimal time 
a connected $\maxdon$ realization of $\sigma$.
\label{theorem:mdon_algo}
\end{theorem}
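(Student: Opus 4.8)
The plan is to prove the theorem constructively by exhibiting Algorithm~\ref{algo:maxNdegOpen} and establishing two things: (a) it terminates with a \emph{connected} $\maxdon$ realization of $\sigma$, and (b) its running time is $O\big(\sum_{i=1}^\ell(n_i+d_i)\big)$, which is optimal. The overall strategy mirrors the closed-neighborhood construction of Section~\ref{section:maxNdeg}: build the realization in $\ell$ layers, processing the degree values in decreasing order $d_\ell,d_{\ell-1},\ldots,d_1$, where layer $i$ contributes a set $V_i$ of $n_i$ vertices all having $\maxdon$ equal to $d_i$. The engine driving the growth is procedure \addlayer, whose open-neighborhood guarantee is recorded in Proposition~\ref{proposition:valid-open}: given a connected graph carrying a valid list of size $d_i-1$, it attaches $n_i$ new vertices of $\maxdon$ value $d_i$, keeps the graph connected, confines all new edges to $W\times(W\cup L_i)$, and hands back a fresh valid list of size $d_i-2$, which is then truncated to the $d_{i-1}-1$ prefix needed for the next round.

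First I would dispose of the exceptional family $\sigma=(d_\ell^{d_\ell+1},2^k)$ with $k\geq 2$, realizing it directly via Lemma~\ref{lemma:d^d+1_2^k}; the barred profile $(d_\ell^{d_\ell+1},2^1)$ is genuinely non-realizable by Lemma~\ref{lemma:d^d+1_2^1}, which is exactly why the hypothesis excludes it. For every remaining input I would initialize the top layer by branching on $n_\ell$. When $n_\ell\geq d_\ell+2$, Lemma~\ref{lemma:simple-case-open} supplies a connected realization of $(d_\ell^{n_\ell})$ carrying a genuine valid list (two degree-$1$ anchors plus degree-$\leq 2$ vertices). When $n_\ell=d_\ell+1$, Lemma~\ref{lemma:d^d+1} still realizes $(d_\ell^{d_\ell+1})$, but crucially no such realization can contain a degree-$1$ vertex, so the best available starting list is only \emph{pseudo}-valid. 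When $n_\ell=d_\ell$ the uniform profile is unrealizable, so I would instead seed with the realization of $(d_\ell^{d_\ell+1},d_{\ell-1}^1)$ from Lemma~\ref{lemma:d^d_dbar^1} (which does carry a valid list), absorbing its extra $d_{\ell-1}$-vertex $z$ by decrementing $n_{\ell-1}$ and tracking $z$ separately.

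With the initialization in hand, correctness reduces to the inductive invariant formalized in Lemma~\ref{lemma:mdon-algo-correct}: for every $i$, the graph $G_i$ realizes $(d_\ell^{n_\ell},\ldots,d_i^{n_i})$ (with the single bookkeeping exception of $z$ when $n_\ell=d_\ell$), and every vertex placed in a layer $j\geq i$ satisfies $\deg_{G_i}(v)\leq\maxdon_{G_i}(v)=d_j$. The inductive step checks that adding layer $i$ corrupts no earlier vertex: new edges touch only $W\cup L_i$, the list vertices stay at degree $\leq d_i\leq d_j$ by Proposition~\ref{proposition:valid-open}, and untouched vertices retain their inductive values, so no $\maxdon$ changes. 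Connectivity is preserved at every step because each seed is connected and \addlayer attaches $W$ to the existing list $L_i$. Finally the total edge count, and hence the running time, is $O\big(\sum_i(n_i+d_i)\big)$, matching the lower bound: every vertex has positive degree, forcing $\Omega(\sum_i n_i)$ edges, and each distinct value $d_i$ forces a vertex of that degree, contributing $\Omega(\sum_i d_i)$.

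I expect the main obstacle to be the case $n_\ell=d_\ell+1$, precisely because the only available seed list is pseudo-valid rather than valid. The key repair is Lemma~\ref{lemma:pseudo-valid}, showing that a single invocation of \addlayer with parameter $d\geq 3$ converts a pseudo-valid list into a genuinely valid one, after which the ordinary induction runs unchanged. The delicate point is that this repair fails exactly when the next degree value is $d_{\ell-1}=2$: a pseudo-valid list has all degrees equal to $2$, violating the $\deg\leq d-1$ precondition of \addlayer. But that degenerate situation is precisely $\sigma=(d_\ell^{d_\ell+1},2^k)$, already routed to Lemma~\ref{lemma:d^d+1_2^k}. Verifying that these case boundaries tile the hypothesis space exactly, and that the lone gap $(d_\ell^{d_\ell+1},2^1)$ is the unique non-realizable instance carved out by the hypothesis, is the part that would demand the most care.
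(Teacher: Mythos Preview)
Your proposal is correct and follows essentially the same approach as the paper: the same three-way initialization branching on $n_\ell$, the same routing of $(d_\ell^{d_\ell+1},2^k)$ to Lemma~\ref{lemma:d^d+1_2^k}, the same use of Lemma~\ref{lemma:pseudo-valid} to upgrade the pseudo-valid seed list, the same inductive invariant (Lemma~\ref{lemma:mdon-algo-correct}), and the same optimality argument. One small slip: in the $n_\ell=d_\ell$ branch the seed profile from Lemma~\ref{lemma:d^d_dbar^1} is $(d_\ell^{d_\ell},d_{\ell-1}^1)$, not $(d_\ell^{d_\ell+1},d_{\ell-1}^1)$; your surrounding description (decrementing $n_{\ell-1}$ and tracking the single extra vertex $z$) makes clear this is just a typo in the exponent.
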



\subsection{Complete characterization of $\maxdon$ profiles.}
We first give the sufficient conditions for a profile to be $\maxdon$ realizable.

\begin{lemma}
\label{lemma:suff_maxdDegOpen}
A necessary condition for the profile $\sigma=(d_\ell^{n_\ell}, \cdots, d_1^{n_1})$ with $n=n_1+\cdots+n_\ell$ to be $\maxdon$ realizable is $d_\ell \leq \min\{n_\ell, n-1\}$.
\end{lemma}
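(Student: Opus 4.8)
The plan is to adapt the necessary-condition argument used for closed neighborhoods (Lemma~\ref{lemma:suff_maxdeg}), while carefully tracking the open-neighborhood subtlety that a maximum-degree vertex need not itself be counted in $n_\ell$. I would fix any graph $G$ realizing $\sigma$ and let $x$ be a vertex of maximum degree $D$ in $G$. The first key step is to establish that $D=d_\ell$. Every neighbour $y$ of $x$ has $x$ in its open neighbourhood, so $\maxdon(y)\geq \deg(x)=D$; since the largest value in the profile is $d_\ell$, this forces $D\leq d_\ell$. Conversely, because $n_\ell\geq 1$ there is a vertex $w$ with $\maxdon(w)=d_\ell$, which by definition has a neighbour of degree $d_\ell$, so $D\geq d_\ell$. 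Hence $D=d_\ell$.

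Once $D=d_\ell$ is in hand, both inequalities follow quickly. For $d_\ell\leq n-1$, I would note that $x$ has $d_\ell$ distinct neighbours, all different from $x$, so $n\geq |N[x]|=d_\ell+1$. For $d_\ell\leq n_\ell$, I would observe that each of the $d_\ell$ neighbours $y$ of $x$ satisfies $\maxdon(y)\geq D=d_\ell$, and since $d_\ell$ is the maximum possible $\maxdon$ value we in fact have $\maxdon(y)=d_\ell$; thus these $d_\ell$ distinct vertices are all counted in $n_\ell$, giving $n_\ell\geq d_\ell$. Combining the two bounds yields $d_\ell\leq\min\{n_\ell,n-1\}$.

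The only genuinely delicate point---and the reason the bound here is $n_\ell\geq d_\ell$ rather than the stronger $n_\ell\geq d_\ell+1$ of the closed-neighborhood case---is that the maximum-degree vertex $x$ is excluded from its own open neighbourhood. Hence $\maxdon(x)$ is determined solely by the degrees of $x$'s neighbours and may be strictly smaller than $d_\ell$, so $x$ itself cannot in general be counted among the $n_\ell$ vertices attaining the top value, and only its $d_\ell$ neighbours contribute. I expect this bookkeeping distinction to be the main (and essentially the only) obstacle; the remainder is a direct translation of the closed-neighborhood proof.
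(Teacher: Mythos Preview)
Your proof is correct and follows essentially the same approach as the paper's: pick a vertex of maximum degree, argue that this degree equals $d_\ell$, and then observe that its $d_\ell$ open neighbours all attain $\maxdon=d_\ell$ while its closed neighbourhood has size $d_\ell+1\le n$. Your version is in fact more careful than the paper's, which simply asserts without justification that ``there exists at least one vertex of degree exactly $d_\ell$''; you supply the two-sided inequality $D\le d_\ell$ and $D\ge d_\ell$ explicitly, and you also correctly highlight why the open-neighborhood setting yields only $n_\ell\ge d_\ell$ rather than $d_\ell+1$.
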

\begin{proof}
Suppose $\sigma$ is $\maxdon$ realizable by a graph $H$. Then there exists at least one vertex, say $u$, of degree exactly $d_\ell$ in $H$.  Now $|N(u)|=d_\ell$ and $|N[u]|=d_\ell+1$, which implies that the number of vertices in $H$ whose $\maxdon$ is $d_\ell$ must be at least $d_\ell$, so $n_\ell \geq d_\ell$. Also, the number of vertices in the graph $H$, $n$, must be at least $d_\ell+1$.
\end{proof}

Consider a profile $\sigma=(d_\ell^{n_\ell}, \cdots, d_1^{n_1})$
realizable by a connected graph.  If $d_1=1$, then the realizing graph
must contain a vertex, say $u$, such that each vertex in $N(u)$ has
degree $1$.  Let $d=\deg(u)$, and $v_1,\ldots,v_d$ be the neighbours of
$u$.  Then $\deg(v_1)=\cdots=\deg(v_d)=1$.  So in this case the
realizing graph is a star graph $K_{1,d}$ with $\maxdon$ profile
$\sigma=(d^d,1^1)$.  If $d_1\geq 2$, then by
Lemma~\ref{lemma:suff_maxdDegOpen}, for $\sigma$ to be realizable in this
case, we need that $d_\ell \leq \min\{n_\ell,n-1\}$.  Also,
Lemma~\ref{lemma:d^d+1_2^1} implies that $\sigma$ must not be
$(d^{d+1},2^1)$.  By Theorem~\ref{theorem:mdon_algo}, under these
conditions $\sigma$ is always realizable.  We thus have the following
theorem.

\begin{theorem} 
The necessary and sufficient condition  for a profile
$\sigma=(d_\ell^{n_\ell}, \cdots, d_1^{n_1}) \neq (d^{d+1},2^1)$
with $n=n_1+\cdots +n_\ell$
to be $\maxdon$ realizable by a connected graph is~
\begin{inparaenum}[(i)]
\item $d_\ell \leq \min\{n_\ell,n-1\}$ and $d_1\geq 2$; or 
\item $\sigma=(d^d,1^1)$ for some positive integer $d>1$; or
\item $\sigma=(1^2)$.
\end{inparaenum}
\label{theorem:mdon_connected}
\end{theorem}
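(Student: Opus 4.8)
The plan is to establish the biconditional by splitting on the value of $d_1$, using the constructive algorithm of Theorem~\ref{theorem:mdon_algo} for the bulk of the sufficiency and treating the two degenerate profiles (ii) and (iii) by tiny explicit constructions. I would organize the proof as two directions, sufficiency and necessity, each with a case split on whether $d_1\geq 2$ or $d_1=1$.

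For \emph{sufficiency}, suppose condition (i) holds: $d_\ell\leq\min\{n_\ell,n-1\}$, $d_1\geq 2$, and (by the standing hypothesis of the theorem) $\sigma\neq(d^{d+1},2^1)$. Then Theorem~\ref{theorem:mdon_algo} applies verbatim and outputs a connected $\maxdon$ realization, so nothing further is needed here. For condition (ii) I would exhibit the star $K_{1,d}$: its center $c$ has degree $d$ and an open neighborhood of $d$ leaves, each of degree $1$, so $\maxdon(c)=1$, while each leaf sees only $c$ and thus has $\maxdon$ equal to $d$; reading off the profile gives exactly $(d^d,1^1)$. For condition (iii) the single edge $K_2$ realizes $(1^2)$, since each of its two endpoints sees the other endpoint of degree $1$.

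The substance lies in \emph{necessity}. Here I would fix an arbitrary connected realization $G$ and argue by cases. If $d_1\geq 2$, then Lemma~\ref{lemma:suff_maxdDegOpen} immediately forces $d_\ell\leq\min\{n_\ell,n-1\}$, and combined with the assumption $d_1\geq 2$ this is precisely condition (i). If instead $d_1=1$, I would pick any vertex $u$ with $\maxdon(u)=1$. Every neighbor of $u$ is adjacent to $u$, hence has degree at least $1$, and the constraint $\maxdon(u)=1$ caps each such degree at $1$; so every neighbor of $u$ is a leaf attached solely to $u$. Consequently the connected component of $u$ is exactly the star on $\{u\}$ together with its neighbors, and connectivity of $G$ forces $G=K_{1,d}$ with $d=\deg(u)$. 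Reading off the profile of $K_{1,d}$ then yields $(d^d,1^1)$ when $d\geq 2$, which is condition (ii), and collapses to $(1^2)$ when $d=1$, which is condition (iii).

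The step I expect to be the main obstacle is the $d_1=1$ analysis, where the delicate point is not a calculation but a rigidity argument: one must show that the mere presence of a single vertex with $\maxdon=1$ pins down the \emph{entire} connected graph as a star, rather than only constraining a local neighborhood, and then carefully separate the degenerate $d=1$ case (in which the two degree-classes coincide and the profile collapses to $(1^2)$) from the generic star profile $(d^d,1^1)$ with $d>1$. I would also remark that the exclusion of $(d^{d+1},2^1)$ built into the theorem's hypothesis is consistent with Lemma~\ref{lemma:d^d+1_2^1}, which proves that this profile is genuinely non-realizable for $d\geq 3$ (the only values for which it is a valid profile with distinct degrees), so it correctly sits outside the realizable characterization.
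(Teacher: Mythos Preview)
Your proposal is correct and follows essentially the same approach as the paper: split on whether $d_1\geq 2$ or $d_1=1$, invoke Theorem~\ref{theorem:mdon_algo} and Lemma~\ref{lemma:suff_maxdDegOpen} for the former, and use the star rigidity argument for the latter. Your treatment is slightly more explicit in separating the $d=1$ and $d\geq 2$ subcases of the star, but the substance is identical.
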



For general graphs we have the following theorem.

\begin{theorem} 
The necessary and sufficient condition for a profile $\sigma$ to be $\maxdon$ realizable by a general graph is that $\sigma$ can be split\footnotetext{
A profile $\sigma=(d_\ell^{n_\ell},\cdots,d_1^{n_1})$ is said to be split into two profiles $\sigma_1=(d_\ell^{p_\ell},\cdots,d_1^{p_1})$ and $\sigma_2=(d_\ell^{q_\ell},\cdots,d_1^{q_1})$ if  $n_i=p_i+q_i$ for each $i\in[1,\ell]$.
}
into two profiles $\sigma_1$ and $\sigma_2$ such that
\begin{inparaenum}[(i)]
\item $\sigma_1$ has a connected $\maxdon$ realization, and 
\item $\sigma_2=(1^{2\alpha})$ or $\sigma_2=(d^d, 1^{2\alpha+1})$
  for some integers $d\geq 2, \alpha\geq 0$.
\end{inparaenum}
\label{theorem:mdon_general}
\end{theorem}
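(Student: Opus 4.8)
The plan is to prove the two directions separately; the forward (sufficiency) direction is routine, and the converse (necessity) carries essentially all the content. For sufficiency, suppose $\sigma$ splits as $\sigma_1+\sigma_2$ with the stated properties. I would realize $\sigma_1$ by a connected graph $G_1$ (which exists by hypothesis), realize $\sigma_2$ explicitly --- as $\alpha$ disjoint edges when $\sigma_2=(1^{2\alpha})$, or as a star $K_{1,d}$ together with $\alpha$ disjoint edges when $\sigma_2=(d^d,1^{2\alpha+1})$, since $K_{1,d}$ has $\maxdon$ profile $(d^d,1^1)$ and each edge contributes $(1^2)$ --- and take the disjoint union. Because the $\maxdon$ value of a vertex depends only on its own connected component, this disjoint union realizes exactly $\sigma_1+\sigma_2=\sigma$.

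For necessity, I would start from an arbitrary general realization $G$ and decompose it into connected components. Each component is itself a connected $\maxdon$-open realization of its own profile, so by Theorem~\ref{theorem:mdon_connected} every component is one of three types: a graph with $d_1\geq 2$ (the \emph{heavy} components), a star $K_{1,c}$ with $c\geq 2$ (profile $(c^c,1^1)$), or a single edge (profile $(1^2)$). The vertices with $\maxdon=1$ are exactly the star centers and the edge endpoints, so if $t$ is the number of stars and $e$ the number of edges, then $n_1=t+2e$ and, crucially, the parity of $n_1$ equals the parity of $t$. Writing $\sigma^{\geq 2}$ for the restriction of $\sigma$ to its blocks of degree at least $2$, I would split according to the parity of $t$: when $t$ is even, set $\sigma_1=\sigma^{\geq 2}$ and $\sigma_2=(1^{n_1})$; when $t$ is odd, pick one star $K_{1,c}$, set $\sigma_2=(c^c,1^{n_1})$ and $\sigma_1=\sigma^{\geq 2}\setminus(c^c)$. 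In both cases $\sigma_2$ has the required form (with $\alpha=e$, resp.\ $\alpha=(t-1)/2+e$), and $\sigma_1$ has $d_1\geq 2$.

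The heart of the argument is to show that this $\sigma_1$ is connected-realizable, and I would isolate two auxiliary facts. First, a \emph{pairing lemma}: the combined profile of any two stars, namely $(a^a,b^b)$ or $(a^{2a})$, satisfies the condition $d_\ell\leq\min\{n_\ell,n-1\}$ of Theorem~\ref{theorem:mdon_connected}(i) and is not the forbidden $(d^{d+1},2^1)$, hence is connected-realizable. Second, a \emph{promotion lemma}: any profile realizable by a general graph with $d_1\geq 2$ is in fact connected-realizable, because Lemma~\ref{lemma:suff_maxdDegOpen} forces $d_\ell\leq\min\{n_\ell,n-1\}$ while Lemma~\ref{lemma:d^d+1_2^1} rules out the exceptional profile, so Theorem~\ref{theorem:mdon_connected}(i) applies. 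With these in hand I would argue that $\sigma_1$ is general-realizable by exhibiting an explicit disconnected realization: keep all heavy components as they are, and pair up the remaining stars ($t$ of them when $t$ is even, $t-1$ when $t$ is odd, both even counts), realizing each pair by the graph from the pairing lemma. The disjoint union of the heavy components and the paired-star realizations realizes $\sigma_1$, and the promotion lemma upgrades this to a connected realization.

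The main obstacle I anticipate is precisely this interplay between parity and realizability of single blocks: a lone star block $(c^c)$ is \emph{not} $\maxdon$-realizable in isolation (as $c<c+2$ fails Lemma~\ref{lemma:simple-case-open}), so one cannot simply strip the centers off and re-realize the leaf-blocks independently; instead one must pair stars two at a time, which is why the parity of $t$ is genuinely used and why exactly one star is left unpaired (and absorbed, with all edges, into $\sigma_2$) when $t$ is odd. A few degenerate cases need a line each: when $\sigma_1$ is empty (e.g.\ $\sigma$ is a single star plus edges, or a union of edges), $\sigma$ already equals $\sigma_2$ and the split is trivial. Note also that no separate treatment of the exception $(d^{d+1},2^1)$ for $\sigma_1$ is needed, since the promotion lemma already absorbs it through the non-realizability established in Lemma~\ref{lemma:d^d+1_2^1}.
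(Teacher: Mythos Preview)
Your proof is correct and rests on the same core idea as the paper's --- classify the components of a realization into heavy components (all $\maxdon\ge 2$), stars $K_{1,c}$, and edges, then pair up stars so that their leaf-blocks can be absorbed into the connected part, leaving at most one unpaired star together with all edge-vertices for $\sigma_2$. The difference is organizational. The paper carries this out as an iterative graph transformation: while two star components remain, replace them by an edge plus a connected realization of their combined leaf profile $(\delta_2^{\delta_2},\delta_1^{\delta_1})$ or $(\delta_2^{2\delta_2})$; then merge all heavy components into one by appealing (implicitly) to Theorem~\ref{theorem:mdon_algo}, and read the split off the resulting graph. You instead compute the split directly from the parity of the number $t$ of stars, and isolate the realizability of $\sigma_1$ into an explicit ``promotion lemma'' (general-realizable with $d_1\ge 2$ implies connected-realizable), which the paper uses but does not name. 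Your route makes the role of parity transparent and cleanly separates the combinatorial split from the realizability check; the paper's iterative version has the virtue of being constructive at every step and of producing the extra edges in $\sigma_2$ as a visible byproduct of each pairing.
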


\begin{proof}
Suppose $\sigma$ is realizable by graph $G$. 
Let ${\cal C}(G)$ be a set consisting of all those components in $G$ 
that contain a vertex of $\maxdon$ equal to $1$ but is not an edge.
As a long as $|{\cal C}(G)|>1$, we perform following modifications to
$G$. Take any two components $C_1,C_2\in {\cal C}(G)$,
and let $\sigma_1$ and $\sigma_2$ be their $\maxdon$ profiles.
For $i=1,2$, component $C_i$ must be of form $K_{1,\delta_i}$ and contain 
$\delta_i(\geq 2)$ vertices of $\maxdon$ equal to $\delta_i$, and a 
single vertex of $\maxdon$ equal to 1. Let us assume $\delta_2\geq \delta_1$.
We replace $C_1$ and $C_2$ in $G$
by two different components, namely, an edge and 
(i) a connected $\maxdon$ realization of profile $\delta_2^{\delta_1+\delta_2}$ if $\delta_2=\delta_1$, or
(ii) a connected $\maxdon$ realization of profile $(\delta_2^{\delta_2},\delta_1^{\delta_1})$ if $\delta_2> \delta_1$.
In each iteration we decrease $|{\cal C}(G)|$ by a value two.
In the end if ${\cal C}(G)|$ is non-empty we denote the only component in it by $C_0$.
Next let $\bar C_1,\ldots,\bar C_k$ be all those components in $G$ that 
contain only the vertices of $\maxdon$ strictly greater than $1$.
Also let $\sigma_1,\ldots,\sigma_k$ be their $\maxdon$ profiles.
If $k>0$, we replace the components $\bar C_1,\ldots,\bar C_k$ by a single
connected component, say $\bar C_0$, that realizes the profile $\sigma_1+\cdots+\sigma_k$. 
It is easy to verify from Theorem~\ref{theorem:mdon_algo} that $\sigma_1+\cdots+\sigma_k$
will be $\maxdon$ realizable.
The final graph $G$ contains 
(i) at most one component, namely $\bar C_0$, having all vertices of $\maxdon$ greater than $1$,
(ii) at most one component, namely $C_0$, having exactly one vertex of $\maxdon$ equal to $1$, and
(iii) a union of some $\alpha\geq 0$ disjoint edges.
This shows that $\sigma$ can be split into two profiles $\sigma_1$ and $\sigma_2$ such that
\begin{inparaenum}[(i)]
\item $\sigma_1$ has a connected $\maxdon$ realization, and 
\item $\sigma_2=(1^{2\alpha})$ or $\sigma_2=(d^d, 1^{2\alpha+1})$
for some integers $d\geq 2, \alpha\geq 0$.
\end{inparaenum}
To prove the converse notice that $\sigma_2=(1^{2\alpha})$ is realizable by a disjoint union
of $\alpha\geq 0$ edges, and $\sigma_2=(d^d, 1^{2\alpha+1})$ is realizable
by a disjoint union of $\alpha$ edges and the star graph $K_{1,d}$.
Thus any $\sigma$ that can be split into two profiles $\sigma_1$ and $\sigma_2$
such that
\begin{inparaenum}[(i)]
\item $\sigma_1$ has a connected $\maxdon$ realization, and
\item $\sigma_2=(1^{2\alpha})$ or $\sigma_2=(d^d, 1^{2\alpha+1})$
for some integers $d\geq 2, \alpha\geq 0$
\end{inparaenum}
is $\maxdon$ realizable.
\end{proof}

\section{Concluding remarks on extremal neighborhood degree profiles}
\label{section:discussion}


Our work focuses on two similar neighborhood profiles, \maxdn and \mindn,
which capture two opposing extremes of the neighborhood,
but yet exhibit a surprising difference in structure.
The realizability of \maxdn profiles depends only on their prefix; in contrast,
the realizability characterization of \mindn profiles is incomplete
and depends on the entire profile.
Let us conclude with a brief discussion exploring the reasons behind this structural difference.

Let us first consider the \maxdn profile $\longsigma$  for a graph $G=(V,E)$. For $1\leq i\leq \ell$, let $W_i\subseteq V$ be the set of vertices whose
\maxdn in $G$ is at least $d_i$.
Note that for any vertex $v\in W_i$, a vertex having maximum degree in $N_G[v]$ (say $x$) must be contained in $W_i$. Moreover, all the neighbors of $x$ must also lie in $W_i$. It follows that the degree of $x$ remains unaltered when restricted to the induced subgraph $G[W_i]$, and $\maxdn_G(v)=\maxdn_{G[W_i]}(v)$. Hence, \maxdn profiles satisfy the following nice {\em substructure property}, which also justifies the incremental algorithm for computing their realizations given in Section~\ref{section:maxNdeg}:

\begin{sub-property}
The induced graph $G_i=G[W_i]$ is a $\maxdn$ realization  of the partial profile $(d_\ell^{n_\ell},\cdots,d_i^{n_i})$, for each $1\leq i\leq \ell$.
\vspace{-1.7mm}
\end{sub-property}

A natural question is whether a similar property holds for \mindn profiles. Unfortunately, in this case the answer is negative. To see why, consider the $\mindn$ profile $\longsigma$ for $G=(V,E)$. If $\mindn_G(v)$ is $d_i$ (for some $i$ and $v$), and $x=\arg\min\{\deg(x)~|~x\in N[v]\}$ is a leader of $v$, then the $\mindn$ of all vertices in $N[x]$ is {\em at most} $d_i$. But if we take $W_i$ to be the set of all vertices whose $\mindn$ in $G$ is {\em at most} $d_i$, and drop the vertices $z$ with $\mindn_G(z) > d_i$, i.e., look at the induced graph $G[W_i]$, then the degrees of $v$'s neighbors might {\em decrease}, so its leader might change. Hence the sub-structure property does not hold, which renders an incremental construction impossible, and contributes to the intricacy of realizing \mindn profiles.

Nevertheless, in this work we obtain a simple 2-approximate bound on the achievable $n_i$'s. The problem of obtaining an exact characterization for \mindn profiles is left as an interesting open question for future research.

%

\newpage
\appendix
\begin{center}
\huge{\bf Appendix~~}
\end{center}
\section{Counting the Number of Realizable MaxNDeg Sequences}
\label{section:counting}

We use the characterizations of the variants of \maxdn in order to count 
the number of realizable sequences.
Our results are summarized in the following theorem.

\begin{theorem}
\label{thm:counting}
For $n\ge 5$
\begin{itemize}
\item There are $2^{n-3}$ realizable sequences with connected graph in
  the closed neighborhood model.
\item There are $2^{n-2}-1$ realizable sequences with connected graph
  in the open neighborhood model.
\item There are $\ceil{\frac{2^{n-1}+(-1)^n}{3}}$ realizable sequences
  with any graph in the closed neighborhood model.
\item There are at least $\ceil{\frac{2^{n}-2}{3}} -
  \ceil{\frac{n-4}{2}}$ realizable sequences and at most $2^{n-1}-1$
  realizable sequences with any graph in the open neighborhood model.
\end{itemize}
\end{theorem}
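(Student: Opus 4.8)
The plan is to use the four characterizations (Theorems~\ref{thm:char_closed_con}, \ref{thm:char_closed_gen}, \ref{theorem:mdon_connected}, and~\ref{theorem:mdon_general}) to translate each counting problem into a sum over non-increasing integer sequences, and then to evaluate those sums with one reusable combinatorial lemma. Encoding a profile $\sigma=(d_\ell^{n_\ell},\ldots,d_1^{n_1})$ of length $n$ by its top value $D=d_\ell$, its multiplicity $m=n_\ell$, and the non-increasing ``tail'' formed by the remaining $L=n-m$ entries, the number of admissible tails whose entries lie in a value-set of size $s$ is $\binom{L+s-1}{s-1}$ by stars and bars. Summing over the allowed multiplicities via the hockey-stick identity collapses the double sum to a single binomial coefficient, after which a binomial-sum or geometric-series evaluation yields the stated closed forms. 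First I would set up this reduction once and reuse it throughout.

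For the connected closed model, Theorem~\ref{thm:char_closed_con} forces $d_1\ge 2$ and $m\ge D+1$ (the tiny profile $(1^2)$ being irrelevant for $n\ge 5$). Fixing $D$, the tail ranges over $0\le L\le n-D-1$ with values in $\{2,\ldots,D-1\}$, so $s=D-2$; hockey-stick gives $\binom{n-3}{D-2}$, and summing over $D=2,\ldots,n-1$ gives $\sum_{j=0}^{n-3}\binom{n-3}{j}=2^{n-3}$. For the connected open model I would repeat the computation with the weaker requirement $m\ge D$ coming from Theorem~\ref{theorem:mdon_connected}; the tail then satisfies $0\le L\le n-D$, producing $\binom{n-2}{D-2}$ and hence $\sum_{j=0}^{n-3}\binom{n-2}{j}=2^{n-2}-1$. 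Finally I would correct for the two exceptional length-$n$ profiles singled out by the characterization: the forbidden $(d_\ell^{d_\ell+1},2^1)$ (here $d_\ell=n-2$) is subtracted and the admissible star $(d^d,1^1)$ (here $d=n-1$) is added, and these cancel, leaving exactly $2^{n-2}-1$.

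For the general closed model I would split the count by the smallest value. Profiles with $d_1\ge 2$ are exactly the connected-closed profiles, contributing $2^{n-3}$. Profiles with $d_1=1$ are admissible (Theorem~\ref{thm:char_closed_gen}) precisely when $n_1$ is even; deleting an even degree-$1$ block $1^{2t}$ leaves on the remaining $n-2t$ vertices an all-$\ge 2$ profile obeying the same top-multiplicity condition, i.e.\ a connected-closed profile of length $n-2t$ (or the empty profile, giving $(1^n)$ when $n$ is even). Writing $C(m)=2^{m-3}$ for the connected-closed count, the total becomes $\sum_{j\equiv n\ (\mathrm{mod}\ 2),\,3\le j\le n}2^{j-3}+[n\text{ even}]$, a finite geometric series in $4$. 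Evaluating it separately for the two parities yields $\tfrac{2^{n-1}-1}{3}$ and $\tfrac{2^{n-1}+1}{3}$, i.e.\ $\tfrac{2^{n-1}+(-1)^n}{3}$; since this is always an integer, the ceiling is vacuous.

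The general open model is the hard part, and is exactly why only bounds are claimed. For the upper bound I would invoke the necessary condition $d_\ell\le n_\ell$ from Lemma~\ref{lemma:suff_maxdDegOpen}: the same stars-and-bars computation, now allowing $d_1=1$ (so $s=D-1$ tail values), counts $\sum_{j=1}^{n-2}\binom{n-1}{j}+1=2^{n-1}-1$ profiles meeting this condition, and every realizable profile is among them. For the lower bound I would exhibit an explicit realizable subfamily and count it: starting from any connected-open realizable all-$\ge 2$ profile on $n-2\alpha$ vertices and attaching $\alpha$ disjoint edges (a valid split with $\sigma_2=(1^{2\alpha})$), together with the star-based splits $\sigma_2=(d^d,1^{2\alpha+1})$, gives realizable families whose sizes are again geometric sums of the $B(m)=2^{m-2}-2$ counts; summing these and absorbing the parity- and small-$m$ corrections into the additive term $-\ceil{\frac{n-4}{2}}$ produces $\ceil{\frac{2^n-2}{3}}-\ceil{\frac{n-4}{2}}$. \textbf{The main obstacle} is that the split in Theorem~\ref{theorem:mdon_general} is highly non-unique: a degree-$1$ vertex can be supplied either by an edge-component or by a star-component, so the edge- and star-families overlap and resist a clean inclusion--exclusion. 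Pinning down these overlaps exactly is what blocks a closed form and forces the gap between the two bounds, so I would settle for counting one clean realizable subfamily for the lower bound and the full set satisfying the necessary degree condition for the upper bound, rather than an exact enumeration.
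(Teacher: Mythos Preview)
Your approach matches the paper's almost exactly. For the two connected models both arguments fix $D=d_\ell$, count the remaining non-increasing tail via stars-and-bars, and sum the resulting binomial coefficients; the only cosmetic difference is that you additionally fix $m=n_\ell$ and collapse the sum over $m$ with the hockey-stick identity, whereas the paper lets the length-$(n-D-1)$ (resp.\ $n-D$) suffix range directly over $\{2,\ldots,D\}$ and reaches $\binom{n-3}{D-2}$ (resp.\ $\binom{n-2}{D-2}$) in one step. For the general closed model the paper sets up the recursion $\cgen(n)=\cgen(n-2)+\ccon(n)$ and checks the closed form by induction; your geometric-series computation is the same recursion unrolled. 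Your upper bound for the general open model is identical to the paper's.

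The one place your sketch is weaker is the general-open \emph{lower} bound. The paper's lower-bound family is precisely $\{\sigma_1+(1^{2\alpha}):\sigma_1\ \text{connected-open-realizable}\}$, with the star $(d^d,1^1)$ allowed as $\sigma_1$; this decomposition is unique, giving the clean recursion $\ogenl(n)=\ogenl(n-2)+\ocon(n)$ with base values $\ogenl(3)=2$, $\ogenl(4)=5$. Your variant instead counts an all-$\ge 2$ connected part via $B(m)=2^{m-2}-2$ and then adds ``star-based splits $\sigma_2=(d^d,1^{2\alpha+1})$''. But once $\sigma_1$ is nonempty in the latter, the value $d$ is not determined by $\sigma$ (the paper's own example $(3^6,2^2,1^1)$ decomposes with either $d=2$ or $d=3$), so summing over decompositions overcounts distinct profiles and no longer certifies a lower bound; and $B(m)=2^{m-2}-2$ is off by one for $m\in\{3,4\}$, so ``absorbing small-$m$ corrections'' is doing real work you have not justified. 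The clean fix is to fold the star back into $\sigma_1$ and sum $\ocon(m)$ directly, which is exactly the paper's recursion and makes the corrections disappear.
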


There are $(n-1)^n$ unordered sequences $(d_n,\ldots,d_1)$ of length $n$
on the integers $1,\ldots,n-1$.
We count the number of non-increasing such sequences denoted by $S_n$.
Let $f(i,j,k)$ be the number of non-increasing sequences of length $k$
on the integers $i,\ldots,j$.  By definition, $S_n = f(1,n-1,n)$.

\begin{observation}
\label{obs:fijk}
$f(i,j,k) = \binom{k+j-i}{k}$.
\end{observation}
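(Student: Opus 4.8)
The plan is to identify $f(i,j,k)$ with a multiset count and then evaluate it via the standard stars-and-bars formula. The key observation is that a non-increasing sequence over $\{i,\ldots,j\}$ is uniquely determined by the \emph{multiset} of its entries, since any given multiset admits exactly one non-increasing arrangement. Hence $f(i,j,k)$ equals the number of size-$k$ multisets drawn from the ground set $\{i,\ldots,j\}$, which has $m := j-i+1$ distinct elements.

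To count these multisets I would set up the usual shift bijection with strictly increasing sequences. After reversal it suffices to count non-decreasing sequences $b_1 \le b_2 \le \cdots \le b_k$ with each $b_t \in \{i,\ldots,j\}$. Define $c_t = b_t + (t-1)$ for $t \in [1,k]$. Then $c_{t+1} - c_t = (b_{t+1}-b_t) + 1 \ge 1$, so the $c_t$ are strictly increasing; moreover $c_1 = b_1 \ge i$ and $c_k = b_k + (k-1) \le j + k - 1$, so every $c_t$ lies in $\{i,\ldots,j+k-1\}$. The map $(b_t) \mapsto (c_t)$ is a bijection from non-decreasing length-$k$ sequences over $\{i,\ldots,j\}$ onto strictly increasing length-$k$ sequences over $\{i,\ldots,j+k-1\}$, with inverse $b_t = c_t - (t-1)$.

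Finally, strictly increasing length-$k$ sequences over a set correspond one-to-one with $k$-element subsets of that set, and the set $\{i,\ldots,j+k-1\}$ has $(j+k-1)-i+1 = j-i+k$ elements. Therefore $f(i,j,k) = \binom{j-i+k}{k} = \binom{k+j-i}{k}$, as claimed. The argument has no genuine obstacle; the only point requiring care is the off-by-one bookkeeping in the shifted range, namely confirming that the image values sweep out a set of size exactly $j-i+k$ (equivalently, that $m = j-i+1$, so that stars-and-bars yields $\binom{m+k-1}{k} = \binom{j-i+k}{k}$).
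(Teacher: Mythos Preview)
Your proof is correct and takes essentially the same approach as the paper: both identify $f(i,j,k)$ with the number of size-$k$ multisets from a ground set of $j-i+1$ elements and then evaluate this via a standard combinatorial identity. The paper phrases the final count as stars-and-bars (placing $k$ balls into $j-i+1$ ordered bins, i.e., inserting $j-i$ dividers among $k$ balls), whereas you spell out the shift bijection to strictly increasing sequences and hence to $k$-subsets of a $(j-i+k)$-element set; these are just two standard derivations of the same binomial coefficient.
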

\begin{proof}
This is equivalent to counting the number of ways of placing $k$ balls
into $j-i+1$ ordered bins which is equivalent to inserting $j-i$
dividers among a line of $k$ balls.
\end{proof}

\begin{corollary}
\label{cor:Sn}
$S_n \approx \frac{4^{n-1}}{\sqrt{\pi n}}$.
\end{corollary}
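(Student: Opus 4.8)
The plan is to first evaluate $S_n$ in closed form using Observation~\ref{obs:fijk}, and then extract its asymptotics through Stirling's approximation. Setting $i=1$, $j=n-1$, $k=n$ in Observation~\ref{obs:fijk} gives
\[
S_n = f(1,n-1,n) = \binom{n+(n-1)-1}{n} = \binom{2n-2}{n}.
\]
Thus the entire corollary reduces to estimating a single binomial coefficient, and the remaining work is purely analytic.

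The cleanest route is to relate $\binom{2n-2}{n}$ to the \emph{central} binomial coefficient $\binom{2m}{m}$, whose asymptotics are classical. Observe that $\binom{2n-2}{n}$ is one step off-center: writing $m=n-1$, the central coefficient is $\binom{2m}{m}=\binom{2n-2}{n-1}$, and a direct comparison of factorials yields the correction factor
\[
\binom{2n-2}{n} = \frac{n-1}{n}\,\binom{2n-2}{n-1}.
\]
I would then invoke the standard Stirling-based estimate $\binom{2m}{m}\sim 4^m/\sqrt{\pi m}$ with $m=n-1$, obtaining $\binom{2n-2}{n-1}\sim 4^{n-1}/\sqrt{\pi(n-1)}$.

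Finally I would combine the two ingredients. Since $\frac{n-1}{n}\to 1$ and $\sqrt{\pi(n-1)}\sim\sqrt{\pi n}$, multiplying the correction factor by the central asymptotic gives
\[
S_n \sim \frac{n-1}{n}\cdot\frac{4^{n-1}}{\sqrt{\pi(n-1)}} \sim \frac{4^{n-1}}{\sqrt{\pi n}},
\]
which is exactly the claimed relation.

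There is no genuine obstacle here; the only point requiring a little care is that $S_n$ is a \emph{near-central} rather than exactly central binomial coefficient, so one must retain the $\frac{n-1}{n}$ factor and verify that both it and the $\sqrt{n-1}$ versus $\sqrt{n}$ discrepancy wash out in the limit. An equally valid alternative is to bypass the central-coefficient reduction entirely and apply Stirling's formula directly to $(2n-2)!/(n!\,(n-2)!)$; this involves slightly more arithmetic, but it is entirely routine, and the $\frac{n-1}{n}$ correction then emerges automatically from the ratio of the polynomial prefactors.
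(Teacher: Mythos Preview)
Your proposal is correct and follows essentially the same approach as the paper: obtain $S_n=\binom{2n-2}{n}$ from Observation~\ref{obs:fijk}, relate it to a central binomial coefficient, and invoke the Stirling estimate $\binom{2m}{m}\approx 4^m/\sqrt{\pi m}$. The only cosmetic difference is that the paper compares $\binom{2n-2}{n}$ to $\binom{2n}{n}$ via the ratio $\binom{2n}{n}/\binom{2n-2}{n}=4+\tfrac{2}{n-1}$, whereas you compare it to $\binom{2n-2}{n-1}$ via the factor $\tfrac{n-1}{n}$; both reductions are equally valid and lead to the same conclusion.
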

\begin{proof}
$S_n = \binom{2n-2}{n}$ by Observation~\ref{obs:fijk}.
Stirling's formula implies the following approximation for the central
binomial coefficient $\binom{2n}{n} \approx \frac{4^n}{\sqrt{\pi n}}$.
The corollary follows since 
$\frac{\binom{2n}{n}}{\binom{2n-2}{n}} =
4+\frac{2}{n-1}$.
\end{proof}

Theorem~\ref{thm:counting} and Corollary~\ref{cor:Sn} imply that the
number of realizable sequences in all variants is roughly
$\Theta(\sqrt{S_n})$.


\subsection{Connected Graphs in the Closed Neighborhood Model}

Let $\ccon(n)$ be the number of length $n$ sequences that are \maxdn
realizable with a connected graph in the closed neighborhood model.
Recall that by Theorem~\ref{thm:char_closed_con} the sequence
$\sigma=(d_\ell^{n_\ell},\ldots,d_1^{n_1}) \in S_n$ can be realized
with a connected graph in the closed neighborhood model if and only if
one of the following holds:
\begin{inparaenum}[(i)]
\item $n = 2$: $\sigma=(1^2)$.
\item $n\ge 3$: $n_\ell \geq d_\ell-1$ and $d_1 \geq 2$.
\end{inparaenum}

\begin{lemma}
\label{lem:CCn}
$\ccon(2)=1$ and $\ccon(n) = 2^{n-3}$, for $ n \geq 3$.
\end{lemma}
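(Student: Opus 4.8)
The plan is to dispose of $n=2$ by hand and to treat $n\geq 3$ as a double summation that collapses through the hockey-stick identity.

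For $n=2$, the only non-increasing sequence in $S_2$ on the single value $\{1\}$ is $(1^2)$, which is realizable by Theorem~\ref{thm:char_closed_con}; hence $\ccon(2)=1$. For $n\geq 3$ I would invoke Theorem~\ref{thm:char_closed_con} in the form that a sequence $\sigma=(d_\ell^{n_\ell},\ldots,d_1^{n_1})\in S_n$ is realizable by a connected graph in the closed-neighborhood model if and only if $d_1\geq 2$ and $n_\ell\geq d_\ell+1$. I would then stratify the count by the maximum value $M=d_\ell$ and its multiplicity $m=n_\ell$. The condition $d_1\geq 2$ forces every entry of $\sigma$ to lie in $\{2,\ldots,n-1\}$, so the $n-m$ non-maximal entries form an arbitrary non-increasing sequence with values in $\{2,\ldots,M-1\}$; by Observation~\ref{obs:fijk} there are exactly $f(2,M-1,n-m)=\binom{(n-m)+M-3}{\,n-m\,}$ of these.

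Summing over the admissible multiplicities $m\in\{M+1,\ldots,n\}$, i.e.\ over $k:=n-m\in\{0,\ldots,n-M-1\}$, yields
\[
\ccon(n)=\sum_{M=2}^{n-1}\sum_{k=0}^{n-M-1}\binom{k+M-3}{k}.
\]
The two key simplifications are both instances of the hockey-stick identity $\sum_{k=0}^{K}\binom{k+r}{k}=\binom{K+r+1}{K}$. First, with $r=M-3$ and $K=n-M-1$ the inner sum collapses to $\binom{n-3}{\,n-M-1\,}$. Second, reindexing the outer sum by $j=n-M-1$ (so $M=2$ gives $j=n-3$ and $M=n-1$ gives $j=0$) turns it into a full binomial row,
\[
\ccon(n)=\sum_{j=0}^{n-3}\binom{n-3}{j}=2^{n-3}.
\]

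I expect the only delicate point, rather than a genuine obstacle, to be the boundary values $M=2$ and $M=n-1$, where the binomial coefficients involve the degenerate arguments $\binom{-1}{0}$ and $\binom{n-3}{0}$. I would verify separately that these terms correctly count the all-$M$ profiles: $M=2$ forces every entry to equal $2$ and contributes a single sequence $(2^n)$, while $M=n-1$ forces $n_\ell\geq n$, hence $\sigma=((n-1)^n)$. Confirming that each extreme contributes exactly $1$ shows the summation limits are precisely right and that the collapse to $2^{n-3}$ is valid for all $n\geq 3$.
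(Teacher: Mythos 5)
Your proof is correct, and it reaches the same binomial sum $\sum_{d}\binom{n-3}{d-2}=2^{n-3}$ as the paper, but it gets there by a finer decomposition that the paper deliberately avoids. You stratify by both the maximum value $M$ and its multiplicity $m$, count the non-maximal entries as non-increasing sequences on $\{2,\ldots,M-1\}$ (strictly below $M$), and then need the hockey-stick identity $\sum_{k=0}^{K}\binom{k+r}{k}=\binom{K+r+1}{K}$ to collapse the inner sum over $m$ back down to $\binom{n-3}{n-M-1}$. The paper instead fixes only $d=d_\ell$, observes that the first $d+1$ entries are forced to equal $d$, and lets the remaining $n-d-1$ entries range over non-increasing sequences on $\{2,\ldots,d\}$ \emph{including} $d$ itself; any extra copies of $d$ in that suffix are exactly the surplus multiplicity $n_\ell-(d+1)$, so the constraint $n_\ell\ge d_\ell+1$ is absorbed into a single application of Observation~\ref{obs:fijk}, giving $f(2,d,n-d-1)=\binom{n-3}{d-2}$ with no inner summation. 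Your route costs an extra identity and forces you to police the degenerate boundary terms $\binom{-1}{0}=1$ and $\binom{k-1}{k}=0$ at $M=2$ (which you correctly flag and resolve); the paper's trick buys a one-line count per value of $d$. Both are valid, and your separate verification that $M=2$ and $M=n-1$ each contribute exactly one profile closes the only place where your version could silently go wrong.
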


\begin{proof}
By the first part of the characterization, $\ccon(2)=1$.
Assume $n \geq 3$.  Let $d=d_{\ell}$. By the second part of the
characterization, the first $d+1$ values in any realizable sequence
must be equal to $d$. The suffix of length $n-d-1$ is a non-increasing
sequence on the numbers $2,\ldots,d$. By the definition of $f(i,j,k)$
with $i=2$, $j=d$, and $k=n-d-1$ and by Observation~\ref{obs:fijk} the
number of such sequences is
$$
f(2,d,n-d-1)
= \binom{(n-d-1)+d-2}{n-d-1} 
= \binom{n-3}{d-2}
~.
$$
The value of $d$ ranges from $2$ to $n-1$. Hence, the total number of
realizable sequences is
$$
\ccon(n)
= \sum_{d=2}^{n-1} \binom{n-3}{d-2} 
= \sum_{i=0}^{n-3} \binom{n-3}{i} 
= 2^{n-3}
~.
$$
\end{proof}

\subsection{Connected Graphs in the Open Neighborhood Model}

Let $\ocon(n)$ be the number of length $n$ sequences that are \maxdn
realizable with a connected graph in the open neighborhood model.
Recall that by Theorem~\ref{theorem:mdon_connected} a sequence $\sigma
= (d_\ell^{n_\ell},\ldots,d_1^{n_1}) \in S_n$ can be realized with a
connected graph in the open neighborhood model if and only if one of
the following holds:
\begin{compactenum}[(i)]
\item $n=2$: $\sigma=(1^2)$.
\item $n\ge 3$: $\sigma=((n-1)^{n-1},1^1)$.
\item $n\ge 3$: $d_\ell \le n_\ell$, $d_1\ge 2$, and
  $\sigma\ne((n-2)^{n-1},2^1)$.
\end{compactenum}
Note that the sequence in item~2 is the only sequence in which one
vertex has a maximum degree 1 in its open neighborhood.  It is
realizable by the star graph.

\begin{lemma}
\label{lem:OCn}
$\ocon(2)=1$, $\ocon(3)=2$, $\ocon(4)=4$, and $\ocon(n)=2^{n-2}-1$,
for $n \ge 5$.
\end{lemma}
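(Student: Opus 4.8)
The plan is to mirror the counting strategy used for $\ccon(n)$ in Lemma~\ref{lem:CCn}, but starting from the characterization in Theorem~\ref{theorem:mdon_connected}. First I would dispose of the base case $n=2$, where clause~(i) leaves only $\sigma=(1^2)$, so $\ocon(2)=1$. For $n\ge 3$ I would partition the realizable sequences into the single ``star'' sequence $((n-1)^{n-1},1^1)$ coming from clause~(ii) and the bulk coming from clause~(iii). These two families are disjoint, since the star has $d_1=1$ whereas every clause-(iii) sequence has $d_1\ge 2$; this also means the star is never accidentally double-counted.

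The core computation is to count the sequences obeying the two positive constraints of clause~(iii), namely $d_\ell\le n_\ell$ and $d_1\ge 2$, temporarily setting aside the forbidden sequence. Fixing the maximum value $d=d_\ell$, the constraint $n_\ell\ge d$ forces the first $d$ entries to equal $d$, while $d_1\ge 2$ forces every entry to lie in $\{2,\ldots,d\}$; the remaining $n-d$ entries may therefore be an arbitrary non-increasing sequence on $\{2,\ldots,d\}$. By Observation~\ref{obs:fijk} there are $f(2,d,n-d)=\binom{n-2}{d-2}$ such completions. Summing over the admissible range $d=2,\ldots,n-1$ and applying $\sum_{i=0}^{n-3}\binom{n-2}{i}=2^{n-2}-1$ yields exactly $2^{n-2}-1$ sequences meeting the two positive constraints.

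Finally I would reconcile this count with the two corrections dictated by the characterization: add $1$ for the star sequence of clause~(ii), and subtract $1$ for the forbidden sequence $((n-2)^{n-1},2^1)$, which is otherwise a legitimate clause-(iii) candidate. The delicate point, which I expect to be the main obstacle, is that these corrections behave differently in the small cases: the profile $((n-2)^{n-1},2^1)$ has two genuinely distinct degrees only when $n-2>2$, i.e.\ $n\ge 5$, and for $n=3,4$ it degenerates (its two ``distinct'' degrees coincide), so it is never counted in the $2^{n-2}-1$ total and there is nothing to subtract. Consequently, for $n\ge 5$ the $+1$ and $-1$ corrections cancel and $\ocon(n)=2^{n-2}-1$, whereas for $n=3,4$ only the $+1$ correction applies, giving $\ocon(3)=1+1=2$ and $\ocon(4)=3+1=4$. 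To confirm this boundary behaviour I would check the two small totals by hand, verifying that the positive-constraint count equals $1$ for $n=3$ (only $(2^3)$) and $3$ for $n=4$ (namely $(2^4)$, $(3^4)$, and $(3^3,2^1)$), which pins down the discrepancy between the small cases and the general formula.
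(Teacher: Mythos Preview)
Your proposal is correct and follows essentially the same approach as the paper's own proof: both fix $d=d_\ell$, count non-increasing suffixes on $\{2,\ldots,d\}$ via Observation~\ref{obs:fijk} to get $\binom{n-2}{d-2}$, and sum to $2^{n-2}-1$; and both exploit the cancellation between the star sequence $((n-1)^{n-1},1^1)$ and the forbidden sequence $((n-2)^{n-1},2^1)$ for $n\ge 5$. The only difference is cosmetic: the paper dispatches $n=3,4$ by explicitly listing the realizable sequences, whereas you derive those values from the same counting framework by observing that the forbidden profile degenerates when $n-2\le 2$, so only the $+1$ correction survives.
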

\begin{proof}
Following the characterization, one could verify the following: 
\begin{compactitem}
\item $(1,1)$ is the only realizable sequence of length
  $2$.  Therefore, $\ocon(2)=1$.
\item $(2,2,2)$ and $(2,2,1)$ are the only realizable sequences of
  length $3$.  Therefore, $\ocon(2)=2$.
\item $(3,3,3,3)$, $(3,3,3,2)$, $(3,3,3,1)$, and $(2,2,2,2)$ are the
  only realizable sequences of length $4$.  Therefore $\ocon(4)=4$.
\end{compactitem} 
Assume $n\ge 5$.  Let $d = d_\ell$.  For the sake of counting, we
assume that $((n-2)^{n-1},2^1)$ should be counted while
$((n-1)^{n-1},1^1)$ should not.  Hence, we need to count the sequences
for which $d_\ell \le n_\ell$ and $d_1 \ge 2$.
It follows that the number of realizable sequences with $d=d_{\ell}$
is the number of sequences in which the first $d$ values are equal to
$d$ and the suffix of length $n-d$ is a non-increasing sequence on the
numbers $2,\ldots,d$.  By Observation~\ref{obs:fijk} with $i=2$,
$j=d$, and $k=n-d$ the number of such sequences is
$$
f(2,d,n-d)
= \binom{(n-d)+d-2}{n-d} 
= \binom{n-2}{d-2}
~.
$$
The value of $d$ ranges from $2$ to $n-1$.  Hence, the total number of
realizable sequences is
$$
\ocon(n)
= \sum_{d=2}^{n-1} \binom{n-2}{d-2}
= \sum_{i=0}^{n-3} \binom{n-2}{i}
= 2^{n-2}-1
~.
$$
\end{proof}

Observe that $\ocon \approx 2 \cdot \ccon(n)$.  This is due to the
more relaxed constraint on $n_\ell$.

\subsection{General Graphs in the Closed Neighborhood Model}

Let $\cgen(n)$ be the number of length $n$ sequences that are \maxdn
realizable with a general graph in the closed neighborhood model.
%
%
By Theorem~\ref{thm:char_closed_gen} the sequence
$\sigma=(d_{\ell}^{n_{\ell}},\ldots,d_{1}^{n_{1}}) \in S_n$ can be
realized with a general graph (without isolated vertices) in the
closed neighborhood model if and only if the following holds for $n
\geq 2$: $d_\ell \le n_\ell-1$, and either $d_1 \ge 2$ or $n_1$ is
even.

\begin{lemma}
\label{lem:CG'n}
For $n \ge 2$, $\cgen(n) = (2^{n-1}+(-1)^n)/3$.
\end{lemma}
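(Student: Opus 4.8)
The plan is to invoke the characterization in Theorem~\ref{thm:char_closed_gen}: a sequence $\sigma=(d_\ell^{n_\ell},\ldots,d_1^{n_1})\in S_n$ is $\maxdn$ realizable by a general graph if and only if $d_\ell\le n_\ell-1$ and either $d_1\ge 2$ or $n_1$ is even. I would partition the realizable sequences into two disjoint classes and count each: Class~A consists of those with $d_1\ge 2$ (no entry equals $1$), and Class~B consists of those with $d_1=1$ and $n_1$ even.

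For Class~A the surviving constraint is exactly ``$d_\ell\le n_\ell-1$ with all entries at least $2$'', which is verbatim the characterization of connected realizations in the closed model for $n\ge 3$ (Theorem~\ref{thm:char_closed_con}). Hence by Lemma~\ref{lem:CCn} the size of Class~A is $\ccon(n)=2^{n-3}$ for $n\ge 3$ (and $0$ for $n=2$). For Class~B I would peel off the trailing block of $n_1$ ones: writing $m=n-n_1$ for the number of entries exceeding $1$, the global maximum $d_\ell\ge 2$ lies in this top part, so the top part is a non-increasing sequence on $\{2,3,\ldots\}$ of length $m$ still required to satisfy $d_\ell\le n_\ell-1$. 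Let $g(m)$ count such top parts; then $g(0)=1$ (the all-ones sequence), $g(1)=g(2)=0$ (a length-$1$ or length-$2$ part with a value $\ge 2$ cannot satisfy $d_\ell\le n_\ell-1$), and $g(m)=2^{m-3}$ for $m\ge 3$ by the same computation as in Lemma~\ref{lem:CCn} via Observation~\ref{obs:fijk}. Since $n_1\ge 2$ is even, $m$ ranges over $0\le m\le n-2$ with $m\equiv n\pmod 2$, so the size of Class~B is $\sum_{0\le m\le n-2,\ m\equiv n\,(2)} g(m)$.

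The last step is to evaluate this parity-restricted sum and add $2^{n-3}$. Splitting on the parity of $n$, the nonzero terms form a geometric series of ratio $4$ with top exponent $n-5$: for even $n$ the sum is $1+(2^{1}+2^{3}+\cdots+2^{n-5})=\tfrac{2^{n-3}+1}{3}$, and for odd $n$ it is $2^{0}+2^{2}+\cdots+2^{n-5}=\tfrac{2^{n-3}-1}{3}$. Adding $2^{n-3}$ gives $\tfrac{4\cdot 2^{n-3}+1}{3}=\tfrac{2^{n-1}+1}{3}$ and $\tfrac{4\cdot 2^{n-3}-1}{3}=\tfrac{2^{n-1}-1}{3}$ respectively, which is precisely $\tfrac{2^{n-1}+(-1)^n}{3}$. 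Finally I would dispose of the base case $n=2$ separately (only $(1^2)$ is realizable, matching $\tfrac{2+(-1)^2}{3}=1$), since $\ccon(n)=2^{n-3}$ is asserted only for $n\ge 3$.

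I expect the main obstacle to be the bookkeeping in this final step rather than any conceptual difficulty: the parity of $n_1$ forces a congruence constraint on the prefix length $m$, so the count bifurcates into two geometric series whose empty-range and boundary behaviour (e.g.\ $n=3,4$ where the series is empty, and the separate $n=2$ case) must each be checked to confirm that both parities collapse to the single closed form $\tfrac{2^{n-1}+(-1)^n}{3}$. A useful sanity check available along the way is that the total number of sequences with $d_\ell\le n_\ell-1$ (dropping the parity condition) equals $\sum_{d=1}^{n-1}\binom{n-2}{d-1}=2^{n-2}$, so one may alternatively obtain $\cgen(n)$ by subtracting the count of sequences with $d_1=1$ and $n_1$ odd from $2^{n-2}$, which offers an independent verification of the geometric-sum arithmetic.
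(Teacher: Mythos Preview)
Your proposal is correct and complete. The approach, however, differs from the paper's. The paper establishes the two-term recursion
\[
\cgen(n)=\cgen(n-2)+\ccon(n)=\cgen(n-2)+2^{n-3}
\]
by observing that a realizable sequence either has $d_1\ge 2$ (Class~A, counted by $\ccon(n)$) or has $d_1=1$ with $n_1$ even, in which case deleting \emph{two} trailing $1$'s produces a realizable sequence of length $n-2$ (and conversely). The closed form then follows by a short induction from $\cgen(1)=0$, $\cgen(2)=1$.

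You instead unroll this recursion in one shot: rather than stripping a single pair of $1$'s, you strip the entire block of $n_1$ ones and sum directly over the (parity-constrained) prefix length $m$. Your count $g(m)$ of admissible prefixes is exactly $\ccon(m)$ (with the boundary values $g(0)=1$, $g(1)=g(2)=0$ handled separately), so Class~B becomes a geometric series that you evaluate in closed form. What your route buys is an explicit identity rather than an inductive verification; what the paper's route buys is brevity, since the recursion avoids the parity split and the attendant boundary checks at $n\in\{2,3,4\}$. The two arguments are of course equivalent: iterating the paper's recursion $\lfloor n/2\rfloor$ times yields precisely your sum $\sum_{m\equiv n\ (2)} g(m)$.
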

\begin{proof}
There are no realizable sequences of length $1$ and therefore
$\cgen(1)=0$.  The only realizable sequence of length $2$ is $(1,1)$
and therefore $\cgen(2)=1$.

Assume $n\ge 3$.  The first part of the characterization covers all
the realizations with connected graphs while the second part of the
characterization covers all the realizations with $n-2$ vertices plus
an isolated edge. As a result, we get the following recursive formula,
$$
\cgen(n)
= \cgen(n-2) + \ccon(n)
= \cgen(n-2) + 2^{n-3}
~.
$$
We prove by induction that the lemma follows from this recursion.  The
claim holds for the two base cases $n=1$ and $n=2$ since
$(2^0+(-1)^1)/3 = 0$ and $(2^1+(-1)^2)/3 = 1$.
Assume that the claim is correct for $n-2$, that is that $\cgen(n-2) =
(2^{n-3}+(-1)^{n-2})/3$.  It follows that $\cgen(n) =
(2^{n-3}+(-1)^{n-2})/3 + 2^{n-3} = (2^{n-1} + (-1)^n)/3$.
\end{proof}

\subsection{General Graphs in the Open Neighborhood Model}

Let $\ogen(n)$ be the number of length $n$ sequences that are \maxdn
realizable with a general graph in the open neighborhood model.

We do not know how to compute the exact value of $\ogen(n)$ based on
our complete characterization.  The main reason is that we do not know
how to avoid counting more than once a sequence that has several
realizations with one star graph where in each realization the size of
the star is different.  For example, consider the sequence
$(3^6,2^2,1^1)$. It can be realized with a $3$-regular graph of size
$6$ whose \maxdn sequence is $(3^6)$ and a star of size $3$ whose \maxdn
sequence is $(2^2,1^1)$.  It can also be realized by a cycle of size
$4$ that is connected to a vertex of degree $1$ whose \maxdn sequence is
$(3^3,2^2)$ and a star of size $4$ whose \maxdn sequence is $(3^3,1^1)$.
The problem is that the strategy of extracting the star and counting
the number of realizations for the remaining sequence would count more
than once sequences from which we can extract stars of different
sizes.

Instead we provide characterizations for under and over counting.  On
one hand, we count most of the sequences that can be realized and on
the other hand, we count all the realizable sequences, but also some
sequences that cannot be realized.  Specifically, we show two
functions $\ogenl(n)$ and $\ogenu(n)$ such that $\ogenl(n) \leq
\ogen(n) \leq \ogenu(n)$, for $n\ge 2$.

Let $\ogenl(n)$ be the number of sequences $\sigma =
(d_{\ell}^{n_{\ell}},\ldots,d_{1}^{n_{1}}) \in S_n$ that can be realized with
a general graph in the open neighborhood model if one of the following
holds for $n \geq 2$:
\begin{inparaenum}[(i)]
\item $d_{\ell} \le n_{\ell}$ and $d_1\ge 2$.
\item $d_{\ell} \le n_{\ell}$, $d_1=1$, and $n_1$ is even.
\end{inparaenum}

\begin{lemma}
\label{lemma:ogenl}
$\ogenl(n) \leq \ogen(n)$, for every $n \geq 1$.
\end{lemma}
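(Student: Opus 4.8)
The plan is to prove the inequality by exhibiting, for every profile $\sigma\in S_n$ that $\ogenl$ counts (that is, every $\sigma$ satisfying condition (i) or (ii)), an explicit $\maxdon$ realization by a general graph. Since $\ogen(n)$ is, by definition, the number of all $\maxdon$-realizable profiles of length $n$, producing such a realization for each profile counted by $\ogenl(n)$ exhibits the profiles counted by $\ogenl$ as a sub-collection of the realizable ones, which immediately yields $\ogenl(n)\le\ogen(n)$.

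First I would handle condition (i), where $d_1\ge 2$ so the profile contains no vertex of open-neighborhood degree $1$, and I would realize $\sigma$ by a single \emph{connected} graph. Because $\sigma\in S_n$ forces $d_\ell\le n-1$, the hypothesis $d_\ell\le n_\ell$ of (i) gives $d_\ell\le\min\{n_\ell,n-1\}$, so Theorem~\ref{theorem:mdon_connected} (equivalently, the algorithm of Theorem~\ref{theorem:mdon_algo}) applies and returns a connected $\maxdon$ realization, the only caveat being the profile $(d^{d+1},2^1)$, which is non-realizable by Lemma~\ref{lemma:d^d+1_2^1} and hence must be excluded from the count.

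Next I would handle condition (ii), where $d_1=1$ and $n_1$ is even. Here I would split $\sigma$ into $\sigma_1=(d_\ell^{n_\ell},\ldots,d_2^{n_2})$, the part of $\maxdon$-value at least $2$, and $\sigma_2=(1^{n_1})=(1^{2\alpha})$ with $\alpha=n_1/2$. The profile $\sigma_2$ is realized by $\alpha$ disjoint edges, each contributing two vertices of $\maxdon=1$, and $\sigma_1$ would be realized connectedly via Theorem~\ref{theorem:mdon_connected}. Taking the disjoint union and invoking the splitting characterization of Theorem~\ref{theorem:mdon_general} (with this $\sigma_1$ and $\sigma_2$) then certifies that the union realizes $\sigma$.

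The main obstacle is that conditions (i) and (ii) are not literally sufficient for realizability: a few profiles satisfy them yet cannot be realized, and these must be pinned down and excluded. Under (i) the sole offender is $(d^{d+1},2^1)$. Under (ii) the degree-$\ge 2$ part $\sigma_1$ fails to be connected-realizable exactly when $d_\ell=|\sigma_1|$; since $d_\ell\le n_\ell\le|\sigma_1|$, equality forces all entries of $\sigma_1$ equal, i.e. $\sigma_1=(d^d)$ (so $\sigma=(d^d,1^{n_1})$), which is genuinely non-realizable because a degree-$d$ vertex cannot acquire $d$ neighbors inside the $d$-vertex block nor from the degree-$1$ vertices; the remaining excluded case is $\sigma_1=(d^{d+1},2^1)$. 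The crux of the proof is therefore to verify that in every \emph{other} case the piece $\sigma_1$ actually meets the hypotheses $d_\ell\le\min\{n_\ell,|\sigma_1|-1\}$ and $\sigma_1\ne(d^{d+1},2^1)$ of Theorem~\ref{theorem:mdon_connected}, so that the construction above goes through and each counted profile is realized, giving the claimed bound.
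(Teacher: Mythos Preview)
Your construction plan is the right idea, but the handling of the exceptional profiles leaves a genuine gap. You interpret $\ogenl(n)$ as the number of profiles satisfying (i) or (ii) and then, quite correctly, discover that several such profiles are \emph{not} $\maxdon$-realizable at all: under (i) the profile $((n-2)^{n-1},2^1)$ for $n\ge 5$, and under (ii) those whose degree-$\ge 2$ part is $(d^d)$ or $(d^{d+1},2^1)$. But ``excluding'' these from the analysis only establishes
\[
\bigl|\{\text{(i)}\lor\text{(ii)}\}\cap\text{realizable}\bigr|\;\le\;\ogen(n),
\]
which is strictly weaker than the stated $\ogenl(n)\le\ogen(n)$; one cannot prove an inequality by discarding terms from its left-hand side.

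The paper's proof sidesteps this by reading $\ogenl(n)$ not as the literal count of profiles obeying (i) or (ii), but as the number of profiles realizable by \emph{one connected component together with a (possibly empty) collection of isolated edges}. Under that reading every counted profile is realizable by construction, and the inequality is immediate --- no case analysis is needed. Your careful enumeration of the offenders is actually a useful observation, since it shows that conditions (i) and (ii) do not exactly coincide with this structural sub-collection; but it does not finish the argument under the literal (i)/(ii) reading. To turn your approach into a complete proof you would need either to adopt the paper's structural interpretation of $\ogenl$, or else to match each offender with a distinct realizable profile lying outside (i)$\lor$(ii), thereby building an injection from the (i)$\lor$(ii) profiles into the realizable ones.
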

\begin{proof}
$\ogenl(n)$ counts all the realizations with one connected component
  and a collection of isolated edges. The connected component could be
  a star. However, sequences that can be realized with a connected
  component, a star and a collection of isolated edges are not
  counted.
\end{proof}

\begin{lemma}
\label{lem:OGL}
$\ogenl(2) = 1$ and $\ogenl(n) = \ceil{(2^{n}-2)/3} - \ceil{(n-4)/2}$,
for $n\ge 3$.
\end{lemma}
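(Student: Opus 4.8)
The plan is to derive a two-step recursion for $\ogenl(n)$, in the spirit of the recursion $\cgen(n)=\cgen(n-2)+\ccon(n)$ used in Lemma~\ref{lem:CG'n}, and then solve it by induction against the claimed closed form. Throughout I read $\ogenl(n)$ as in the proof of its lower-bound property: it counts the length-$n$ sequences realizable by a graph that is a single connected component together with a (possibly empty) collection of isolated edges.

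The key structural input is the connected characterization of Theorem~\ref{theorem:mdon_connected}, which tells us that every connected $\maxdon$-realizable sequence has at most one entry equal to $1$ (it is either a star $(d^d,1^1)$ with exactly one such entry, or has $d_1\ge 2$ with none), the sole exception being $(1^2)$, realized by $K_2$. Hence, for $n\ge 3$, any sequence counted by $\ogenl(n)$ that contains two or more $1$'s cannot arise from a single connected component and must use at least one isolated edge. This dichotomy gives the recursion: splitting the counted sequences according to whether they use an isolated edge, those that do not are precisely the connected $\maxdon$-realizable sequences and contribute $\ocon(n)$; those that do contain at least two $1$'s, and deleting one isolated edge removes exactly two $1$'s and leaves a length-$(n-2)$ sequence again realizable as a connected component plus isolated edges, while appending $(1^2)$ inverts this. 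By the ``at most one $1$'' fact the two groups are disjoint, so edge deletion/insertion is a bijection between the second group and the sequences counted by $\ogenl(n-2)$. This yields $\ogenl(n)=\ocon(n)+\ogenl(n-2)$ for $n\ge 3$, with the convention $\ogenl(1)=0$ and the base value $\ogenl(2)=1$ (only $(1^2)$).

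Finally I would solve the recursion by induction on $n$, invoking $\ocon(n)=2^{n-2}-1$ for $n\ge 5$ from Lemma~\ref{lem:OCn}. After checking the base cases $\ogenl(2)=1$, $\ogenl(3)=\ocon(3)+\ogenl(1)=2$, and $\ogenl(4)=\ocon(4)+\ogenl(2)=5$ against the formula, the step for $n\ge 5$ reduces to two elementary ceiling identities, $\ceil{(n-4)/2}=\ceil{(n-6)/2}+1$ and $2^{n-2}+\ceil{(2^{n-2}-2)/3}=\ceil{(2^n-2)/3}$ (the latter because $2^{n-2}$ is an integer and $4\cdot 2^{n-2}-2=2^n-2$). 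Substituting $\ocon(n)=2^{n-2}-1$ and the inductive hypothesis for $\ogenl(n-2)$ collapses the expression to $\ceil{(2^n-2)/3}-\ceil{(n-4)/2}$, as claimed.

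The main obstacle I anticipate is making the recursion airtight rather than the algebra: one must verify that the two cases—no isolated edge versus at least one isolated edge—are exactly disjoint and jointly exhaustive. This hinges entirely on the ``at most one $1$'' property of connected realizations and on treating $(1^2)$ as a base case, so that the edge $K_2$ is never counted simultaneously as a standalone connected component and as the isolated edge peeled off in the reduction; this is precisely why the recursion must be started at $n\ge 3$.
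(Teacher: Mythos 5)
Your proof is correct and follows essentially the same route as the paper: the two-term recursion $\ogenl(n)=\ogenl(n-2)+\ocon(n)$, the base cases $n=2,3,4$, and an induction using $\ocon(n)=2^{n-2}-1$ for $n\ge 5$. The only difference is that you spell out why the recursion is exact (via the ``at most one entry equal to $1$'' property of connected $\maxdon$ realizations from Theorem~\ref{theorem:mdon_connected}), a point the paper leaves implicit by citing Lemma~\ref{lemma:ogenl}, and you correctly restrict the substitution $\ocon(n)=2^{n-2}-1$ to $n\ge 5$, where the paper's displayed recursion is stated slightly too broadly.
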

\begin{proof}
One can verify the following:
\begin{compactenum}
\item The sequence $(1,1)$ is the only realizable sequence and
  therefore $\ogenl(2) = 1$.
\item The sequences $(2,2,2)$ and $(2,2,1)$ are realizable and
  therefore we can set $\ogenl(3) = 2$.
\item The sequences $(3,3,3,3)$, $(3,3,3,2)$, $(3,3,3,1)$,
  $(2,2,2,2)$, and $(1,1,1,1)$, are realizable and therefore we can
  set $\ogenl(4) = 5$.
\end{compactenum}
Lemma~\ref{lemma:ogenl} implies the following recessive formula for
$n\ge 4$,
$$
\ogen(n)
= \ogenl(n-2) + \ocon(n)
= \ogenl(n-2) + (2^{n-2}-1)
~.
$$
We prove by induction that the lemma follows from this recursion.
The claim holds for the two base cases $n = 3$ and $n = 4$ since
$\ceil{(2^{3}-2)/3} - \ceil{(3-4)/2} = 2$ and $\ceil{(2^{4}-2)/3} -
\ceil{(4-4)/2} = 5$.
The induction hypothesis for $n-2$ implies that $\ogenl(n) =
\ceil{(2^{n-2}-2)/3} - \ceil{(n-6)/2} + (2^{n-2}-1)$.  For an even
$n$, we have
$$
\ogenl(n)
= \frac{2^{n-2}-1}{3} - \frac{n-6}{2} + (2^{n-2}-1) 
= \frac{2^{n}-1}{3} - \frac{n-4}{2} 
= \ceil{\frac{2^{n}-2}{3}} - \ceil{\frac{n-4}{2}}
~,
$$
and for an odd $n$, 
$$
\ogenl(n)
= \frac{2^{n-2}-2}{3} - \frac{n-5}{2} + (2^{n-2}-1) 
= \frac{2^{n}-2}{3} - \frac{n-3}{2} 
= \ceil{\frac{2^{n}-2}{3}} - \ceil{\frac{n-4}{2}}
~.
$$
\end{proof}

Let $\ogenu(n)$ be the number of non-increasing sequences $\sigma =
(d_{\ell}^{n_{\ell}},\ldots,d_{1}^{n_{1}}) \in S_n$ that satisfy
$d_\ell \le n_\ell$ for $n\ge 2$.

\begin{lemma}
$\ogenu(n) \geq \ogen(n)$, for every $n \geq 1$.
\end{lemma}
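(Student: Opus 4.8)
The plan is to show that the set of sequences counted by $\ogen(n)$ is contained in the set counted by $\ogenu(n)$, so that the desired inequality follows immediately from the monotonicity of counting over sets. Recall that $\ogenu(n)$ counts \emph{all} non-increasing sequences $\sigma=(d_\ell^{n_\ell},\ldots,d_1^{n_1})\in S_n$ obeying the single constraint $d_\ell\le n_\ell$, whereas $\ogen(n)$ counts only those sequences that are genuinely $\maxdon$ realizable by some general graph. It therefore suffices to argue that every $\maxdon$-realizable profile both lies in $S_n$ and satisfies $d_\ell\le n_\ell$, i.e., that $d_\ell \le n_\ell$ is a \emph{necessary} condition for realizability.

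First I would observe that every $\maxdon$-realizable profile of length $n$ belongs to $S_n$. Since we restrict attention to profiles with $d_1\ge 1$ (isolated vertices being handled separately), any realizing graph on $n$ vertices assigns to each vertex a maximum open-neighborhood degree in the range $[1,n-1]$, so the associated profile is a non-increasing sequence on the integers $1,\ldots,n-1$, that is, an element of $S_n$. Next, Lemma~\ref{lemma:suff_maxdDegOpen} already supplies the required necessary condition: any $\maxdon$-realizable profile satisfies $d_\ell\le\min\{n_\ell,n-1\}$, and in particular $d_\ell\le n_\ell$. Combining these two facts, the collection of profiles counted by $\ogen(n)$ is contained in the collection of profiles counted by $\ogenu(n)$, whence $\ogen(n)\le\ogenu(n)$. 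For the boundary value $n=1$ the claim is vacuous: there is no realizable profile of length one with positive degree, so $\ogen(1)=0$, while $S_1$ is empty and hence $\ogenu(1)=0$ as well.

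There is essentially no hard step here; the only point worth stressing is \emph{why} this inclusion, rather than an equality, is all we can hope for. The function $\ogenu$ deliberately \emph{over}-counts: it admits every sequence meeting the necessary condition $d_\ell\le n_\ell$, including ones that fail the finer realizability requirements captured by the full characterization of Theorem~\ref{theorem:mdon_general} (for instance the excluded profile $(d^{d+1},2^1)$ or the parity conditions on the number of $1$'s). This gap between ``necessary'' and ``sufficient'' is precisely what makes $\ogenu$ an upper bound rather than an exact count, and it forces the inclusion to run in the direction that yields $\ogen(n)\le\ogenu(n)$.
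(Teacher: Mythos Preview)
Your proof is correct and follows essentially the same approach as the paper: both arguments reduce to the observation that $d_\ell\le n_\ell$ is a necessary condition for $\maxdon$ realizability (the paper cites Theorem~\ref{theorem:mdon_general}, you cite the more precise Lemma~\ref{lemma:suff_maxdDegOpen}), which forces the set counted by $\ogen(n)$ to sit inside that counted by $\ogenu(n)$. Your version is more careful in verifying membership in $S_n$ and handling the $n=1$ boundary, but the core idea is identical.
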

\begin{proof}
By Theorem~\ref{theorem:mdon_general} In any realizable sequence,
$d_{\ell}$ cannot be larger than $\min\{n_{\ell},n-1\}$.
\end{proof}

\begin{lemma}
\label{lem:OGU}
$\ogenu(2) = 1$ and $\ogenu(n) = 2^{n-1}-1$, for $n\ge 2$.
\end{lemma}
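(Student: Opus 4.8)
The plan is to count these sequences directly, organizing them by the value of the maximum entry. Write $d = d_\ell$ for the largest value appearing in $\sigma$; since $\sigma \in S_n$ has all entries in $\{1,\ldots,n-1\}$, we have $1 \le d \le n-1$. The defining constraint $d_\ell \le n_\ell$ says precisely that this largest value occurs at least $d$ times. First I would fix $d$ together with the multiplicity $m = n_\ell \ge d$ of the top value, and observe that the remaining $n-m$ entries form a non-increasing sequence over $\{1,\ldots,d-1\}$, since they must be strictly smaller than $d$. By Observation~\ref{obs:fijk}, the number of such tails is $f(1,d-1,n-m) = \binom{n-m+d-2}{n-m}$. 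Because every sequence has a unique maximum and each admissible tail uses only values below $d$, this organization counts each sequence exactly once.

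Summing over the admissible multiplicities $m \in \{d,\ldots,n\}$ and substituting $j = n-m$, the number of valid sequences whose maximum is exactly $d$ becomes $\sum_{j=0}^{n-d} \binom{j+d-2}{j}$. The key computational step is to collapse this sum via the hockey-stick (parallel-summation) identity $\sum_{j=0}^{J}\binom{j+r}{j} = \binom{J+r+1}{J}$, with $r = d-2$ and $J = n-d$, yielding $\binom{n-1}{d-1}$. I would check the degenerate case $d=1$ separately: there the range $\{1,\ldots,d-1\}$ is empty, so the only surviving term is $j=0$ (the all-ones sequence $(1^n)$), and the closed form still returns $\binom{n-1}{0}=1$, so the formula $\binom{n-1}{d-1}$ is uniform over $1 \le d \le n-1$.

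Finally I would sum the per-$d$ counts over $d = 1,\ldots,n-1$, obtaining $\ogenu(n) = \sum_{d=1}^{n-1}\binom{n-1}{d-1} = \sum_{i=0}^{n-2}\binom{n-1}{i} = 2^{n-1}-1$, where the last equality simply drops the single top term $\binom{n-1}{n-1}$ from the full binomial sum $\sum_{i=0}^{n-1}\binom{n-1}{i} = 2^{n-1}$. The stated base value $\ogenu(2)=1$ is then just the instance $n=2$ of this formula. The main obstacle is purely bookkeeping: ensuring the index ranges are exactly right, in particular that $d$ never exceeds $n-1$ because of the value bound in $S_n$ while the multiplicity $m$ may be as large as $n$, and handling the $d=1$ and empty-tail edge cases so that the hockey-stick collapse remains valid throughout.
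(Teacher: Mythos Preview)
Your proof is correct, and it reaches the same per-$d$ count $\binom{n-1}{d-1}$ and the same final sum as the paper, but it gets there by a slightly longer route. The paper organizes only by $d = d_\ell$ and observes that once the first $d$ entries are fixed to $d$, the remaining $n-d$ entries may be any non-increasing sequence over $\{1,\ldots,d\}$ (allowing further copies of $d$); a single application of Observation~\ref{obs:fijk} with $i=1$, $j=d$, $k=n-d$ then gives $f(1,d,n-d)=\binom{n-1}{d-1}$ directly. You instead fix both $d$ and the exact multiplicity $m=n_\ell$, force the tail into $\{1,\ldots,d-1\}$, and then collapse the extra sum over $m$ via the hockey-stick identity. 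Both decompositions are sound; the paper's avoids the inner sum and the $d=1$ edge case entirely by letting the suffix absorb any surplus copies of $d$, while your version has the minor advantage of making the role of $n_\ell$ explicit at the cost of one additional identity.
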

\begin{proof}
For $n = 2$, $(1,1)$ is the only sequence and therefore $\ogenu(2) =
1$.
Assume $n\ge 2$.
Let $d = d_{\ell}$.  The first $d$ values in any realizable sequence
must be equal to $d$.  The suffix of length $n-d$ is a non-increasing
sequence on the numbers $1,\ldots,d$.  By Observation~\ref{obs:fijk}
with $i=1$, $j=d$, and $k=n-d$ the number of such sequences is
$$
f(1,d,n-d)
= \binom{(n-d)+d-1}{n-d} 
= \binom{n-1}{d-1}
~.
$$
The value of $d$ ranges from $1$ to $n-1$.  Hence, the total number of
realizable sequences is
$$
\ogenu(n)
= \sum_{d=1}^{n-1} \binom{n-1}{d-1} 
= \sum_{i=0}^{n-2} \binom{n-1}{i} 
= 2^{n-1}-1
~.
$$
\end{proof}

Observe that the ratio between the upper bound and the lower bound is about $3/2$.  

\end{document}